\pdfminorversion=7
\documentclass[11pt]{article}
\usepackage{etoolbox}
\usepackage{covington}
\usepackage{graphicx}
\usepackage{latexsym}
\usepackage{amssymb}
\usepackage{amsmath}
\usepackage{amsthm}
\usepackage{thmtools}
\usepackage{caption}
\usepackage{subcaption}
\usepackage{mwe}
\usepackage{derivative}
\usepackage{forloop}
\usepackage[paper=letterpaper,margin=1in]{geometry}
\usepackage[ruled,vlined,linesnumbered]{algorithm2e}
\usepackage{paralist}
\usepackage{xcolor}
\usepackage{nicefrac}
\usepackage{pgfplots}
\usepackage{siunitx}
\usepackage{multirow}
\usepackage{mleftright}\mleftright
\usepackage{numprint}
\usepackage{tikz}
\usepackage{pgfplotstable}
\usepackage[rightcaption]{sidecap}
\usepackage{hyperref}

\usepgfplotslibrary{fillbetween}
\pgfplotsset{compat=1.17}
\SetDiaOffset{-0.35ex}

\definecolor{darkgreen}{rgb}{0,0.5,0}
\hypersetup{
    unicode=false,            
    colorlinks=true,          
    linkcolor=blue!70!black,  
    citecolor=darkgreen,      
    filecolor=magenta,        
    urlcolor=blue!70!black    
}

\newtheorem{theorem}{Theorem}[section]
\newtheorem{lemma}[theorem]{Lemma}

\newtheorem{corollary}[theorem]{Corollary}

\newtheorem{observation}[theorem]{Observation}

\newcommand{\defcal}[1]{\expandafter\newcommand\csname c#1\endcsname{{\mathcal{#1}}}}
\newcommand{\defbb}[1]{\expandafter\newcommand\csname b#1\endcsname{{\mathbb{#1}}}}
\newcommand{\defvec}[1]{\expandafter\newcommand\csname v#1\endcsname{{\mathbf{#1}}}}
\newcommand{\defmat}[1]{\expandafter\newcommand\csname m#1\endcsname{{\mathbf{#1}}}}
\newcounter{calBbCounter}
\forLoop{1}{26}{calBbCounter}{
    \edef\Letter{\Alph{calBbCounter}}
		\edef\letter{\alph{calBbCounter}}
    \expandafter\defcal\Letter
		\expandafter\defbb\Letter
		\expandafter\defvec\letter
		\expandafter\defmat\Letter
}


\newcommand{\eps}{\varepsilon}
\newcommand{\nnR}{{\bR_{\geq 0}}}
\newcommand{\nnRE}[1]{{\bR_{\geq 0}^{#1}}}
\newcommand{\characteristic}{{\mathbf{1}}}
\newcommand{\RSet}{{\mathtt{R}}}
\newcommand{\norm}[1]{\left\lVert#1\right\rVert}
\newcommand{\angel}[1]{\left\langle#1\right\rangle}
\DeclareMathOperator*{\argmax}{\arg\max}
\DeclareMathOperator*{\argmin}{\arg\min}

\newcommand{\vzero}{\bar{0}}

\newcommand{\quadprogIP}{{\textsc{quadprogIP}}}

\newcommand{\NMFW}{{\textup{\texttt{Non-monotone Frank-\allowbreak Wolfe}}}}
\newcommand{\NMMFW}{{\textup{\texttt{Non-monotone Meta-Frank-\allowbreak Wolfe}}}}

\newcommand{\headerref}[1]{{\texorpdfstring{\ref{#1}}{\ref*{#1}}}}
\newcommand{\vsigma}{\boldsymbol {\sigma}}

\author{
	Loay Mualem\thanks{Computer Science Department, University of Haifa. Email: \href{mailto:loaymua@gmail.com}{loaymua@gmail.com}}
\and
	Moran Feldman\thanks{Computer Science Department, University of Haifa. Email: \href{mailto:moranfe@cs.haifa.ac.il}{moranfe@cs.haifa.ac.il}}
}
\title{Resolving the Approximability of Offline and Online Non-monotone DR-Submodular Maximization over General Convex Sets}

\begin{document}

\maketitle

\begin{abstract}
In recent years, maximization of DR-submodular continuous functions became an important research field, with many real-worlds applications in the domains of machine learning, communication systems, operation research and economics. Most of the works in this field study maximization subject to down-closed convex set constraints due to an inapproximability result by Vondr{\'{a}}k~\cite{vondrak2013symmetry}. However, Durr et al.~\cite{durr2021non} showed that one can bypass this inapproximability by proving approximation ratios that are functions of $m$, the minimum $\ell_\infty$-norm of any feasible vector. Given this observation, it is possible to get results for maximizing a DR-submodular function subject to \emph{general} convex set constraints, which has led to multiple works on this problem. The most recent of which is a polynomial time $\tfrac{1}{4}(1 - m)$-approximation offline algorithm due to Du~\cite{du2022lyapunov}. However, only a sub-exponential time $\tfrac{1}{3\sqrt{3}}(1 - m)$-approximation algorithm is known for the corresponding online problem. In this work, we present a polynomial time online algorithm matching the $\tfrac{1}{4}(1 - m)$-approximation of the state-of-the-art offline algorithm. We also present an inapproximability result showing that our online algorithm and Du's~\cite{du2022lyapunov} offline algorithm are both optimal in a strong sense. Finally, we study the empirical performance of our algorithm and the algorithm of Du~\cite{du2022lyapunov} (which was only theoretically studied previously), and show that they consistently outperform previously suggested algorithms on revenue maximization, location summarization and quadratic programming applications.


\medskip

\end{abstract}

\pagenumbering{arabic}

\section{Introduction}
Optimization of continuous DR-submodular functions has gained prominence in recent times. Such optimization is an important traceable subclass of non-convex optimization, and captures problems at the forefront of machine learning and statistics with many real-world applications (see, e.g.,~\cite{bian2019optimal,hassani2017gradient,mitra2021submodular,soma2017nonmonotone}). The majority of the existing works on DR-submodular optimization (and submodular optimization in general) have been focused either on monotone objective functions, or optimization subject to a down-closed convex set constraint.\footnote{A set $\cK \subseteq [0, 1]^n$ is down-closed if, for every two vectors $\vx, \vy \in [0,1]^n$, $\vx \in \cK$ whenever $\vy \in \cK$ and $\vy$ coordinate-wise dominates $\vx$.} However, many real-world problems are naturally captured as optimization of a non-monotone DR-submodular function over a constraint convex set that is not down-closed. For example, consider a streaming service that would like to produce a summary of recommended movies for a user. Often the design of the user interface places strong bounds on the size of the summary displayed to the user, leading to a non-down-closed constraint. Furthermore, the quality of the summary is often captured by a non-monotone objective since putting very similar films in the summary is detrimental to both its value and professional look.

Motivated by the above-mentioned situation, a few recent works started to consider DR-submodular maximization subject to a general (not necessarily down-closed) convex set constraint $\cK$. In general, no constant approximation ratio can be guaranteed for this problem in sub-exponential time due to an hardness result by Vondr{\'{a}}k~\cite{vondrak2013symmetry}. However, Durr et al.~\cite{durr2021non} showed that this inapproximability result can be bypassed when the convex set constraint $\cK$ includes points whose $\ell_\infty$-norm is less than the maximal value of $1$. Specifically, Durr et al.~\cite{durr2021non} showed a sub-exponential time offline algorithm guaranteeing $\tfrac{1}{3\sqrt{3}}(1 - m)$-approximation for this problem, where $m$ is the minimal $\ell_\infty$-norm of any vector in $\cK$. Later, Th\twodias{\'}{\u}{a}ng \& Srivastav~\cite{thang2021online} showed how to obtain a similar result in an online (regret minimization) setting, and an improved sub-exponential offline algorithm obtaining $\tfrac{1}{4}(1-m)$-approximation was suggested by Du et al.~\cite{du2022improved}. Very recently, Du~\cite{du2022lyapunov} provided the first polynomial time algorithm for this setting, obtaining the same offline $\tfrac{1}{4}(1-m)$-approximation as Du et al.~\cite{du2022improved}. Nevertheless, and despite all the progress described above, there are still important open questions left regarding this setting.
\begin{itemize}
    \item What is the best approximation ratio that can be obtained by a polynomial time offline algorithm? In particular, can such an algorithm guarantee a better than $\tfrac{1}{4}(1 - m)$-approximation, and if not, how much slower must be an algorithm that improves over this approximation ratio.
		\item Is there a polynomial time \emph{online} algorithm guaranteeing any constant approximation ratio? Can such an algorithm match the optimal approximation ratio obtainable by an offline algorithm?
\end{itemize}

In this work we answer all the above questions, which essentially settles the problem of maximizing DR-submodular functions over general convex sets in both the offline and online settings. We also study the empirical performance of the theoretically optimal offline and online algorithms, showing that both algorithms consistently outperform previously suggested algorithms. Below we describe our results in more detail.

\paragraph{Online setting.}

As mentioned above, the state-of-the-art online (regret minimization) algorithm of Th\twodias{\'}{\u}{a}ng \& Srivastav~\cite{thang2021online} achieves $\tfrac{1}{3\sqrt{3}}(1 - m)$-approximation, which it does with sub-exponential running time and roughly $O(\sqrt{T})$-regret, where $T$ is the number of time steps.\footnote{By changing parameter values, it is possible to reduce the time complexity of the algorithm of Th\twodias{\'}{\u}{a}ng \& Srivastav~\cite{thang2021online} to be polynomial. However, this comes at the cost of a regret that is nearly-linear in $T$ and an error term in the approximation ratio that diminishes very slowly (linearly in $\log T$).} In this paper, we describe a new online algorithm improving both the approximation ratio and the time complexity. Specifically, our algorithm achieves $\tfrac{1}{4}(1 - m)$-approximation in polynomial time and roughly $O(\sqrt{T})$-regret. The approximation guarantee of our algorithm matches an inapproximability that we prove for the offline setting (see below), and is thus, optimal. We also study the empirical performance of our algorithm, and show that it outperforms the algorithm of~\cite{thang2021online} on two applications of revenue maximization and location summarization.

\paragraph{Offline setting.}

Recall that the state-of-the-art offline algorithm is a recent polynomial time $\tfrac{1}{4}(1-m)$-approximation algorithm due to Du~\cite{du2022lyapunov}. Our first contribution to the offline setting is an inapproximability result showing that this algorithm is optimal in a very strong sense. Specifically, we show that no sub-exponential time algorithm can significantly improve over this approximation ratio, even when $m$ is fixed to any particular value in $[0, 1]$. Furthermore, since Du~\cite{du2022lyapunov} analyzed only the theoretical performance of his algorithm, it is interesting to study the empirical performance of this algorithm, which we do by considering revenue maximization and quadratic programming applications. 

Coding the algorithm of Du~\cite{du2022lyapunov} for the empirical study is somewhat non-trivial because Du~\cite{du2022lyapunov} presented his algorithm as part of a general mathematical framework for designing algorithms for various submodular optimization problems. Therefore, our empirical study is based on an explicit version of this algorithm that we give in this paper, which is not fully identical to the algorithm of~\cite{du2022lyapunov}. Beside being explicit, our version of the algorithm also has the advantage of being more tuned towards practical performance. For completeness, we include a full analysis of our version of the algorithm of Du~\cite{du2022lyapunov}. This full analysis is also used as a warm-up towards the analysis of our own online algorithm.

\subsection{Related work}
Next, we provide a brief summary of the most relevant results on
DR-submodular maximization. Recently, this field has become the work-horse of numerous applications in the fields of statistics and machine learning, which has lead to a dramatic increase in the number of studies related to it. 

\paragraph{Offline DR-submodular optimization.} 
Bian et al.~\cite{bian2017nonmonotone} considered the problem of maximizing monotone DR-functions subject to a down-closed convex set, and showed that a variant of the Frank-Wolfe algorithm (based on the greedy method proposed by~\cite{calinescu2011maximizing} for set functions) guarantees a $(1-\nicefrac{1}{e})$-approximation for this problem, which is optimal~\cite{nemhauser1978best}. Later, Hassani et al.~\cite{hassani2017gradient} showed that the algorithm of~\cite{bian2017nonmonotone} is not robust in stochastic settings (i.e., when only an unbiased estimator of gradients is available), and proved that gradient methods are robust in such setting while still achieving $\nicefrac{1}{2}$-approximation. When the objective DR-submodular function is not necessarily monotone, the problem becomes harder to approximate. Bian et al.~\cite{bian2019optimal} and Niazadeh et al.~\cite{niazadeh2020optimal} independently provided two algorithms with the same approximation guarantee of $\nicefrac{1}{2}$ for maximizing non-monotone DR-submodular functions over a hypercube, which is optimal~\cite{feige2011maximizing} (the algorithm of~\cite{niazadeh2020optimal} applies also to non-DR submodular functions). For general down-closed convex sets, Bian et al.~\cite{bian2018non} provided a $\nicefrac{1}{e}$-approximation algorithm based on the greedy method of~\cite{feldman2011unified} for set functions. Using the concept of monotonicity ratio, Mualem and Feldman~\cite{mualem2022using} were able to smoothly interpolate between the last result and the $(1 - 1/e)$-approximation obtainable for monotone objectives.

\paragraph{Online DR-submodular optimization.}
Chen et al.~\cite{chen2018online} first considered online optimization of monotone DR-submodular functions over general convex sets (for monotone objective functions, there is no difference between optimization subject to down-closed or general convex sets), and provided two algorithms. One guaranteeing $(1-\nicefrac{1}{e})$-approximation using roughly $O(\sqrt{T})$-regret, and another algorithm which is robust to stochastic settings but guarantees only $\nicefrac{1}{2}$-approximation up to the same regret. Later, Chen et al.~\cite{chen2019projection} presented an algorithm that combines $(1-\nicefrac{1}{e})$-approximation with roughly $O(\sqrt{T})$-regret and robustness, and Zhang et al.~\cite{zhang2019online} showed how one can reduce the number of gradient calculations per time step to one, at the cost of increasing the regret to roughly $O(T^{4/5})$. Such a reduction is important for bandit versions of the same problem. Online optimization of DR-submodular functions that are not necessarily monotone was studied by Thang et al.~\cite{thang2021online}, who provided three algorithms for it. One of these algorithms applies to general convex set constraints, and was already discussed above. Another algorithm applies to maximization over the entire hypercube, and achieves $\nicefrac{1}{2}$-approximation with roughly $O(\sqrt{T})$-regret; and the last algorithm applies to online maximization of non-monotone DR-submodular functions over down-closed convex sets, and achieves $\nicefrac{1}{e}$-approximation with roughly $O(T^{2/3})$-regret.

\subsection{Paper organization}
In Section~\ref{sec:Preliminaries}, we provide some definitions and important properties of DR-submodular functions. Section~\ref{sec:Offline} describes our explicit version of the offline algorithm of Du~\cite{du2022lyapunov}, which also serves as warm up for our novel online algorithm described in Section~\ref{sec:Online}. Our inapproximability result, which shows that the above offline and online algorithms are both optimal, is proved in Section~\ref{sec:Hardness}. Finally, in Section~\ref{sec:Experiments}, we study the empirical performance and robustness of our online algorithm and our version of the algorithm of Du~\cite{du2022lyapunov} by comparing them with previously suggested algorithms on multiple machine learning applications.
\section{Preliminaries}\label{sec:Preliminaries}
DR-submodularity (first defined by~\cite{bian2017guarantees}) is an extension of the submodularity notion from set functions to continuous functions. Formally speaking, given a domain $\cX = \prod_{i=1}^n\mathcal{X}_i$, where $\mathcal{X}_i$ is a closed range in $\bR$ for every $i \in [n]$, a function $F \colon \mathcal{X}\rightarrow\mathbb{R}$ is \emph{DR-submodular} if for every two vectors $\va,\vb\in \cX$, positive value $k$ and coordinate $i\in[n]$, the inequality
\[
	F(\va+k\ve_i)-F(\va)\geq F(\vb+k\ve_i)-F(\vb)
\]
holds whenever $\va\leq \vb$ and $\vb+k\ve_i\in \mathcal{X}$ (here and throughout the paper, $\ve_i$ denotes the standard $i$-th basis vector, and comparison between two vectors should be understood to hold coordinate-wise).
Note that if function $F$ is continuously differentiable, then the above definition of DR-submodulrity is equivalent to 
\[
    \nabla F(\vx)\leq\nabla F(\vy)\quad \forall\;\vx,\vy\in \cX, \vx\geq\vy
		\enspace.
\]
Moreover, when $F$ is twice differentiable, it is DR-submodular if and only if its Hessian is non-positive at every vector $\vx\in\mathcal{X}$.

In this work, we study the problem of maximizing a non-negative DR-submodular function $F\colon 2^\cN \to \nnR$ subject to a general convex body $\cK \subseteq \cX$ (usually polytope) constraint. For simplicity, we assume that $\mathcal{X} = [0,1]^n$. Note that this assumption is without loss of generality since there is a natural mapping from $\mathcal{X}$ to $[0,1]^n$. Additionally, as is standard in the field, we assume that $F$ is $\beta$-smooth for some parameter $\beta > 0$. Recall that $F$ is $\beta$-smooth if it is continuously differentiable, and for every two vectors $\vx, \vy \in [0, 1]^n$, the function $F$ obeys
\[
        \|\nabla F(\vx)-\nabla F(\vy)\|_2 \leq \beta\|\vx-\vy\|_2 \enspace.
\]

In the online (regret minimization) version of the above problem, there are $T$ time steps. In every time step $t \in [T]$, the adversary selects a non-negative $\beta$-smooth DR-submodular function $F_t$, and then the algorithm should select a vector $\vy^{(t)} \in \cK$ without knowing $F_t$ (the function $F_t$ is revealed to the algorithm only after $\vy^{(t)}$ is selected). The objective of the algorithm is to maximize $\sum_{i = 1}^T F_t(\vy^{(t)})$, and its success in doing so is measured compared to the best fixed vector $\vx \in \cK$. More formally, we say that the algorithm achieves an approximation ratio of $c \geq 0$ with regret $\cR(T)$ if
\[
	\bE\left[\sum_{t = 1}^T F_t(\vy^{(t)})\right]
	\geq
	c \cdot \max_{\vx \in \cK} \bE\left[\sum_{t = 1}^T F_t(\vx)\right] - \cR(T)
	\enspace.
\]
The nature of the access that the algorithm has to $F_t$ varies between different versions of the above problem. Some previous works assume access to the exact gradient of $F$. However, our algorithm applies also to a stochastic version of the problem in which only access to an unbiased estimator of this gradient is available.

We conclude this section by introducing some additional notation and two known lemmata that are useful in our proofs. Given two vectors $\vx, \vy \in [0, 1]^n$, we denote by $\vx \vee \vy$ and $\vx \wedge \vy$ their coordinate-wise maximum and minimum, respectively. Using this notation, we can now state the first known lemma, which can be traced back to~\cite{hassani2017gradient} (see Inequality~7.5 in the arXiv version~\cite{hassani2017gradientARXIV} of~\cite{hassani2017gradient}), and is also explicitly stated and proved in~\cite{durr2021non}.

\begin{lemma}[Lemma~1 of~\cite{durr2021non}] \label{lem:local_search_bound}
For every two vectors $\vx,\vy\in [0, 1]^n$ and any continuously differentiable DR-submodular function $F\colon [0, 1]^n\to\bR$,
\begin{align*}
    \angel{\nabla F(x),y-x}\geq F(\vx\vee\vy) + F(\vx\wedge\vy) -2F(\vx)\enspace.
\end{align*}
\end{lemma}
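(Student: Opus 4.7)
The plan is to exploit the fact that DR-submodularity implies a ``concavity along nonnegative directions'' property for continuously differentiable $F$, and to split the displacement $\vy - \vx$ into its positive and negative parts, each of which lines up nicely with a nonnegative direction emanating from a different base point.

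First, observe the pointwise identity $\vy - \vx = (\vx \vee \vy - \vx) - (\vx - \vx \wedge \vy)$: coordinate by coordinate, $\max(x_i,y_i) + \min(x_i,y_i) = x_i + y_i$. Crucially, the two vectors $\vx \vee \vy - \vx$ and $\vx - \vx \wedge \vy$ are both nonnegative and have disjoint supports. So it suffices to prove separately
\[
    \angel{\nabla F(\vx), \vx \vee \vy - \vx} \geq F(\vx \vee \vy) - F(\vx),
\]
\[
    -\angel{\nabla F(\vx), \vx - \vx \wedge \vy} \geq F(\vx \wedge \vy) - F(\vx),
\]
and sum the two inequalities.

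For the first inequality, set $\vz = \vx \vee \vy$ and $g(t) = F(\vx + t(\vz - \vx))$ for $t \in [0,1]$. Since $\vz - \vx \geq \vzero$, the point $\vx + t(\vz - \vx)$ coordinate-wise dominates $\vx$, so by DR-submodularity (in its gradient form, as stated in the Preliminaries) $\nabla F(\vx + t(\vz - \vx)) \leq \nabla F(\vx)$. Dotting with the nonnegative vector $\vz - \vx$ yields $g'(t) \leq g'(0) = \angel{\nabla F(\vx), \vz - \vx}$, and integrating from $0$ to $1$ gives the desired bound $F(\vz) - F(\vx) \leq \angel{\nabla F(\vx), \vz - \vx}$. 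Symmetrically, for the second inequality, set $\vw = \vx \wedge \vy$ and $h(t) = F(\vw + t(\vx - \vw))$. Now $\vw + t(\vx - \vw) \leq \vx$, so DR-submodularity gives $\nabla F(\vw + t(\vx - \vw)) \geq \nabla F(\vx)$, and since $\vx - \vw \geq \vzero$ we get $h'(t) \geq \angel{\nabla F(\vx), \vx - \vw}$ for all $t \in [0,1]$; integrating yields $F(\vx) - F(\vw) \geq \angel{\nabla F(\vx), \vx - \vw}$, which is exactly the second inequality.

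Adding the two inequalities and using the decomposition of $\vy - \vx$ yields the claim. There is no real obstacle here; the only point one has to be careful about is the direction of the inequality when comparing gradients (DR-submodularity says gradients \emph{decrease} as we move up), and remembering to apply the monotonicity of the gradient at the correct base point ($\vx$ in both halves) so that the two estimates can be combined into a single inner product against $\vy - \vx$.
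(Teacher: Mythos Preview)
Your proof is correct. The paper itself does not give a proof of this lemma; it merely cites it as Lemma~1 of D\"urr et al.\ (and traces it back to Hassani et al.). Your argument---decomposing $\vy-\vx$ into the nonnegative parts $\vx\vee\vy-\vx$ and $\vx-\vx\wedge\vy$, then using the gradient-monotonicity form of DR-submodularity to obtain concavity along each nonnegative direction---is precisely the standard proof that appears in those references, so there is nothing to contrast.
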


The following lemma originates from a lemma proved by~\cite{feldman2011unified} for set functions. Extensions of this lemma to continuous domains have appeared in~\cite{bian2017nonmonotone,chekuri2015multiplicative}, but for completeness, we include a proof of our exact version of the lemma in Appendix~\ref{sec:norm_less_proof}.
\begin{restatable}{lemma}{lemNormLoss} \label{lem:norm_loss}
For every two vectors $\vx,\vy\in [0, 1]^n$ and any continuously differentiable non-negative DR-submodular function $F\colon [0, 1]^n\rightarrow\nnR$,
\begin{align*}
    F(\vx\vee\vy)\geq(1-\norm{\vx}_\infty)F(\vy)\enspace.
\end{align*}
\end{restatable}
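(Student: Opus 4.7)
The plan is to exploit the fact that DR-submodularity makes $F$ concave along every non-negative direction (immediate from the characterization $\nabla F(\vu) \geq \nabla F(\vv)$ for $\vu \leq \vv$ given in Section~\ref{sec:Preliminaries}), and to combine this with the non-negativity of $F$ on a carefully rescaled segment. Let $\lambda := \norm{\vx}_\infty$; the case $\lambda = 0$ is trivial since then $\vx = \vzero$, $\vx \vee \vy = \vy$, and both sides equal $F(\vy)$, so I would assume $\lambda > 0$.

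First I would establish the pointwise bound $\vx \vee \vy \leq (1-\lambda)\vy + \lambda\vone$. Coordinate-wise this reads $\max(x_i, y_i) \leq (1-\lambda)y_i + \lambda$: the inequality $y_i \leq (1-\lambda)y_i + \lambda$ is just $0 \leq \lambda(1 - y_i)$, and $x_i \leq (1-\lambda)y_i + \lambda = \lambda + y_i(1-\lambda)$ follows from $x_i \leq \lambda$ together with $y_i(1-\lambda) \geq 0$. Equivalently, $(\vx \vee \vy) - \vy \leq \lambda(\vone - \vy)$, so the point
\[
\vq \;:=\; \vy + \tfrac{1}{\lambda}\bigl((\vx \vee \vy) - \vy\bigr)
\]
satisfies $\vy \leq \vq \leq \vone$ and therefore lies in $[0,1]^n$.

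Next I would introduce the one-variable function $h(t) := F(\vy + t(\vq - \vy))$ for $t \in [0, 1]$. Because $\vq - \vy \geq \vzero$, the map $t \mapsto \nabla F(\vy + t(\vq - \vy))$ is coordinate-wise non-increasing by DR-submodularity, so $h'(t) = \angel{\nabla F(\vy + t(\vq - \vy)), \vq - \vy}$ is itself non-increasing, making $h$ concave on $[0,1]$. A direct calculation gives $h(0) = F(\vy)$, $h(\lambda) = F\bigl(\vy + ((\vx \vee \vy) - \vy)\bigr) = F(\vx \vee \vy)$, and $h(1) = F(\vq) \geq 0$ by the non-negativity of $F$. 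Writing $\lambda = (1-\lambda)\cdot 0 + \lambda \cdot 1$ and invoking concavity then yields
\[
F(\vx \vee \vy) \;=\; h(\lambda) \;\geq\; (1-\lambda)\, h(0) + \lambda\, h(1) \;\geq\; (1-\lambda) F(\vy),
\]
which is the claim.

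The only slightly non-routine step is identifying the correct rescaling factor $1/\lambda$: it is precisely what places $\vx \vee \vy$ at the convex combination $\lambda$ along a segment whose far endpoint $\vq$ still lies inside $[0,1]^n$, so that non-negativity of $F$ at $\vq$ combined with concavity of $h$ produces the factor $(1-\lambda)$ with no further slack. Everything else is a routine application of the concavity-along-non-negative-directions property.
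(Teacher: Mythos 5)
Your proof is correct and takes essentially the same approach as the paper: both identify the same extended endpoint $\vq = \vy + (\vx\vee\vy - \vy)/\norm{\vx}_\infty \in [0,1]^n$, exploit DR-submodularity along the non-negative direction from $\vy$ through $\vx\vee\vy$ to $\vq$, and close with the non-negativity of $F(\vq)$. The paper phrases the middle step as an explicit change of variable inside the integral $\int_0^1 \angel{\vz, \nabla F(\vy + t\vz)}\,dt$, whereas you package the same fact as concavity of the one-dimensional restriction $h$ and apply the concavity inequality at $t=\lambda$ --- a difference of presentation only.
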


\section{Offline Maximization} \label{sec:Offline}

In this section, we present and analyze an explicit variant of the offline algorithm of Du~\cite{du2022lyapunov} for maximizing a non-negative DR-submodular function $F$ over a general convex set $\cK$. Since the algorithm of Du~\cite{du2022lyapunov} is related to Frank-Wolfe, we name our variant {\NMFW}, and its pseudocode appears as Algorithm~\ref{alg:Offline}. {\NMFW} gets a non-negative integer parameter $T$ and a quality control parameter $\varepsilon \in (0, 1)$.
\begin{algorithm}
\DontPrintSemicolon
Let $\vy^{(0)} \leftarrow \argmin_{\vx\in \mathcal{K}}\norm{\vx}_\infty$.\\
\For{$i=1$ \KwTo $T$}
{
	Let $\vs^{(i)}\leftarrow\argmax_{\vx\in \mathcal{K}}\angel{\nabla F(\vy^{(i-1)}),\vx}$\\
	Let $\vy^{(i)}\leftarrow(1-\eps)\cdot\vy^{(i-1)}+\eps\cdot\vs^{(i)}$
}
\Return the vector maximizing $F$ among $\{\vy^{(0)},\ldots,\vy^{(T)}\}$.
\caption{{\NMFW} $(T,\eps)$\label{alg:Offline}}
\end{algorithm}

For completeness, and as a warmup for Section~\ref{sec:Online}, we present a full analysis of {\NMFW}, independent of the analysis presented by Du~\cite{du2022lyapunov}. The conclusions of our analysis are summarized by the following theorem. We note that, for the purpose of this theorem, it would have sufficed for {\NMFW} to return $\vy^{(T)}$ rather than the best solution among $\vy^{(0)},\dotsc,\vy^{(T)}$. However, returning the best of these solutions results in a better empirical performance at almost no additional cost.
\begin{restatable}{theorem}{thmOffline} \label{thm:Offline}
Let $\mathcal{K}\subseteq [0,1]^n$ be a general convex set, and let $F\colon [0, 1]^n \rightarrow \nnR$ be a non-negative $\beta$-smooth DR-submodular function. Then, {\NMFW} (Algorithm~\ref{alg:Offline}) outputs a solution $\vw \in \mathcal{K}$ obeying
\[
	F(\vw) \geq (1-2\eps)^{T - 1}[(1+\eps)^T-1](1-\min_{\vx\in \mathcal{K}}\norm{\vx}_\infty)\cdot F(\vo)-0.5\eps^2 \beta D^2T \enspace,
\]
where $D$ is the diameter of $\mathcal{K}$ and $\vo \in \arg \max_{\vx\in \mathcal{K}}F(\vx)$. In particular, when $T$ is set to be $\lfloor \ln 2 / \eps \rfloor$,
\[
	F(\vw) \geq (\nicefrac{1}{4} - 3\eps)(1-\min_{\vx\in \mathcal{K}}\norm{\vx}_\infty)\cdot F(\vo) -0.5\eps \beta D^2 \enspace.
\] 
\end{restatable}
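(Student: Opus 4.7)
The plan is to derive a one-step progress inequality and then iterate it via induction. The main tools are $\beta$-smoothness of $F$, the optimality of the linear subproblem solution $\vs^{(i)}$, Lemma~\ref{lem:local_search_bound}, Lemma~\ref{lem:norm_loss}, and a simple $\ell_\infty$-norm bound on the iterates.

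For each iteration $i\in[T]$, I would start from $\beta$-smoothness applied to $\vy^{(i)}-\vy^{(i-1)}=\eps(\vs^{(i)}-\vy^{(i-1)})$, using $\|\vs^{(i)}-\vy^{(i-1)}\|_2\leq D$, to obtain
\[
F(\vy^{(i)})\geq F(\vy^{(i-1)})+\eps\angel{\nabla F(\vy^{(i-1)}),\vs^{(i)}-\vy^{(i-1)}}-\tfrac{1}{2}\beta\eps^2 D^2.
\]
Since $\vs^{(i)}$ maximizes $\angel{\nabla F(\vy^{(i-1)}),\cdot}$ on $\cK$ and $\vo\in\cK$, I can replace $\vs^{(i)}$ by $\vo$ inside the inner product. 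Applying Lemma~\ref{lem:local_search_bound} with $\vx=\vy^{(i-1)}$ and $\vy=\vo$, then dropping the non-negative term $F(\vy^{(i-1)}\wedge\vo)$ and invoking Lemma~\ref{lem:norm_loss}, yields
\[
F(\vy^{(i)})\geq (1-2\eps)F(\vy^{(i-1)})+\eps(1-\|\vy^{(i-1)}\|_\infty)F(\vo)-\tfrac{1}{2}\beta\eps^2 D^2.
\]
A straightforward induction on the update rule (using $\|\vy^{(0)}\|_\infty=m$, $\vs^{(i)}\in[0,1]^n$, and the triangle inequality for $\|\cdot\|_\infty$) gives $\|\vy^{(i)}\|_\infty\leq 1-(1-\eps)^i(1-m)$, so the displayed inequality becomes the clean recurrence
\[
F(\vy^{(i)})\geq(1-2\eps)F(\vy^{(i-1)})+\eps(1-\eps)^{i-1}(1-m)F(\vo)-\tfrac{1}{2}\beta\eps^2 D^2.
\]

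Next I would prove by induction on $i$ that $F(\vy^{(i)})\geq(1-2\eps)^{i-1}[(1+\eps)^i-1](1-m)F(\vo)-0.5\,i\,\eps^2\beta D^2$. The base case $i=1$ follows directly from the recurrence together with $F(\vy^{(0)})\geq0$. For the inductive step, multiplying the hypothesis by $(1-2\eps)$, adding the new contribution $\eps(1-\eps)^{i-1}(1-m)F(\vo)$, and checking that the accumulated error is bounded by $0.5\,i\,\eps^2\beta D^2$ (which uses only $1-2\eps\leq 1$), the required inequality reduces to $(1-\eps)^{i-1}\geq[(1-2\eps)(1+\eps)]^{i-1}=(1-\eps-2\eps^2)^{i-1}$, obvious term-by-term. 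Taking $i=T$ and recalling that $F(\vw)\geq F(\vy^{(T)})$ gives the first inequality of the theorem.

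For the specialization $T=\lfloor\ln 2/\eps\rfloor$, I would apply the standard Taylor estimates $\ln(1+\eps)\geq\eps-\eps^2/2$ and $\ln(1-2\eps)\geq-2\eps-4\eps^2$ (the latter valid for small enough $\eps$), combined with $e^{-x}\geq 1-x$, to obtain $(1+\eps)^T-1\geq 1-3\eps$ and $(1-2\eps)^{T-1}\geq\tfrac{1}{4}-\eps$, whose product exceeds $\tfrac{1}{4}-3\eps$; the additive error satisfies $0.5\eps^2\beta D^2 T\leq 0.5\eps\beta D^2\ln 2\leq 0.5\eps\beta D^2$. I expect the main conceptual obstacle to be choosing the correct inductive hypothesis: directly solving the linear recurrence yields the slightly tighter but less illuminating coefficient $(1-\eps)^T-(1-2\eps)^T$, and one must recognize its weaker consequence $(1-2\eps)^{T-1}[(1+\eps)^T-1]$ (which nonetheless still witnesses the tight $\tfrac{1}{4}$-approximation as $\eps\to 0$) as the quantity worth carrying through the induction.
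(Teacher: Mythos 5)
Your proposal is correct and follows essentially the same route as the paper: the same one-step progress inequality via $\beta$-smoothness, linear-subproblem optimality, Lemma~\ref{lem:local_search_bound}, and Lemma~\ref{lem:norm_loss}; the same $\ell_\infty$-norm decay bound (the paper's Observation~\ref{Observation:norm}); and the same induction to $(1-2\eps)^{i-1}[(1+\eps)^i-1]$, reduced to $(1-\eps)^{i-1}\geq(1-\eps-2\eps^2)^{i-1}$. Your specialization to $T=\lfloor\ln 2/\eps\rfloor$ uses slightly different Taylor-type estimates than the paper's chained calculation but is equivalent in effect, and your aside about the exact recurrence coefficient $(1-\eps)^T-(1-2\eps)^T$ being a marginally tighter lower bound is a correct observation the paper does not make.
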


We begin the proof of Theorem~\ref{thm:Offline} with the following lemma, which bounds the rate in which the infinity norm of the solution maintained by Algorithm~\ref{alg:Offline} can be increase.
\begin{observation}\label{Observation:norm}
For every integer $0 \leq i \leq T$, $1-\|\vy^{(i)}\|_\infty\geq (1-\eps)^{i}\cdot(1-\|\vy^{(0)}\|_\infty)$.
\end{observation}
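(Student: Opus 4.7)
The plan is a direct induction on $i$. The base case $i = 0$ is trivial, as it reduces to $1 - \|\vy^{(0)}\|_\infty \geq 1 - \|\vy^{(0)}\|_\infty$.

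For the inductive step, I would exploit the update rule $\vy^{(i)} = (1-\eps)\vy^{(i-1)} + \eps \vs^{(i)}$. Since $\cK \subseteq [0,1]^n$, every entry of $\vy^{(i-1)}$ and $\vs^{(i)}$ lies in $[0,1]$, so $\vy^{(i)}$ is coordinatewise a convex combination of non-negative numbers. Taking the infinity norm coordinatewise and applying $\|\vs^{(i)}\|_\infty \leq 1$, I get
\[
\|\vy^{(i)}\|_\infty \leq (1-\eps)\|\vy^{(i-1)}\|_\infty + \eps \|\vs^{(i)}\|_\infty \leq (1-\eps)\|\vy^{(i-1)}\|_\infty + \eps \enspace.
\]
Rearranging,
\[
1 - \|\vy^{(i)}\|_\infty \geq (1-\eps)\bigl(1 - \|\vy^{(i-1)}\|_\infty\bigr) \enspace,
\]
and then chaining with the inductive hypothesis $1 - \|\vy^{(i-1)}\|_\infty \geq (1-\eps)^{i-1}(1 - \|\vy^{(0)}\|_\infty)$ yields the claim.

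There is no real obstacle here; the only subtlety is to be explicit that the convex combination preserves non-negativity so that the infinity norm is simply the largest coordinate, which makes the triangle-inequality style bound tight enough for the argument. This observation will presumably feed into the main analysis of Theorem~\ref{thm:Offline} by providing, after $T$ iterations, the factor $(1-\eps)^T(1 - \min_{\vx \in \cK} \|\vx\|_\infty)$ via the choice $\vy^{(0)} = \arg\min_{\vx \in \cK} \|\vx\|_\infty$, which is exactly the quantity needed to invoke Lemma~\ref{lem:norm_loss} at each step.
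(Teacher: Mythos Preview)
Your proof is correct and follows essentially the same route as the paper. The only presentational difference is that the paper carries out the induction coordinate by coordinate (showing $1 - y^{(i)}_j \geq (1-\eps)^i (1 - y^{(0)}_j)$ for each fixed $j$ and then passing to the maximum), whereas you bound the $\ell_\infty$-norm directly via the triangle inequality; both reduce to the same computation using $s^{(i)}_j \leq 1$.
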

\begin{proof}
To prove the lemma, we show by induction that for every fixed coordinate $j \in [n]$, we have $1-y^{(i)}_j\geq (1-\eps)^{i}\cdot(1-y^
{(0)}_j)$. For $i = 0$, this inequality trivially holds. Furthermore, assuming this inequality holds for $i - 1$, it also holds for $i$ because
\begin{align*}
1-y_j^i &= 1 - (1-\eps)y_j^{(i-1)}-\eps s_j^{(i)}\\
&\geq 1-(1-\eps)y_j^{(i-1)}-\eps\\
&= (1-\eps)(1-y_j^{(i-1)})\\
&\geq (1-\eps)^i\cdot(1-y^{(0)}_j)
\enspace,
\end{align*}
where the second inequality follows from the induction hypothesis.
\end{proof}

Using the last observation, we can now prove the following lemma about the rate in which the value of $F(y^{(i)})$ increases as a function of $i$.
\begin{lemma}
For every integer $1\leq i\leq T$, $F(\vy^{(i)})\geq (1-2\eps)\cdot F(\vy^{(i-1)}) + \eps(1-\eps)^{i - 1}\cdot(1-\|\vy^{(0)}\|_\infty) \cdot F(\vo) - 0.5\eps^2\beta D^2$. \label{lem:offline}
\end{lemma}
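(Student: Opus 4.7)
The plan is to combine $\beta$-smoothness with the two lemmata stated in the preliminaries (Lemma~\ref{lem:local_search_bound} and Lemma~\ref{lem:norm_loss}), together with Observation~\ref{Observation:norm}, in a standard Frank-Wolfe-style argument.

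First, I would note that the update rule gives $\vy^{(i)} - \vy^{(i-1)} = \eps(\vs^{(i)} - \vy^{(i-1)})$, so by $\beta$-smoothness,
\[
F(\vy^{(i)}) \;\geq\; F(\vy^{(i-1)}) + \eps\,\angel{\nabla F(\vy^{(i-1)}),\, \vs^{(i)} - \vy^{(i-1)}} - \tfrac{\beta}{2}\eps^2 \norm{\vs^{(i)} - \vy^{(i-1)}}_2^2,
\]
and the last term is at least $-0.5\eps^2\beta D^2$ since $\vs^{(i)},\vy^{(i-1)} \in \cK$ and $D$ is the diameter of $\cK$.

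Next, I would replace $\vs^{(i)}$ by $\vo$ in the inner product: since $\vs^{(i)}$ was chosen as $\argmax_{\vx\in\cK}\angel{\nabla F(\vy^{(i-1)}),\vx}$ and $\vo \in \cK$, we have $\angel{\nabla F(\vy^{(i-1)}), \vs^{(i)} - \vy^{(i-1)}} \geq \angel{\nabla F(\vy^{(i-1)}), \vo - \vy^{(i-1)}}$. Applying Lemma~\ref{lem:local_search_bound} with the pair $(\vy^{(i-1)},\vo)$ yields
\[
\angel{\nabla F(\vy^{(i-1)}), \vo - \vy^{(i-1)}} \;\geq\; F(\vy^{(i-1)}\vee \vo) + F(\vy^{(i-1)}\wedge \vo) - 2F(\vy^{(i-1)}).
\]
I would drop the $\wedge$-term by non-negativity of $F$, and then use Lemma~\ref{lem:norm_loss} (with $\vx = \vy^{(i-1)}$, $\vy = \vo$) to lower bound the $\vee$-term by $(1 - \norm{\vy^{(i-1)}}_\infty)F(\vo)$.

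Finally, I would invoke Observation~\ref{Observation:norm} to replace $1 - \norm{\vy^{(i-1)}}_\infty$ by the deterministic lower bound $(1-\eps)^{i-1}(1 - \norm{\vy^{(0)}}_\infty)$, and collect the $-2\eps F(\vy^{(i-1)})$ contribution with the leading $F(\vy^{(i-1)})$ to produce the coefficient $(1-2\eps)$ in front of $F(\vy^{(i-1)})$. Assembling everything gives exactly the stated inequality. There is no real obstacle here: the proof is a direct chaining of the smoothness inequality, the optimality of $\vs^{(i)}$, and the two preliminary lemmata; the only things worth being careful about are (i) using non-negativity to discard $F(\vy^{(i-1)}\wedge \vo)$ (which is what forces the factor $1/4$ eventually) and (ii) invoking the norm bound at step $i-1$ rather than $i$ when applying Observation~\ref{Observation:norm}.
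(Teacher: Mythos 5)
Your proof is correct and follows essentially the same route as the paper: $\beta$-smoothness to control the second-order error, optimality of $\vs^{(i)}$ to swap in $\vo$, Lemma~\ref{lem:local_search_bound}, non-negativity to drop the $\wedge$-term, Lemma~\ref{lem:norm_loss}, and Observation~\ref{Observation:norm} at index $i-1$. The only cosmetic difference is that you invoke the standard quadratic lower bound implied by $\beta$-smoothness directly, whereas the paper re-derives it inline via the chain-rule integral; the two are the same argument.
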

\begin{proof} By the chain rule,
\begin{align*}
F(\vy^{(i)})-F(\vy^{(i-1)})&=F((1-\eps)\cdot\vy^{(i-1)}+\eps\cdot\vs^{(i)})-F(\vy^{(i-1)})\\
&=\int_0^\eps \left. \frac{F((1-z)\cdot\vy^{(i-1)}+z\cdot\vs^{(i)})}{dz}\right|_{z=r}dr\\
&=\int_0^\eps \langle \vs^{(i)}-\vy^{(i-1)},\nabla F((1-r)\cdot\vy^{(i-1)}+r\cdot\vs^{(i)})\rangle dr\\
&\geq\int_0^\eps\left[\langle\vs^{(i)}-\vy^{(i-1)},\nabla F(\vy^{(i-1)})\rangle - r\beta D^2\right]dr\\
&=\eps\cdot\langle \vs^{(i)}-\vy^{(i-1)},\nabla F(\vy^{(i-1)})\rangle-0.5\eps^2\beta D^2,
\end{align*}
where the inequality follows from the $\beta$-smoothness of $F$. Recall now that $s^{(i)}$ is the maximizer found by Algorithm~\ref{alg:Offline} in its $i$-th iteration, and $\vo$ is one of the values in the domain on which the maximum is calculated. Therefore,
\begin{align*}
    F(\vy^{(i)})-F(\vy^{(i-1)})&\geq\eps\cdot\langle \vs^{(i)}-\vy^{(i-1)},\nabla F(\vy^{(i-1)})\rangle-0.5\eps^2\beta D^2\\
    &\geq\eps\cdot\langle \vo-\vy^{(i-1)},\nabla F(\vy^{(i-1)})\rangle-0.5\eps^2\beta D^2\\
    &\geq \eps\cdot\left[F(\vo\vee\vy^{(i-1)})+F(\vo\wedge\vy^{(i-1)})-2F(\vy^{(i-1)})\right]-0.5\eps^2\beta D^2\\
    &\geq \eps\cdot\left[(1-\eps)^{i - 1}\cdot (1 - \|\vy^{(0)}\|_\infty) \cdot F(\vo)-2F(\vy^{(i-1)})\right]-0.5\eps^2\beta D^2.
\end{align*}
where the third inequality follows from Lemma~\ref{lem:local_search_bound}, and the last inequality from Lemma~\ref{lem:norm_loss}, Observation~\ref{Observation:norm} and the non-negativity of $F$. The lemma now follows by rearranging the last inequality.
\end{proof}

We are ready now to prove Theorem~\ref{thm:Offline}.
\begin{proof}[Proof of Theorem~\ref{thm:Offline}]
To see that the second part of the theorem follows from the first part, note that for $T = \lfloor \ln 2 / \eps \rfloor$ and $\eps < 1/4$,
\begin{align*}
	(1 - 2\eps)^{T - 1}[(1 + \eps)^T - 1]
	\geq{} &
	e^{-2\eps T}(1 - 4\eps^2T)[e^{\eps T}(1 - \eps^2T) - 1]\\
	\geq{} &
	e^{-2\ln 2}(1 - 4\eps \ln 2)[e^{\ln 2 - \eps}(1 - \eps \ln 2) - 1]\\
	={} &
	\left(\frac{1}{4} - \eps \ln 2\right)\left[\frac{2 - 2\eps \ln 2}{e^\eps} - 1\right]\\
	\geq{} &
	\left(\frac{1}{4} - \eps\right)\left[\frac{2 - 2\eps}{1 + 2\eps} - 1\right]\\
	={} &
	\left(\frac{1}{4} - \eps\right) \cdot \frac{1 - 4\eps}{1 + 2\eps}\\
	\geq{} &
	\frac{1}{4} - 3\eps
	\enspace.
\end{align*}
For $\eps \geq 1/4$, the second part of the theorem is an immediate consequence of the non-negaitivity of $F$.

It remains to prove the first part of the theorem. We do that by proving by induction the stronger claim that for every integer $0 \leq i \leq T$,
\begin{equation} \label{eq:offline_induction}
    F(\vy^{(i)})
		\geq
		(1-2\eps)^{i - 1}\left[(1+\eps)^i-1\right]\cdot(1-\|\vy^{(0)}\|_\infty)\cdot F(\vo) - 0.5\eps^2\beta D^2 i
		\enspace.
\end{equation}
Note that the theorem indeed follows from this claim because $w$ is the best vector within a set that includes $\vy^{(T)}$, and $\vy^{(0)} \in \arg\min_{\vx \in \cK} \|x\|_\infty$. For $i = 0$, Equation~\eqref{eq:offline_induction} follows directly from the non-negativity of $F$. Hence, we only need to show that for $1\leq i\leq T$, if we assume that Equation~\eqref{eq:offline_induction} holds for $i - 1$, then it holds for $i$ as well. This is indeed the case because Lemma~\ref{lem:offline} yields
\begin{align*}
    F(\vy^{(i)})
		\geq{} \mspace{-10mu}&\mspace{10mu}
		(1-2\eps)\cdot F(\vy^{(i-1)})+\eps(1-\eps)^{i - 1}\cdot(1-\|\vy^{(0)}\|_\infty)\cdot F(\vo) - 0.5\eps^2\beta D^2\\
		\geq{} &
		(1-2\eps)\cdot \{(1-2\eps)^{i - 2}\left[(1+\eps)^{i - 1}-1\right]\cdot(1-\|\vy^{(0)}\|_\infty)\cdot F(\vo) - 0.5\eps^2\beta D^2 (i - 1)\} \\&+\eps(1-\eps)^{i - 1}\cdot(1-\|\vy^{(0)}\|_\infty)\cdot F(\vo) - 0.5\eps^2\beta D^2\\
    \geq{} &
		\{(1-2\eps)^{i - 1}\left[(1+\eps)^i - \eps(1+\eps)^{i - 1}-1\right] + \eps(1-\eps)^{i - 1}\}\cdot(1-\|\vy^{(0)}\|_\infty)\cdot F(\vo) - 0.5\eps^2\beta D^2 i\\
		\geq{} &
		(1-2\eps)^{i - 1}\left[(1+\eps)^i-1\right]\cdot(1-\|\vy^{(0)}\|_\infty)\cdot F(\vo) - 0.5\eps^2\beta D^2 i
		\enspace,
\end{align*}
where the second inequality follows from the induction hypothesis, and the last inequality holds since
\[
	(1-2\eps)^{i - 1}\cdot \eps(1+\eps)^{i - 1}
	=
	\eps(1 - \eps - 2\eps^2)^{i - 1}
	\leq
	\eps(1 - \eps)^{i - 1}
	\enspace.
	\qedhere
\]
\end{proof}

\section{Online Maximization}\label{sec:Online}

In this section, we consider the problem of maximizing a non-negative DR-submodular function $F$ over a general convex set $\cK$ in the online setting. The only currently known algorithm for this problem is an algorithm due to~\cite{thang2021online} which guarantees $\frac{1-\min_{\vx\in\mathcal{K}}\norm{\vx}_\infty}{3\sqrt{3}}$-approximation. One drawback of this algorithm is that its regret is roughly $T$ over the logarithm of the running time, and therefore, to make this regret less than nearly-linear in $T$ one has to allow for a super-polynomial time complexity (furthermore, a sub-exponential time complexity is necessary to get a regret of $T^c$ for any constant $c \in (0, 1)$). Our algorithm, given as Algorithm~\ref{alg:Online}, combines ideas from our offline algorithm and the Meta-Frank-Wolfe algorithm suggested in~\cite{chen2018online}, and guarantees both $\tfrac{1}{4}(1 - \min_{\vx\in\mathcal{K}}\norm{\vx}_\infty)$-approximation and roughly $O(\sqrt{T})$-regret in polynomial time.

Like the original Meta-Frank-Wolfe algorithm of~\cite{chen2018online}, our algorithm uses in a black-box manner multiple instances $\cE$ of an online algorithm for linear optimization. More formally, we assume that every instance $\cE$ has the following behavior and guarantee. There are $T$ time steps. In every time step $t \in [T]$, $\cE$ selects a vector $\vu^{(t)} \in \cK$, and then an adversary reveals to $\cE$ a vector $\vd^{(t)}$ that was chosen independently of $\vu^{(t)}$. The algorithm $\cE$ guarantees that
\[
	\bE\left[\sum_{t = 1}^T \langle \vu^{(t)},\vd^{(t)} \rangle\right]
	\geq
	\max_{\vx \in \cK} \bE\left[\sum_{t = 1}^T \langle \vx,\vd^{(t)} \rangle\right] - \cR(T)
\]
for some regret function $\cR(T)$ that depends on the particular linear optimization algorithm chosen as the black-box (and may depend on the convex body $\cK$ and the bounds available on the adversarially chosen vectors $\vd^{(t)}$). One possible choice for an online linear optimization algorithm is Regularized-Follow-the-Leader due to~\cite{abernethy2008competing} that has $\cR(T) \leq DG\sqrt{2T}$, where $D$ is the diameter of $\cK$ and $G = \max_{1 \leq t \leq T} \|\vd^{(t)}\|_2$.

Algorithm~\ref{alg:Online} runs in each time step a procedure similar to our version of the offline algorithm (\NMFW). However, instead of calculating a point $\vs$ that is good with respect to the gradient at the current solution, Algorithm~\ref{alg:Online} asks an instance of an online linear optimization algorithm to provide such a point. At the end of the time step, the online linear optimization algorithm gets an estimate of the gradient as the adversarial vector, and therefore, on average, the points it produces are a good approximation of the optimal point in retrospect. Algorithm~\ref{alg:Online} gets three parameters. The parameters $L$ and $\eps$ correspond to the parameters $T$ and $\eps$ of {\NMFW} (Algorithm~\ref{alg:Offline}),\footnote{The parameter $T$ of {\NMFW} was renamed to $L$ here to accommodate the standard notation in both offline and online algorithms. In offline Frank-Wolfe-like algorithms, the number of iterations is usually denoted by $T$, and in online algorithms $T$ is reserved to the number of time steps.} respectively, and the parameter $T$ is the number of time steps.

\begin{algorithm}
\DontPrintSemicolon
\lFor{$i = 1$ \KwTo $L$}{Initialize an instance $\cE_i$ of some online algorithm for linear optimization.}
\For{$t=1$ \KwTo $T$}
{
	Let $\vy^{(0, t)} \leftarrow \argmin_{\vx\in \mathcal{K}}\norm{\vx}_\infty$.\\
	\For{$i = 1$ \KwTo $L$}
	{
		Let $\vs^{(i, t)} \in \cK \leftarrow$ be the vector picked by $\mathcal{E}_\ell$ in time step $t$.\\
		Let $\vy^{(i, t)}\leftarrow(1-\eps) \cdot \vy^{(i-1, t)} + \eps \cdot \vs^{(i, t)}$.\\
	}
	Play $\vy^{(t)} = \vy^{(L, t)}$.\\
	\For{$i = 1$ \KwTo $L$}
	{
		Observe an unbiased estimator $\vg^{(i, t)}$ of $\nabla F_t(\vy^{(i - 1, t)})$.\\
		Pass $\vg^{(i, t)}$ as the adverserially chosen vector $\vd^{(t)}$ for $\cE_i$.
	}
}
\caption{{\NMMFW} $(L, \eps, T)$\label{alg:Online}}
\end{algorithm}

The main result that we prove regarding the online setting is given by the next theorem.

\begin{restatable}{theorem}{thmOnline} \label{thm:Online}
Let $\mathcal{K}$ be a general convex set with diameter $D$. Assume that for every $1\leq t \leq T$, $F_t\colon [0, 1]^n \to \nnR$ is a $\beta$-smooth DR-submodular function,
then
\[
	\sum_{t=1}^T\mathbb{E}[F_t(\vy^{(t)})]
	\geq
	(1-2\eps)^{T - 1}[(1+\eps)^T-1](1-\min_{\vx\in \mathcal{K}}\norm{\vx}_\infty)\cdot \bE\left[\sum_{t = 1}^T F(\vo)\right] - \eps L \cdot \cR(T) - 0.5\eps^2 \beta D^2TL \enspace,
\]
where $D$ is the diameter of $\mathcal{K}$, $\vo$ is a vector in $\cK$ maximizing $\bE[\sum_{t=1}^T F_t(\vo)]$, and $\cR(T)$ is the regret of the online linear optimization algorithm over the domain $\cK$ when the adversarial vectors $\vd^{(t)}$ are the estimators $\vg^{(i, t)}$ calculated by Algorithm~\ref{alg:Online}. In particular, when $L$ is set to be $\lfloor \ln 2 / \eps \rfloor$, $\eps$ is set to be $1 / \sqrt{T}$ and $\cE_i$ is chosen as an instance of Regularized-Follow-the-Leader,
\[
	\sum_{t=1}^T\mathbb{E}[F_t(\vy^{(t)})]
	\geq
	(\nicefrac{1}{4} - 3\eps)(1-\min_{\vx\in \mathcal{K}}\norm{\vx}_\infty)\cdot \bE\left[\sum_{t=1}^T F_t(\vo)\right] - (G + \beta D) D\sqrt{T} \enspace,
\]
where $G = \max_{1 \leq i \leq L, 1 \leq t \leq T} \|\vg^{(i, t)}\|_2$.
\end{restatable}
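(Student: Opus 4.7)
The plan is to follow the offline analysis of Theorem~\ref{thm:Offline} closely, carrying the per-iteration recurrence over the product index $(i, t)$ and replacing the per-iteration optimality of $\vs^{(i)}$ used in the offline proof by an \emph{aggregate} guarantee that comes from summing the regret of the linear optimization subroutines $\cE_i$ across the $T$ time steps. First I would note that Observation~\ref{Observation:norm} is entirely local to a single time step: the update rule $\vy^{(i,t)} = (1-\eps)\vy^{(i-1,t)} + \eps \vs^{(i,t)}$ has the same form as in Algorithm~\ref{alg:Offline}, so the same coordinate-wise induction yields $1 - \|\vy^{(i,t)}\|_\infty \geq (1-\eps)^i (1 - m)$ for every $t$, where $m = \min_{\vx \in \cK}\|\vx\|_\infty$ (using $\vy^{(0,t)} = \argmin_{\vx \in \cK}\|\vx\|_\infty$). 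Next, the chain-rule/$\beta$-smoothness computation from the proof of Lemma~\ref{lem:offline} is also purely ``single-time-step'' and gives, deterministically,
\[
    F_t(\vy^{(i,t)}) - F_t(\vy^{(i-1,t)}) \geq \eps \langle \vs^{(i,t)} - \vy^{(i-1,t)}, \nabla F_t(\vy^{(i-1,t)}) \rangle - 0.5 \eps^2 \beta D^2 \enspace.
\]

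The substantive online step is now to sum this inequality over $t \in [T]$ and replace the sum of inner products with $\vs^{(i,t)}$ by the corresponding sum with $\vo$, up to an $\cR(T)$ additive loss. The regret guarantee of $\cE_i$ is stated in terms of the adversarial vectors $\vg^{(i,t)}$, so the key subtlety is to translate this into a statement about true gradients. Since $\vs^{(i,t)}$ is chosen by $\cE_i$ based only on the history prior to receiving $\vg^{(i,t)}$, and since the estimator is unbiased conditional on that history (because $\vy^{(i-1,t)}$ is determined by that same history), the tower property gives $\bE[\langle \vs^{(i,t)}, \vg^{(i,t)} \rangle] = \bE[\langle \vs^{(i,t)}, \nabla F_t(\vy^{(i-1,t)}) \rangle]$; the analogous identity for the deterministic $\vo$ is immediate. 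Combining these with the regret guarantee of $\cE_i$, then invoking Lemma~\ref{lem:local_search_bound}, Lemma~\ref{lem:norm_loss}, the infinity-norm bound from the first step, and non-negativity of $F_t$, produces the online analog of Lemma~\ref{lem:offline}:
\[
    \bE\!\left[\sum_{t=1}^T F_t(\vy^{(i,t)})\right] \geq (1-2\eps)\bE\!\left[\sum_{t=1}^T F_t(\vy^{(i-1,t)})\right] + \eps(1-\eps)^{i-1}(1-m)\bE\!\left[\sum_{t=1}^T F_t(\vo)\right] - \eps\cR(T) - 0.5\eps^2\beta D^2 T \enspace.
\]

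With this recurrence in hand, a verbatim copy of the induction on $i$ carried out in the proof of Theorem~\ref{thm:Offline} (applied to $\bE[\sum_{t=1}^T F_t(\vy^{(i,t)})]$ in place of $F(\vy^{(i)})$) yields the first part of the theorem at $i = L$, with the error terms $\eps\cR(T)$ and $0.5\eps^2\beta D^2 T$ each accumulating linearly in $L$. The second part then follows by plugging in $L = \lfloor \ln 2/\eps \rfloor$ and reusing the algebraic manipulation from the proof of Theorem~\ref{thm:Offline} to lower-bound $(1-2\eps)^{L-1}[(1+\eps)^L-1]$ by $1/4 - 3\eps$, and then substituting $\eps = 1/\sqrt{T}$ together with the Regularized-Follow-the-Leader bound $\cR(T) \leq DG\sqrt{2T}$; since $\eps L \leq \ln 2 < 1$, the two error contributions become at most $\ln 2 \cdot DG\sqrt{2T} \leq DG\sqrt{T}$ and $0.5 \ln 2 \cdot \beta D^2 \sqrt{T} \leq \beta D^2 \sqrt{T}$, which combine into the claimed $(G + \beta D) D \sqrt{T}$ bound. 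I expect the only real obstacle to be the measurability/unbiasedness argument in the previous paragraph; once that conditioning step is set up cleanly, everything else is a term-by-term translation of the offline proof into expectations summed over time.
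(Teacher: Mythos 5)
Your proposal follows essentially the same route as the paper: you establish the per-step smoothness/chain-rule inequality, sum over $t$ and invoke the regret guarantee of $\cE_i$ together with an unbiasedness/tower-property argument to pass from the estimators $\vg^{(i,t)}$ to the true gradients, combine with Lemma~\ref{lem:local_search_bound}, Lemma~\ref{lem:norm_loss} and the time-step-local extension of Observation~\ref{Observation:norm} to obtain the online analogue of Lemma~\ref{lem:offline}, and then rerun the induction from Theorem~\ref{thm:Offline} with the error terms $\eps\cR(T)$ and $0.5\eps^2\beta D^2 T$ accumulating linearly in $i$. This is exactly the decomposition the paper uses (Lemma~\ref{lem:online}, Lemma~\ref{lem:sum}, Corollary~\ref{cor:online}, then the remark that the offline induction applies verbatim), and your final numerical substitution also matches.
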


\noindent \textbf{Remark:} In the last theorem we have set $\eps$ to a value of $1 / \sqrt{T}$, which requires pre-knowledge of $T$. This can be avoided by using a dynamic value for $\eps$ that changes as a function of the number of time slots that have already passed.

We begin the proof of Theorem~\ref{thm:Online} by observing that a repetition of the first half of the proof of Lemma~\ref{lem:offline} leads to the following lemma.
\begin{lemma} \label{lem:online}
For every two integers $1 \leq t \leq T$ and $1\leq i\leq L$, $F_t(\vy^{(i,t)})\geq F_t(\vy^{(i - 1,t)}) + \eps\cdot\langle \vs^{(i,t)}-\vy^{(i-1,t)},\nabla F_t(\vy^{(i-1,t)}\rangle-0.5\eps^2\beta D^2$.
\end{lemma}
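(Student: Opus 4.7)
The plan is to mirror the first half of the proof of Lemma~\ref{lem:offline} verbatim, substituting $F_t$ for $F$ and the triple $(\vy^{(i-1,t)}, \vs^{(i,t)}, \vy^{(i,t)})$ for $(\vy^{(i-1)}, \vs^{(i)}, \vy^{(i)})$. Concretely, first I would use the update rule $\vy^{(i,t)} = (1-\eps)\vy^{(i-1,t)} + \eps \vs^{(i,t)}$ together with the fundamental theorem of calculus and the chain rule to write
\[
F_t(\vy^{(i,t)}) - F_t(\vy^{(i-1,t)}) = \int_0^\eps \langle \vs^{(i,t)} - \vy^{(i-1,t)},\ \nabla F_t((1-r)\vy^{(i-1,t)} + r\vs^{(i,t)})\rangle\, dr.
\]

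Next, I would invoke $\beta$-smoothness of $F_t$ to control the discrepancy between the gradient at the interpolated point and the gradient at $\vy^{(i-1,t)}$. Since both $\vy^{(i-1,t)}$ and $\vs^{(i,t)}$ lie in $\cK$, their distance is at most the diameter $D$, so $\|\nabla F_t((1-r)\vy^{(i-1,t)} + r\vs^{(i,t)}) - \nabla F_t(\vy^{(i-1,t)})\|_2 \leq \beta r \|\vs^{(i,t)} - \vy^{(i-1,t)}\|_2 \leq \beta r D$. Combined with Cauchy--Schwarz, this yields a pointwise lower bound on the integrand of $\langle \vs^{(i,t)} - \vy^{(i-1,t)},\ \nabla F_t(\vy^{(i-1,t)})\rangle - r\beta D^2$. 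Integrating $r$ over $[0,\eps]$ produces the claimed inequality, with the $-0.5\eps^2\beta D^2$ term arising from $\int_0^\eps r\beta D^2\,dr$.

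There is no real obstacle here: the same computation already appeared in the proof of Lemma~\ref{lem:offline}, and crucially, the offline argument only invoked the optimality of $\vs^{(i)}$ in its \emph{second} half, when $\vs^{(i)}$ was replaced by $\vo$ via Lemma~\ref{lem:local_search_bound} and Lemma~\ref{lem:norm_loss}. Since Lemma~\ref{lem:online} keeps $\vs^{(i,t)}$ explicitly in the bound, no properties of how $\vs^{(i,t)}$ was chosen are needed; in particular, the fact that $\vs^{(i,t)}$ comes from an online linear-optimization oracle $\cE_i$ rather than from an exact maximization is irrelevant at this stage. Deferring the replacement of $\vs^{(i,t)}$ by a reference point is precisely what enables the bound to later be summed over $t$ and handed to the no-regret guarantee of $\cE_i$, which will play the role that pointwise optimality played in the offline analysis.
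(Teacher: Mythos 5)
Your proposal is correct and coincides with the paper's own treatment: the paper simply observes that Lemma~\ref{lem:online} follows by "a repetition of the first half of the proof of Lemma~\ref{lem:offline}," which is exactly the chain-rule/fundamental-theorem-of-calculus computation followed by the $\beta$-smoothness and Cauchy--Schwarz bound that you spell out. Your remark that the optimality of $\vs^{(i,t)}$ is irrelevant here, and is deliberately deferred so it can be replaced by the no-regret guarantee of $\cE_i$ when summing over $t$, accurately reflects the intended structure of the argument.
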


Using the guarantee of $\cE_i$, it is possible to get the following lemma from the previous one.

\begin{lemma} \label{lem:sum}
For every integer $1\leq i\leq L$, $\bE[\sum_{t = 1}^T F_t(\vy^{(i,t)})]\geq \bE[\sum_{t = 1}^T F_t(\vy^{(i - 1,t)}) + \eps \cdot \sum_{t = 1}^T\langle \vo -\vy^{(i-1,t)},\nabla F_t(\vy^{(i-1,t)})\rangle] - \eps \cdot \cR(T) - 0.5\eps^2\beta D^2T$.
\end{lemma}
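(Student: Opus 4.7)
The plan is to sum Lemma~\ref{lem:online} over $t \in [T]$, take expectations, and then appeal to the regret guarantee of the linear-optimization subroutine $\cE_i$ (with $\vo$ as the comparator) to replace the algorithm's chosen direction $\vs^{(i,t)}$ by $\vo$ at the price of an additive $\cR(T)$ term.

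First, summing the inequality from Lemma~\ref{lem:online} over $t$ and taking expectations immediately yields
\[
    \bE\!\left[\sum_{t=1}^T F_t(\vy^{(i,t)})\right]
    \geq
    \bE\!\left[\sum_{t=1}^T F_t(\vy^{(i-1,t)})\right]
    + \eps\cdot \bE\!\left[\sum_{t=1}^T \langle \vs^{(i,t)} - \vy^{(i-1,t)}, \nabla F_t(\vy^{(i-1,t)})\rangle\right]
    - 0.5\eps^2\beta D^2 T,
\]
so the task reduces to lower bounding $\bE[\sum_t \langle \vs^{(i,t)}, \nabla F_t(\vy^{(i-1,t)})\rangle]$ by $\bE[\sum_t \langle \vo, \nabla F_t(\vy^{(i-1,t)})\rangle] - \cR(T)$.

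Next, I would invoke the regret guarantee of $\cE_i$ with adversarial vectors $\vd^{(t)} = \vg^{(i,t)}$, which (choosing the fixed comparator $\vx = \vo$ in the $\max_{\vx\in \cK}$) gives $\bE[\sum_t \langle \vs^{(i,t)}, \vg^{(i,t)}\rangle] \geq \bE[\sum_t \langle \vo, \vg^{(i,t)}\rangle] - \cR(T)$. To translate this into a statement about the true gradient $\nabla F_t(\vy^{(i-1,t)})$, I would condition on the filtration $\cF^{(i-1,t)}$ generated by all randomness produced before $\vg^{(i,t)}$ is sampled within time step $t$; at that moment, $\vy^{(i-1,t)}$ has already been computed in the inner loop and $\vs^{(i,t)}$ was produced by $\cE_i$ using only information from previous time steps, so both are $\cF^{(i-1,t)}$-measurable. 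Combining this with the unbiasedness $\bE[\vg^{(i,t)} \mid \cF^{(i-1,t)}] = \nabla F_t(\vy^{(i-1,t)})$ and the tower property gives $\bE[\langle \vs^{(i,t)}, \vg^{(i,t)}\rangle] = \bE[\langle \vs^{(i,t)}, \nabla F_t(\vy^{(i-1,t)})\rangle]$, and the analogous identity for $\vo$ follows because $\vo$ is deterministic.

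Putting these pieces together, the regret bound rewrites as $\bE[\sum_t \langle \vs^{(i,t)} - \vy^{(i-1,t)}, \nabla F_t(\vy^{(i-1,t)})\rangle] \geq \bE[\sum_t \langle \vo - \vy^{(i-1,t)}, \nabla F_t(\vy^{(i-1,t)})\rangle] - \cR(T)$; substituting this into the first display and multiplying the regret slack by $\eps$ yields the claimed inequality. The main obstacle is the measurability bookkeeping: one must carefully verify that $\vg^{(i,t)}$ is conditionally independent of both $\vs^{(i,t)}$ and $\vy^{(i-1,t)}$ given the history, since otherwise unbiasedness alone would not justify pulling the gradient inside the inner product. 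Granting this, the lemma follows by a direct combination of Lemma~\ref{lem:online} with the black-box regret guarantee.
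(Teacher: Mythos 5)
Your proof is correct and follows essentially the same route as the paper's: sum Lemma~\ref{lem:online} over $t$, invoke the regret guarantee of $\cE_i$ with comparator $\vo$ on the estimated gradients $\vg^{(i,t)}$, and then use the unbiasedness of $\vg^{(i,t)}$ together with the fact that $\vs^{(i,t)}$ and $\vy^{(i-1,t)}$ are determined before $\vg^{(i,t)}$ is drawn to swap the estimator for the true gradient inside the expectation. The paper phrases this last step as showing $\bE[\langle \vs^{(i,t)},\nabla F_t(\vy^{(i-1,t)}) -\vg^{(i, t)}\rangle]=0$ via conditioning on $\vs^{(i,t)},\vy^{(i-1,t)}$, which is exactly your tower-property argument in slightly different notation.
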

\begin{proof}
Summing up Lemma~\ref{lem:online} over all $t$ values, we get
\begin{align*}
	\sum_{t = 1}^T F_t(\vy^{(i,t)})
	\geq{} &
	\sum_{t = 1}^T F_t(\vy^{(i - 1,t)}) + \eps\cdot \sum_{t = 1}^T\langle \vs^{(i,t)}-\vy^{(i-1,t)},\nabla F_t(\vy^{(i-1,t)})\rangle-0.5\eps^2\beta D^2T\\
	={} &
	\sum_{t = 1}^T F_t(\vy^{(i - 1,t)}) + \eps\cdot \left[\sum_{t = 1}^T\langle \vs^{(i,t)},\vg^{(i, t)}\rangle + \sum_{t = 1}^T\langle \vs^{(i,t)},\nabla F_t(\vy^{(i-1,t)}) -\vg^{(i, t)}\rangle \right.\\&\mspace{100mu}\left.- \sum_{t = 1}^T\langle \vy^{(i-1,t)},\nabla F_t(\vy^{(i-1,t)})\rangle\right] -0.5\eps^2\beta D^2T
	\enspace.
\end{align*}
Additionally, since $\vg^{(i, t)}$ is independent of $\vs^{(i,t)}$, by the guarantee of $\cE_i$,
\[
	\bE\left[\sum_{t = 1}^T\langle \vs^{(i,t)},\vg^{(i, t)}\rangle\right]
	\geq
	\bE\left[\sum_{t = 1}^T\langle \vo,\vg^{(i, t)}\rangle\right] - \cR(T)
	\enspace.
\]
Finally, since $\vg^{(i, t)}$ is chosen after $\vy^{(i-1,t)}$,
\begin{align*}
	\bE[\langle \vs^{(i,t)},\nabla F_t(\vy^{(i-1,t)}) -\vg^{(i, t)}\rangle \mid \vs^{(i, t)}, \vy^{(i-1,t)}]
	={} &
	\langle \vs^{(i,t)}, \nabla F_t(\vy^{(i-1,t)}) - \bE[\vg^{(i, t)} \mid \vy^{(i-1,t)}]\rangle\\
	={} &
	\langle \vs^{(i,t)}, \nabla F_t(\vy^{(i-1,t)}) - \nabla F_t(\vy^{(i-1,t)})\rangle
	=
	0
	\enspace,
\end{align*}
which by the law of total expectation implies $\bE[\langle \vs^{(i,t)},\nabla F_t(\vy^{(i-1,t)}) -\vg^{(i, t)}\rangle] = 0$.

Combining all the above inequalities yields
\begin{align*}
	\bE\left[\sum_{t = 1}^T F_t(\vy^{(i,t)})\right]
	\geq{} \mspace{-120mu}&\mspace{120mu}
	\bE\left[\sum_{t = 1}^T F_t(\vy^{(i - 1,t)})\right] + \eps\cdot \left\{\sum_{t = 1}^T\langle o, \bE[\vg^{(i, t)}]\rangle - \cR(T) \right.\\&\mspace{220mu}\left.- \bE\left[\sum_{t = 1}^T\langle \vy^{(i-1,t)},\nabla F_t(\vy^{(i-1,t)})\rangle\right]\right\} -0.5\eps^2\beta D^2T\\
	={} &
	\bE\left[\sum_{t = 1}^T F_t(\vy^{(i - 1,t)}) + \eps\cdot \sum_{t = 1}^T\langle o - \vy^{(i-1,t)}, \nabla F_t(\vy^{(i-1,t)})\rangle\right] - \eps \cdot \cR(T) -0.5\eps^2\beta D^2T
	\enspace.
	\qedhere
\end{align*}
\end{proof}

\begin{corollary} \label{cor:online}
For every integer $1\leq i\leq L$, $\bE[\sum_{t = 1}^T F_t(\vy^{(i,t)})]\geq \bE[(1 - 2\eps) \cdot \sum_{t = 1}^T F_t(\vy^{(i - 1,t)}) + \eps(1-\eps)^{i - 1} \cdot \sum_{t = 1}^T (1 - \|\vy^{(0, t)}\|_\infty) \cdot F_t(\vo)] - \eps \cdot \cR(T) - 0.5\eps^2\beta D^2T$.
\end{corollary}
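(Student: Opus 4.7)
The plan is to follow the same template as the second half of the proof of Lemma~\ref{lem:offline}, with Lemma~\ref{lem:sum} playing the role that the smoothness-based inequality played in the offline setting. The starting point is the bound on $\bE[\sum_{t = 1}^T F_t(\vy^{(i,t)})]$ given by Lemma~\ref{lem:sum}, which features the linear terms $\sum_{t=1}^T \langle \vo - \vy^{(i-1,t)}, \nabla F_t(\vy^{(i-1,t)})\rangle$. The goal is to lower-bound each of these inner products by a quantity involving $F_t(\vo)$ and $F_t(\vy^{(i-1,t)})$, then collect the resulting terms to match the form of the corollary.

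First I would apply Lemma~\ref{lem:local_search_bound} pointwise, with $\vx = \vy^{(i-1,t)}$ and $\vy = \vo$, to obtain
\[
    \langle \nabla F_t(\vy^{(i-1,t)}), \vo - \vy^{(i-1,t)}\rangle \geq F_t(\vo \vee \vy^{(i-1,t)}) + F_t(\vo \wedge \vy^{(i-1,t)}) - 2F_t(\vy^{(i-1,t)}).
\]
Next I would drop the $F_t(\vo \wedge \vy^{(i-1,t)})$ term using non-negativity of $F_t$, and lower-bound $F_t(\vo \vee \vy^{(i-1,t)})$ using Lemma~\ref{lem:norm_loss} to get $F_t(\vo \vee \vy^{(i-1,t)}) \geq (1 - \|\vy^{(i-1,t)}\|_\infty) \cdot F_t(\vo)$. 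Then I would invoke Observation~\ref{Observation:norm}, which applies verbatim in the online setting because for each fixed $t$ the update $\vy^{(i,t)} = (1-\eps)\vy^{(i-1,t)} + \eps \vs^{(i,t)}$ is structurally identical to the offline update; this gives $1 - \|\vy^{(i-1,t)}\|_\infty \geq (1-\eps)^{i-1}(1 - \|\vy^{(0,t)}\|_\infty)$.

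Plugging the resulting bound
\[
    \langle \nabla F_t(\vy^{(i-1,t)}), \vo - \vy^{(i-1,t)}\rangle \geq (1-\eps)^{i-1}(1 - \|\vy^{(0,t)}\|_\infty) \cdot F_t(\vo) - 2F_t(\vy^{(i-1,t)})
\]
into Lemma~\ref{lem:sum} (after summing over $t$ and taking expectations), the $-2\eps \cdot F_t(\vy^{(i-1,t)})$ terms combine with the existing $F_t(\vy^{(i-1,t)})$ terms to produce the coefficient $(1 - 2\eps)$, and the remaining piece is exactly $\eps(1-\eps)^{i-1} \sum_{t=1}^T (1 - \|\vy^{(0,t)}\|_\infty) \cdot F_t(\vo)$. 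This yields the claimed inequality, with the regret and smoothness error terms $-\eps \cdot \cR(T) - 0.5 \eps^2 \beta D^2 T$ inherited unchanged from Lemma~\ref{lem:sum}.

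There is no real obstacle here: the derivation is a mechanical translation of the offline argument, with the only subtlety being the need to verify that Observation~\ref{Observation:norm} still applies per time step $t$. The proof is essentially a one-line computation starting from Lemma~\ref{lem:sum} and rearranging, so I would present it as a short chain of inequalities ending with the rearrangement that isolates $(1-2\eps) \sum_t F_t(\vy^{(i-1,t)})$.
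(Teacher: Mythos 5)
Your proposal matches the paper's proof essentially line for line: start from Lemma~\ref{lem:sum}, lower-bound each inner product via Lemma~\ref{lem:local_search_bound}, drop the $F_t(\vo\wedge\vy^{(i-1,t)})$ term by non-negativity, apply Lemma~\ref{lem:norm_loss} together with the per-time-step extension of Observation~\ref{Observation:norm}, and rearrange. The paper even flags the same subtlety you do, that Observation~\ref{Observation:norm} carries over verbatim because the update for each fixed $t$ is structurally identical to the offline one.
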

\begin{proof}
To see why this corollary follows from Lemma~\ref{lem:sum}, it suffices to observe that, for every integer $1 \leq t \leq T$,
\begin{align*}
	\langle \vo -\vy^{(i-1,t)},\nabla F_t(\vy^{(i-1,t)})\rangle
	\geq{} &
	F_t(\vo\vee\vy^{(i-1,t)})+F_t(\vo\wedge\vy^{(i-1,t)})-2F_t(\vy^{(i-1,t)})\\
	\geq{} &
	F_t(\vo\vee\vy^{(i-1,t)})-2F_t(\vy^{(i-1)})\\
	\geq{} &
	(1-\eps)^{i - 1}\cdot (1 - \|\vy^{(0,t)}\|_\infty) \cdot F_t(\vo)-2F_t(\vy^{(i-1,t)})
	\enspace,
\end{align*}
where the first inequality follows from Lemma~\ref{lem:local_search_bound}, the second inequality follows from the non-negativity of $F_t$, and the last inequality follows from Lemma~\ref{lem:norm_loss} and the observation that the proof of Observation~\ref{Observation:norm} extends to Algorithm~\ref{alg:Online} and shows that $1 - \|y^{(i, t)}\|_\infty \leq (1-\eps)^{i}\cdot(1-\|\vy^{(0, t)}\|_\infty)$.
\end{proof}

One can observe that Corollary~\ref{cor:online} is very similar to Lemma~\ref{lem:offline} (the main difference between the two is that in Corollary~\ref{cor:online} the sum $\sum_{t = 1}^T F_t$ replaces the function $F$ from Lemma~\ref{lem:offline}). This similarity means that the proof of Theorem~\ref{thm:Offline} can work with Corollary~\ref{cor:online} instead of Lemma~\ref{lem:offline}, which yields Theorem~\ref{thm:Online}.

\section{Inapproximability} \label{sec:Hardness}

In this section, we prove our inapproximability result, which is given by the following theorem. Our result shows that the known offline result (reproved in Section~\ref{sec:Offline}) for maximizing a DR-submodular function subject to a general convex set is optimal. Notice that this implies that our online algorithm from Section~\ref{sec:Online} is also optimal (at least in terms of the approximation ratio) unless one allows for an exponential time complexity.

\begin{theorem} \label{thm:inapproximability}
For every two constants $h \in [0, 1)$ and $\eps > 0 $, no sub-exponential time algorithm can obtain $(\nicefrac{1}{4}(1 - h) + \eps)$-approximation for the problem of maximizing a continuously differentiable non-negative DR-submodular function $F\colon [0, 1]^n \to \nnR$ subject to a solvable polytope $\cK$ obeying $\min_{\vx \in \cK} \|\vx\|_\infty = h$. Furthermore, this is true even if we are guaranteed that $\max_{\vx \in \cK} F(\vx) = \Omega(n^{-1})$ and $F$ is $\beta$-smooth for some $\beta$ that is polynomial in $n$.
\end{theorem}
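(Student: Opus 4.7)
My approach is to apply the symmetry gap framework of Vondr\'ak~\cite{vondrak2013symmetry}, which reduces inapproximability questions of this flavour to the construction of a small, symmetric ``gadget'' instance. Given a pair $(F, \cK)$ invariant under a coordinate symmetry group $G$, the symmetry gap
\[
\gamma \;=\; \frac{\max_{\vx\in\cK^G}\bar F(\vx)}{\max_{\vx\in\cK}F(\vx)},
\]
where $\bar F$ is the $G$-average of $F$ and $\cK^G$ is the set of $G$-fixed vectors in $\cK$, upper-bounds the ratio achievable by any sub-exponential-time algorithm on the family of instances obtained by applying random coordinate permutations to $F$. Vondr\'ak's original statement is for value oracles, but an additional ``blurring'' step replaces the gadget by a continuous, smooth variant for which even gradient queries carry essentially no more information than value queries, so that the same bound applies to the smooth DR-submodular setting relevant here.

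The plan is then to exhibit a triple $(F, \cK, G)$ with $\gamma \le \tfrac{1}{4}(1-h) + o(1)$, $\min_{\vx\in\cK}\|\vx\|_\infty = h$, and the auxiliary size and smoothness conditions in the theorem. I would partition $n$ coordinates into several symmetric blocks and take $\cK = \{\vx\in[0,1]^n : \vone^{T}\vx = hn\}$; with a group $G$ acting transitively within and across blocks, the unique $G$-invariant vector in $\cK$ is $\vx = h\vone$, which also realises $\|\vx\|_\infty = h$. The function $F$ would be a low-degree polynomial in the block averages, designed so that its Hessian is entrywise non-positive (hence DR-submodular), $F\ge 0$ everywhere, its gradient is Lipschitz with polynomial constant, and $F(h\vone)$ is only a $\tfrac{1}{4}(1-h)+o(1)$ fraction of $\max_{\vx\in\cK}F(\vx)$.

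The main obstacle is calibrating $F$ so that its symmetry gap is \emph{exactly} $\tfrac{1}{4}(1-h)$. A straightforward two-block construction like $F(\vx) = a(1-b)$, with $a,b$ the block averages and the polytope above, already yields a gap of $\tfrac{1-h}{2}$, off by a factor of two. The remaining factor of $\tfrac{1}{2}$ must come from the intrinsic $\tfrac{1}{2}$-inapproximability of non-monotone DR-submodular maximization over the hypercube~\cite{feige2011maximizing}. I therefore expect the correct construction to compose an ``outer'' block-concentration gadget with an ``inner'' non-monotone XOR-type gadget of the form $a+b-2ab$, so that the two losses multiply to $\tfrac{1-h}{2}\cdot\tfrac{1}{2}=\tfrac{1-h}{4}$; the challenge is to perform the composition while preserving DR-submodularity (non-positive mixed partials), non-negativity, and polynomial smoothness. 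Once such an $F$ is in hand, the remaining verifications are routine: entrywise sign-check of the Hessian, $F\ge 0$ via non-negativity of the explicit polynomial, $\max_{\vx\in\cK}F(\vx) = \Omega(n^{-1})$ from the explicit asymmetric maximiser, $\beta$-smoothness with polynomial $\beta$ (automatic for a bounded-degree polynomial on $[0,1]^n$), and solvability of $\cK$ (one linear equation plus box constraints). The theorem then follows by applying the symmetry-gap bound to the randomly permuted family, yielding a specific instance on which no sub-exponential algorithm beats $\tfrac{1}{4}(1-h)+\eps$.
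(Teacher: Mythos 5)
Your high-level plan (symmetry-gap framework, blur to get a smooth DR-submodular instance, then scramble by random coordinate permutations) is the same as the paper's, and your observation that a naive two-block ``cut'' gadget does not directly yield the $\tfrac{1}{4}(1-h)$ threshold is correct. However, the way you propose to recover the missing factor of $\tfrac12$ is not what the paper does, and as stated it has a genuine gap: you have not produced a concrete $(F,\cK,G)$ that is simultaneously $G$-invariant and has symmetry gap $\tfrac14(1-h)+o(1)$. In particular, with $\cK=\{\vx:\vone^{T}\vx=hn\}$ and $G$ acting ``transitively within and across blocks,'' the only $G$-invariant choices for $F$ on two blocks are symmetric in the block averages; $a(1-b)$ is not invariant under the block swap, and the symmetrized candidate $a+b-2ab$ gives a gap of $1-h$ (for $h\le\tfrac12$), not $\tfrac{1-h}{2}$ and certainly not $\tfrac{1-h}{4}$. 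Composing an ``outer'' block gadget with an ``inner'' XOR gadget so that the two $\tfrac12$-losses multiply is plausible-sounding but is left completely unverified, and there is no reason to expect such a product structure to survive the requirement that $F$ and $\cK$ be invariant under a single group $G$.

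The paper obtains the factor $\tfrac14$ from a different and cleaner source: the \emph{geometry of the polytope}, not a composition of gadgets. It uses Vondr\'ak's $k$-arc cut function $f_k(S)=\sum_{i=1}^k\characteristic[a_i\in S]\characteristic[b_i\notin S]$ on ground set $\{a_i,b_i\}_{i\in[k]}$, invariant under permutations of the $k$ pairs, together with the polytope $\cP_{h,k}=\mathrm{conv}(\vv^{(1)},\dotsc,\vv^{(k)},\vu)$ where $\vv^{(i)}$ selects $a_i$ and all $b_j$, $j\neq i$, and $\vu$ is the vector with $u_{a_j}=0$, $u_{b_j}=h$. This polytope is \emph{not} a hyperplane slice; it has a free parameter $d$ (the coefficient on $\vu$ in a convex combination), and after symmetrization one computes $F_k(\bar\vx)\approx d(1-d)(1-h)$, whose maximum $\tfrac14(1-h)$ at $d=\tfrac12$ is exactly the target bound. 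Your fixed-sum polytope removes this degree of freedom, so the $d(1-d)$ term never appears. Your idea also does not address verifying $\min_{\vx\in\cK}\|\vx\|_\infty=h$ for the actual gadget you would use, whereas the paper's Observation~\ref{obs:min_h} handles this for $\cP_{h,k}$ (it requires $k\ge 1/(1-h)$). To salvage your route you would need to exhibit the composed gadget explicitly, prove $G$-invariance, non-negativity, DR-submodularity, and compute its symmetry gap exactly; none of that is done, and the simplest versions fail, so the proposal as written does not constitute a proof.
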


The last part of Theorem~\ref{thm:inapproximability} specifies some additional conditions under which the inapproximability stated in the theorem still applies. These conditions are important because under them our algorithm from Section~\ref{sec:Offline} can be made to have a clean approximation guarantee of $1/4(1 - \min_{\vx \in \cK} \|\vx\|_\infty) - \eps'$, for any constant $\eps' > 0$, by choosing a polynomially small value for the parameter $\eps$ of the algorithm (to see that this is indeed the case, it is important to observe that since $\cK \subseteq [0, 1]^n$, the diameter $D$ of $\cK$ is at most $\sqrt{n}$).

The proof of Theorem~\ref{thm:inapproximability} is based on the symmetry gap framework of Vondr\'{a}k~\cite{vondrak2013symmetry}. To use this framework, we first need to choose a submodular set function $f_k$ ($k \geq 1$ is an integer parameter of the function). We choose the same function that was used by Vondr\'{a}k~\cite{vondrak2013symmetry} to prove his hardness for maximizing a submodular function subject to a matroid base constraint. Specifically, the ground set of $f_k$ is the set $\cN_k = \{a_i, b_i \mid i \in [k]\}$, and for every set $S \subseteq \cN_k$,
\[
	f_k(S)
	=
	\sum_{i = 1}^k \characteristic[a_i \in S] \cdot \characteristic[b_i \not \in S]
	\enspace.
\]
One can verify that $f_k$ is non-negative and submodular since it is the cut function of a directed graph consisting of $k$ vertex-disjoint arcs.

We now would like to convert $f_k$ into two DR-submodular functions, which we do using the following lemma of~\cite{vondrak2013symmetry}. This lemma refers to the multilinear extension of a set function $f\colon 2^\cN \to \bR$ over a ground set $\cN$. This extension is a function $F\colon [0, 1]^\cN \to \bR$ defined for every vector $\vx \in [0, 1]^\cN$ by $F(\vx) = \bE[f(\RSet(\vx))]$, where $\RSet(\vx)$ is a random subset of $\cN$ that includes every element $u \in \cN$ with probability $x_u$, independently.
\begin{lemma}[Lemma~3.2 of~\cite{vondrak2013symmetry}] \label{lem:original_continuous_versions}
Consider a function $f\colon 2^\cN \to \nnR$ invariant under a group of permutations $\cG$ on the ground set $\cN$. Let $F(\vx)$ be the multilinear extension of $f$, define $\bar{x} = \bE_{\sigma \in \cG}[\characteristic_{\sigma(\vx)}]$ and fix any $\eps' > 0$. Then, there is $\delta > 0$ and functions $\hat{F}, \hat{G} \colon [0, 1]^{\cN} \to \nnR$ (which are also symmetric with respect to $\cG$), satisfying the following:
\begin{compactenum}
	\item For all $\vx \in [0, 1]^\cN$, $\hat{G}(\vx) = \hat{F}(\bar{\vx})$.
	\item For all $\vx \in [0, 1]^\cN$, $|\hat{F}(\vx) - F(\vx)| \leq \eps'$.
	\item Whenever $\|\vx - \bar{\vx}\|_2 \leq \delta$, $\hat{F}(\vx) = \hat{G}(\vx)$ and the value depends only on $\bar{\vx}$.
	\item The first partial derivatives of $\hat{F}$ and $\hat{G}$ are absolutely continuous.\label{part:continuous}
	\item If $f$ is monotone, then, for every element $u \in \cN$, $\frac{\partial\hat{F}}{\partial x_u} \geq 0$ and $\frac{\partial\hat{G}}{\partial x_u} \geq 0$ everywhere.
	\item If $f$ is submodular then, for every two elements $u,v \in \cN$, $\frac{\partial^2\hat{F}}{\partial x_u \partial x_v} \leq 0$ and $\frac{\partial^2\hat{G}}{\partial x_u \partial x_v} \leq 0$ almost everywhere.\label{part:submodular}
\end{compactenum}
\end{lemma}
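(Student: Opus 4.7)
The plan is to construct $\hat{G}$ and $\hat{F}$ explicitly from the multilinear extension $F$ of $f$, and then verify the six listed properties one by one. Let $A$ be the doubly-stochastic matrix with entries $A_{uv} = \Pr_{\sigma \in \cG}[\sigma(u) = v]$, so $\bar{\vx} = A\vx$; because $\cG$ is a group, $A$ is symmetric and idempotent. Two structural facts I would record up front: since $f$ is $\cG$-invariant the multilinear extension already satisfies $F(\sigma(\vx)) = F(\vx)$ (so $F$ is $\cG$-symmetric), and multilinearity forces $\partial^2 F/\partial x_u^2 \equiv 0$ for every coordinate $u$.

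First I would set $\hat{G}(\vx) := F(\bar{\vx})$ and verify the required properties of $\hat{G}$ directly. Its $\cG$-symmetry is immediate since $\overline{\sigma(\vx)} = \bar{\vx}$, and property~1 will become automatic once $\hat{F}$ is defined to agree with $\hat{G}$ near the invariant subspace. For the cross-partial condition~\ref{part:submodular}, the chain rule gives
\[
\frac{\partial^2 \hat{G}}{\partial x_u \partial x_v}(\vx) = \sum_{w, w'} A_{wu} A_{w'v} \cdot \frac{\partial^2 F}{\partial x_w \partial x_{w'}}(\bar{\vx}),
\]
where the $w = w'$ diagonal vanishes by multilinearity and the off-diagonal terms combine non-positive Hessian entries of $F$ with non-negative weights. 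The monotonicity condition~5 is the analogous first-derivative statement.

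Next I would construct $\hat{F}$ via a $\cG$-invariant interpolation. Using continuity of $F$ on the compact cube $[0,1]^\cN$, I pick $\delta > 0$ small enough that $|F(\vx) - F(\bar{\vx})| \leq \eps'$ whenever $\|\vx - \bar{\vx}\|_2 \leq 2\delta$, fix a piecewise-linear non-decreasing $\varphi \colon \nnR \to [0,1]$ with $\varphi(t) = 0$ for $t \leq \delta^2$ and $\varphi(t) = 1$ for $t \geq 4\delta^2$, and set
\[
\hat{F}(\vx) := (1 - \varphi(\|\vx - \bar{\vx}\|_2^2)) \cdot \hat{G}(\vx) + \varphi(\|\vx - \bar{\vx}\|_2^2) \cdot F(\vx).
\]
Writing the weight as a function of $\|\vx - \bar{\vx}\|_2^2 = \vx^\top(I - A)\vx$ keeps it a piecewise-smooth function of a quadratic with constant Hessian $2(I - A)$, whose off-diagonal entries $-2A_{uv}$ are conveniently non-positive. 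Properties~1, 3, and 5 follow by inspection; $\cG$-invariance uses $\|\sigma(\vx) - \bar{\vx}\|_2 = \|\vx - \bar{\vx}\|_2$; property~2 uses the identity $\hat{F} - F = (1-\varphi)(F(\bar{\vx}) - F(\vx))$ together with the choice of $\delta$; and property~\ref{part:continuous} holds because piecewise-linearity of $\varphi$ makes $\nabla \hat{F}$ absolutely continuous.

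The main obstacle is condition~\ref{part:submodular} for $\hat{F}$. Expanding $\nabla^2 \hat{F}$ produces the favourable contribution $(1-\varphi) \nabla^2 \hat{G} + \varphi \nabla^2 F$ (whose off-diagonal entries are already non-positive), the term $(F - \hat{G}) \nabla^2 \varphi$ whose off-diagonal $(u,v)$ entry equals $-2\varphi'(t) A_{uv} (F - \hat{G})$, and the indefinite rank-one cross terms $\nabla \varphi \cdot \nabla(F - \hat{G})^\top + \nabla(F - \hat{G}) \cdot \nabla \varphi^\top$. Making the first problem term non-positive requires $F(\vx) \geq F(\bar{\vx})$, which I would establish by a ``correlation-gap''-style argument: the $\cG$-orbit of $\vx$ is a $|\cG|$-point distribution with mean $\bar{\vx}$ and constant $F$-value $F(\vx)$, and I would show $F$ is convex along the segments connecting $\bar{\vx}$ to orbit elements by noting that within each $\cG$-orbit the coordinates of $\sigma(\vx) - \bar{\vx}$ sum to zero, so the sign pattern $v_u v_v$ combines with the non-positive cross partials $\partial^2 F/\partial x_u \partial x_v$ to give $\vv^\top \nabla^2 F \vv \geq 0$. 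For the indefinite rank-one cross terms, the plan is to invoke $\beta$-smoothness to bound $\|\nabla(F - \hat{G})(\vx)\|_2 = \|\nabla F(\vx) - \nabla F(\bar{\vx})\|_2 \leq 2\beta\delta$ in the transition band, and then shrink $\delta$ further (relative to the scale of the strictly negative leading terms coming from $(1-\varphi)\nabla^2 \hat{G} + \varphi \nabla^2 F$ and from $(F-\hat{G})\nabla^2 \varphi$) until the cross terms are dominated; the absolute-continuity and ``almost everywhere'' clauses in properties~\ref{part:continuous} and~\ref{part:submodular} absorb the remaining measure-zero corners of $\varphi$.
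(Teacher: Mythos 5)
The paper does not prove this lemma; it is quoted from Vondr\'{a}k~\cite{vondrak2013symmetry}, and the only visibility into the underlying construction is the sketch inside the proof of Claim~\ref{clm:partial_derivatives_bound} in Appendix~\ref{app:missing_inapproximability}. Your overall plan --- set $\hat G(\vx)=F(\bar\vx)$ and interpolate between $\hat G$ and $F$ with a radial cutoff applied to $\|\vx-\bar\vx\|_2^2$ --- is in the same spirit as Vondr\'{a}k's, and your verification that $\hat G$ inherits non-positive cross-partials (diagonal terms vanish by multilinearity, off-diagonal terms are non-positive combinations with non-negative weights $A_{wu}A_{w'v}$) is fine. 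The problem is the step you flag as the ``main obstacle'': your patch for condition~\ref{part:submodular} for $\hat F$ rests on two claims that are both false.

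First, $F(\vx)\ge F(\bar\vx)$ does not hold in general for submodular $f$. In fact it fails for the very $f_k$ used in this paper: take $k=2$ and $\vx=(x_{a_1},x_{b_1},x_{a_2},x_{b_2})=(0,0,1,1)$. Then $F(\vx)=0\cdot 1+1\cdot 0=0$ while $\bar\vx=(\tfrac12,\tfrac12,\tfrac12,\tfrac12)$ gives $F(\bar\vx)=\tfrac12$, so $F(\vx)<F(\bar\vx)$. The ``correlation-gap'' argument you offer is also wrong: with $\vv=\vx-\bar\vx=(-\tfrac12,-\tfrac12,\tfrac12,\tfrac12)$ one computes $\vv^\top\nabla^2F\,\vv = 2\bigl[(-\tfrac12)(-\tfrac12)+(\tfrac12)(\tfrac12)\bigr]\cdot(-1)=-1<0$, the opposite sign of what you claim. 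Non-positive cross-partials plus zero coordinate sum do not control the sign of $\vv^\top\nabla^2F\,\vv$, because $v_uv_v$ can be positive for the pairs where the cross-partial is strictly negative. So the term $(F-\hat G)\nabla^2\varphi$ can have positive off-diagonal entries and there is no way to fix its sign.

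Second, the fallback of shrinking $\delta$ so that the indefinite rank-one cross-terms are dominated by ``strictly negative leading terms'' also fails, because nothing guarantees strict negativity of $(1-\varphi)\,\partial^2\hat G/\partial x_u\partial x_v+\varphi\,\partial^2 F/\partial x_u\partial x_v$: both summands may be identically zero on a given pair $(u,v)$, leaving nothing to dominate with. This is exactly the issue Vondr\'{a}k circumvents by a different device, visible in the paper's Claim~\ref{clm:partial_derivatives_bound}: he first shows that the interpolant $\tilde F$ has cross-partials within an additive error $512M|\cN|\alpha$ of the ``ideal'' ones, and then defines $\hat F=\tilde F+256M|\cN|\alpha\,J(\vx)$ with $J(\vx)=|\cN|^2+3|\cN|\|\vx\|_1-\|\vx\|_1^2$, whose cross-partials are uniformly $-2$. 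Adding the same multiple of $J$ to $G$ yields $\hat G$ and preserves properties~1 and~3. This additive, uniformly strictly submodular corrector unconditionally absorbs the small positive cross-partial errors without any sign assumption on $F-G$; that is the idea your proposal is missing, and without it condition~\ref{part:submodular} for $\hat F$ cannot be established along the route you sketch.
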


Observe that $f_k$ is invariant to exchanging the identities of $a_i$ and $b_i$ with $a_j$ and $b_j$, respectively, for any choice of $i, j \in [k]$. Therefore, we can choose $\cG$ in the last lemma as the group of permutations that can be obtained by any number of such exchanges. In the rest of this section, we assume that $\hat{F}_k$ and $\hat{G}_k$ are functions $\hat{F}$ and $\hat{G}$ obtained using Lemma~\ref{lem:original_continuous_versions} for this choice of $\cG$, $f_k$ and $\eps' = 1 / (2k)$. It is also important to note that for this choice of $\cG$ we have for every vector $\vx \in [0, 1]^{\cN_k}$ and $i \in [k]$
\[
	\bar{x}_{a_i} = \frac{1}{k} \sum_{j = 1}^k x_{a_j}
	\qquad
	\text{and}
	\qquad
	\bar{x}_{b_i} = \frac{1}{k} \sum_{j = 1}^k x_{b_j}
	\enspace.
\]

Let us now define a family of polytopes. The polytope $\cP_{h, k}$ is the convex hull of the $k + 1$ vectors $\vv^{(1)}, \vv^{(2)}, \dotsc, \vv^{(k)}$ and $\vu$ defined as follows. For every $j \in [k]$, $u_{a_j} = 0$ and $u_{b_j} = h$. For every $i, j \in [k]$,
\[
	v^{(i)}_{a_j}
	=
	\begin{cases}
		1 & \text{if $i = j$} \enspace,\\
		0 & \text{otherwise} \enspace,
	\end{cases}
	\qquad
	\text{and}
	\qquad
	v^{(i)}_{b_j}
	=
	\begin{cases}
		1 & \text{if $i \neq j$} \enspace,\\
		0 & \text{otherwise} \enspace.
	\end{cases}
\]

Using the above definitions, we can now state two instances of the problem we consider
\[
	\begin{array}{l}
		\max \hat{F}_k(\vx)\\
		\vx \in \cP_{h, k}
	\end{array}
	\qquad
	\text{and}
	\qquad
	\begin{array}{l}
		\max \hat{G}_k(\vx)\\
		\vx \in \cP_{h, k}
	\end{array}
	\enspace.
\]
Below, we refer to these instances as the \emph{basic} instances. We will see that by ``scrambling'' these instances in an appropriate way, they can be made indistinguishable. However, for that to yield Theorem~\ref{thm:inapproximability}, it is necessary to prove that the scrambled instances obey the properties assumed in the theorem, and furthermore, that there is a large gap between the optimal values of scrambled instances derived from the two basic instances. Towards this goal, we first study the properties of the basic instances themselves, and the gap between their optimal values. Let us begin with the following lemma, which gives some properties of the objective functions of the basic instances.
\begin{lemma} \label{lem:hat_functions_properties}
The functions $\hat{F}_k$ and $\hat{G}_k$ are continuously differentiable, non-negative and DR-submodular. Furthermore, they are $\beta$-smooth for a value $\beta$ that is polynomial in $k$.
\end{lemma}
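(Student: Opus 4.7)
The lemma has four claims to verify: continuous differentiability, non-negativity, DR-submodularity, and $\beta$-smoothness with $\beta$ polynomial in $k$. The first two are essentially free from Lemma~\ref{lem:original_continuous_versions}: non-negativity is built into the codomain $\nnR$ of $\hat{F}$ and $\hat{G}$, and continuous differentiability follows from part~\ref{part:continuous} of that lemma, since absolute continuity of the first partial derivatives implies in particular their continuity.

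For DR-submodularity, I would invoke the characterization stated in Section~\ref{sec:Preliminaries}: a continuously differentiable function is DR-submodular iff its gradient is coordinate-wise non-increasing, i.e., $\nabla F(\vx) \leq \nabla F(\vy)$ whenever $\vx \geq \vy$. Given part~\ref{part:submodular} of Lemma~\ref{lem:original_continuous_versions} (non-positive second partials almost everywhere) together with absolute continuity of the first partials, each coordinate $\partial \hat{F}/\partial x_u$ is a non-increasing function of every $x_v$ when the other coordinates are frozen. Comparing $\nabla \hat{F}(\vx)$ and $\nabla \hat{F}(\vy)$ by varying one coordinate at a time along the path from $\vy$ to $\vx$ then yields the desired gradient inequality.

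The main obstacle is showing $\beta$-smoothness with $\beta$ polynomial in $k$, which requires opening the black-box construction of Lemma~\ref{lem:original_continuous_versions}. Recall that in Vondr\'{a}k's construction, $\hat{F}$ is a smooth interpolation of the form $\phi(d(\vx)) G(\vx) + (1 - \phi(d(\vx))) F(\vx)$, where $F$ is the multilinear extension of $f_k$, $G(\vx) = F(\bar{\vx})$ is the symmetrization (equal to $\hat{F}$ on symmetric points), $d(\vx)$ measures distance to the symmetric orbit, and $\phi$ is a smooth cut-off that is $1$ within distance $\delta$ of the orbit and $0$ outside a slightly larger ball. The plan is: (i) compute $F(\vx) = \sum_i x_{a_i}(1 - x_{b_i})$ and $G(\vx) = k \cdot \bar{x}_{a_1}(1 - \bar{x}_{b_1})$ explicitly and observe that both have $O(1)$ operator-norm bounded Hessians, so they are each $O(1)$-smooth with gradients of Euclidean norm $O(\sqrt{k})$; (ii) use these Lipschitz bounds to check that $\delta = \Theta(k^{-3/2})$ already suffices for $|F(\vx) - G(\vx)| \leq \eps' = 1/(2k)$ and for $\|\nabla F(\vx) - \nabla G(\vx)\|_2 = O(\sqrt{k}\,\delta)$ on the interpolation region; (iii) pick $\phi$ with $|\phi'| = O(1/\delta)$ and $|\phi''| = O(1/\delta^2)$.

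Applying the product rule to $\hat{F}$ and collecting terms, the Hessian of $\hat{F}$ is controlled by $|\phi''|\cdot|F - G| = O(k^{-1}\cdot\delta^{-2}) = O(k^{2})$, plus $|\phi'|\cdot\|\nabla F - \nabla G\|_2 = O(\sqrt{k})$, plus the base Hessians of $F$ and $G$, which are $O(1)$. Hence $\beta = O(k^{2})$, which is polynomial in $k$, as claimed. The exact same computation applies verbatim to $\hat{G}_k$, completing the proof. I expect step (iii), tracking the constants through the cut-off construction, to be the most delicate piece of bookkeeping, but nothing beyond routine calculus is required once the explicit forms of $F$ and $G$ above are in hand.
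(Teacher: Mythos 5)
Your treatment of non-negativity, continuous differentiability, and DR-submodularity matches the paper's: all three follow from Lemma~\ref{lem:original_continuous_versions} (the paper delegates the coordinate-path argument you sketch to the proof of Lemma~3.1 of~\cite{vondrak2013symmetry}). The high-level strategy for smoothness — opening the black box of Vondr\'{a}k's construction — is also the paper's, which does this in Claim~\ref{clm:partial_derivatives_bound} in the appendix. The route you take inside the black box, however, is different and has a real gap.

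The paper does not model $\hat{F}$ as a direct interpolation $\phi(d(\vx))G + (1-\phi(d(\vx)))F$. Vondr\'{a}k's actual construction passes through an intermediate function $\tilde{F}$ built from $F$, $G(\vx)=F(\bar\vx)$, and $H=F-G$, for which his Lemma~3.8 already establishes the entry-wise bound
\[
\left|\frac{\partial^2 \tilde{F}}{\partial x_u \partial x_v} - \frac{\partial^2 F}{\partial x_u \partial x_v} + \phi(D(\vx))\cdot\frac{\partial^2 H}{\partial x_u \partial x_v}\right| \leq 512 M|\cN|\alpha,
\]
with $\alpha = \eps'/(2000 M|\cN|^3)$; the final $\hat{F}$ is $\tilde{F}$ plus an additive correction $256M|\cN|\alpha\, J(\vx)$ with $J(\vx)=|\cN|^2 + 3|\cN|\|\vx\|_1 - \|\vx\|_1^2$, a term your decomposition omits entirely. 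The paper simply plugs $M=k$, $|\cN|=2k$, and $\eps'=1/(2k)$ into Vondr\'{a}k's pre-established bounds (together with the $4M$, $8M$ Hessian-entry bounds on $F$, $G$, $H$ from his Lemma~3.5) and reads off the $16k+2$ entry bound. Your plan instead posits $\delta = \Theta(k^{-3/2})$ and a cut-off with $|\phi''|=O(\delta^{-2})$, then argues $|\phi''|\cdot|F-G| = O(k^2)$. This is where the gap is: (i) you assert the value of Vondr\'{a}k's $\delta$ without verifying it, and the naive product-rule bound $|\phi''|\cdot|F-G| = O(\sqrt{k}/\delta)$ is sensitive to $\delta$ — it blows up if $\delta$ is smaller than you assumed; (ii) the bound $|F-G|\leq\eps'$ you invoke is actually the bound on $|\hat{F}-F|$ from part~2 of Lemma~\ref{lem:original_continuous_versions}, not a bound on $|F-G|$ (at the vertex $\vv^{(1)}$, for instance, $F-G = 1 - 1/k$), so one needs a separate Lipschitz argument restricted to the support of $\phi''$; (iii) the $J(\vx)$ correction contributes a constant to every second partial and must be accounted for. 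Your conclusion $\beta = O(k^2)$ happens to agree with the paper's (via $|\cN|\cdot(16k+2) = O(k^2)$ as an operator-norm bound), but the route as written does not establish it. To repair the smoothness step you should, as the paper does, quote the specific Hessian-entry bounds Vondr\'{a}k derives rather than re-deriving them from a simplified mental model of the construction.
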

\newtoggle{proofsAppendix}
\togglefalse{proofsAppendix}
\begin{proof}
The non-negativity of $\hat{F}_k$ and $\hat{G}_k$ is explicitly guaranteed by Lemma~\ref{lem:original_continuous_versions}, and Part~\ref{part:continuous} of the lemma shows that $\hat{F}$ and $\hat{G}$ are also continuously differentiable. Finally, Parts~\ref{part:continuous} and~\ref{part:submodular} of Lemma~\ref{lem:original_continuous_versions} imply together that $\hat{F}_k$ and $\hat{G}_k$ are DR-submodular (see the proof of Lemma~3.1 of~\cite{vondrak2013symmetry} for a formal argument).

It remains to bound the smoothness of $\hat{F}_k$ and $\hat{G}_k$. Notice that the following claim implies that both functions are $\beta$-smooth for a $\beta$ value that is polynomial in $k$. Unfortunately, the proof of this claim is technically quite involved (and not very insightful) as it requires us to look into the proof Lemma~\ref{lem:original_continuous_versions}, and therefore, we defer the proof of this claim to Appendix~\ref{app:missing_inapproximability}.
\begin{restatable}{claim}{clmPartialDerivativesBound} \label{clm:partial_derivatives_bound}
The absolute values of the second order partial derivatives of the functions $\hat{F}_k$ and $\hat{G}_k$ are bounded by $16k + 2$ almost everywhere, and therefore, both functions are $\beta$-smooth for a $\beta$ value that is polynomial in $k$. \iftoggle{proofsAppendix}{}{\qedhere}
\end{restatable}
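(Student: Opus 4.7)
The plan is to unpack the explicit construction of $\hat{F}_k$ and $\hat{G}_k$ given in the proof of Lemma~\ref{lem:original_continuous_versions} (Lemma~3.2 of~\cite{vondrak2013symmetry}) and bound every second partial derivative pointwise. The two building blocks combined by that construction are the multilinear extension $F_k(\vx) = \sum_{i=1}^k x_{a_i}(1 - x_{b_i})$, whose only non-vanishing second partials are $\partial^2 F_k / (\partial x_{a_i}\partial x_{b_i}) = -1$, and its symmetrization $\bar{F}_k(\vx) = F_k(\bar{\vx}) = \bigl(\sum_j x_{a_j}\bigr)\bigl(1 - \tfrac{1}{k}\sum_j x_{b_j}\bigr)$, whose only non-vanishing second partials are $\partial^2 \bar{F}_k / (\partial x_{a_i}\partial x_{b_j}) = -1/k$. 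In particular, every first and second partial of $F_k$ and $\bar{F}_k$ is bounded by $1$ in absolute value on $[0,1]^{\cN_k}$.

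Next, I would carry these bounds through the smoothing step of the construction. Vondr{\'a}k writes $\hat{F}_k$ (and the analogous expression for $\hat{G}_k$) as a smooth interpolation between $F_k$ and $\bar{F}_k$ via an explicit $C^2$ cutoff $\phi$ applied to a function of $\vx - \bar{\vx}$, with first and second derivatives of $\phi$ bounded by absolute constants depending only on the parameter $\delta$, which in turn is fixed by $\eps' = 1/(2k)$. Since the map $\vx \mapsto \bar{\vx}$ is linear with matrix entries in $\{0, 1/k, 1 - 1/k\}$, the chain and product rules express each second partial of $\hat{F}_k$ as a sum of at most $O(k)$ products of bounded factors (second partials of $F_k$ or $\bar{F}_k$, first or second derivatives of $\phi$, and entries of the averaging map). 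Tracking the explicit constants that arise in Vondr{\'a}k's construction aggregates these contributions into the uniform bound $16k + 2$ stated in the claim.

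The $\beta$-smoothness conclusion then follows from a standard argument: a pointwise bound $M$ on every entry of the Hessian (almost everywhere) implies $\|\nabla \hat{F}_k(\vx)-\nabla \hat{F}_k(\vy)\|_2 \leq \|H\|_F \cdot \|\vx - \vy\|_2 \leq 2kM \cdot \|\vx - \vy\|_2$ by integrating the Hessian along the segment from $\vx$ to $\vy$. Hence $\beta \leq 2k(16k + 2) = O(k^2)$, which is polynomial in $k$ (equivalently, in $n = 2k$); the identical argument applies to $\hat{G}_k$. The main obstacle is the detailed book-keeping of derivatives through the smoothing kernel: the second partials of $F_k$ and $\bar{F}_k$ alone are $O(1)$, so the $k$-dependence in $16k + 2$ must arise from the interaction between $\phi$ and the averaging operator that produces $\bar{\vx}$, and isolating exactly where those factors of $k$ appear requires a careful inspection of the construction inside the proof of Lemma~\ref{lem:original_continuous_versions}.
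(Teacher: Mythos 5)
Your general strategy — unpack Vondr\'{a}k's construction and bound the second partials of each intermediate function — is the same as the paper's, and your explicit computation that the second partials of $F_k$ are bounded by $1$ and of $\bar{F}_k$ by $1/k$ is correct and is actually \emph{sharper} than what the paper uses. But the proposal does not constitute a proof: after setting up those $O(1)$ bounds, you write that ``tracking the explicit constants \dots aggregates these contributions into the uniform bound $16k+2$'' without doing the tracking, and then acknowledge that ``isolating exactly where those factors of $k$ appear'' is an unresolved obstacle. That obstacle is the entire content of the claim.

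Moreover, your conjectured source of the $k$-dependence is wrong, which is why you could not reconcile your $O(1)$ computation with the stated bound. You speculate that the $k$ must come from ``the interaction between $\phi$ and the averaging operator,'' but in Vondr\'{a}k's construction the smoothing step is designed to perturb the Hessian by only a tiny additive term ($512 M |\cN| \alpha = 512\eps'/(2000|\cN|^2) \le 1$), and the additional correction term $256M|\cN|\alpha \cdot J(\vx)$ with $J(\vx) = |\cN|^2 + 3|\cN|\|\vx\|_1 - \|\vx\|_1^2$ (which you omit entirely) contributes another additive $1$. Neither contributes a factor of $k$. The $k$-dependence in the paper's $16k+2$ comes instead from invoking Vondr\'{a}k's \emph{generic} Lemma~3.5 bound of $4M$ (respectively $8M$) on the second partials of $F$ (respectively $H = F - G$), where $M = \max f_k = k$; the paper does not exploit the particular structure of $f_k$ the way you do. In short: your sharper observation about $F_k$ and $\bar{F}_k$ would, if carried through the two smoothing corrections, yield a bound like $4 + 1 + 1 = 6$ rather than $16k+2$ — which would also verify the claim — but the proposal neither completes that computation nor correctly identifies what the paper actually does. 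The $\beta$-smoothness deduction at the end (Frobenius-norm bound on a $2k\times 2k$ Hessian) is fine.
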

\end{proof}

Next, we observe that the common constraint polytope of the basic instances is solvable since $\cP_{h, k}$ is a polytope over $2k$ variables defined as the convex-hall of $k + 1$ vectors. The next observation proves another property of this polytope.

\begin{observation} \label{obs:min_h}
If $k \geq 1 / (1 - h)$, $\min_{\vx \in \cP_{h, k}} \|\vx\|_\infty = h$.
\end{observation}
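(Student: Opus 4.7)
The plan is a clean two-sided argument. For the upper bound, I would simply note that $\vu \in \cP_{h,k}$ and $\|\vu\|_\infty = h$, which immediately gives $\min_{\vx \in \cP_{h,k}} \|\vx\|_\infty \leq h$. The substantive work is showing that no point of $\cP_{h,k}$ can beat this.

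For the lower bound, I would parametrize an arbitrary $\vx \in \cP_{h,k}$ as a convex combination
\[
\vx = \lambda_0 \vu + \sum_{i=1}^k \lambda_i \vv^{(i)},
\qquad
\lambda_i \geq 0,
\qquad
\sum_{i=0}^k \lambda_i = 1,
\]
and read off its coordinates from the definitions of $\vu$ and the $\vv^{(i)}$. The computation gives
\[
x_{a_j} = \lambda_j
\qquad
\text{and}
\qquad
x_{b_j} = \lambda_0 h + \sum_{i \neq j, i \geq 1} \lambda_i = 1 - \lambda_j - \lambda_0(1-h)
\]
for every $j \in [k]$. Then I would split into two cases. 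If some $\lambda_j \geq h$, we are done because $x_{a_j} = \lambda_j \geq h$. Otherwise, all $\lambda_j < h$ for $j \in [k]$, and in particular the smallest of these, indexed by $j^\star$, satisfies $\lambda_{j^\star} \leq (1 - \lambda_0)/k$ by averaging. Substituting into the formula for $x_{b_{j^\star}}$ yields
\[
x_{b_{j^\star}} - h
\;\geq\; 1 - \tfrac{1-\lambda_0}{k} - \lambda_0(1-h) - h
\;=\; (1-\lambda_0)\bigl[(1-h) - \tfrac{1}{k}\bigr],
\]
which is non-negative precisely because $\lambda_0 \leq 1$ and the assumption $k \geq 1/(1-h)$ guarantees $1/k \leq 1-h$. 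Hence $\|\vx\|_\infty \geq x_{b_{j^\star}} \geq h$, completing the lower bound.

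I do not anticipate a genuine obstacle here; the only subtlety is picking the right coordinate to lower bound in the second case, and the averaging argument on the coordinate $j^\star$ minimizing $\lambda_j$ is exactly what converts the assumption $k \geq 1/(1-h)$ into the needed slack. The role of this assumption becomes transparent from the final inequality, which is a tight linear trade-off between how ``spread'' the $\lambda_j$'s can be and how small $(1-h)$ is.
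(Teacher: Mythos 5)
Your proof is correct and takes essentially the same route as the paper: both decompose $\vx$ as a convex combination of $\vu$ and the $\vv^{(i)}$, pick the index $j^\star$ minimizing the coefficient of $\vv^{(j^\star)}$, and lower-bound the coordinate $x_{b_{j^\star}}$ using the averaging bound together with $k \geq 1/(1-h)$. The only cosmetic difference is your initial case split on whether some $\lambda_j \geq h$, which is unnecessary — your Case~2 argument already works unconditionally, since it never uses $\lambda_j < h$, and the paper runs the analogous calculation without any case split.
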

\begin{proof}
Since $\vu \in \cP_{h, k}$, $\min_{\vx \in \cP_{h, k}} \|\vx\|_\infty \leq h$. Thus, we only need to show that no point in $\cP_{h, k}$ has an infinity norm less than $h$. Recall that every point in $\cP_{h, k}$ is a convex combination $\sum_{i = 1}^k c_i \vv^{(i)} + d \vu$ (where $c_i$ is the coefficient of $\vv^{(i)}$ in the combination, and $d$ is the coefficient of $\vu$), and assume without loss of generality that $c_1 = \min \{c_1, c_2, \dotsc, c_k\}$. Then,
\[
	\left\|\sum_{i = 1}^k c_i \vv^{(i)} + d \vu\right\|_\infty
	\geq
	\sum_{i = 1}^k c_i v^{(i)}_{b_1} + d u_{b_1}
	=
	\sum_{i = 2}^k c_i + dh
	\geq
	\frac{k - 1}{k} \sum_{i = 1}^k c_i + dh
	\geq
	h \sum_{i = 1}^k c_i + dh
	=
	h
	\enspace,
\]
where the last inequality holds by the condition of the observation, and the last equality holds since the fact that $\sum_{i = 1}^k c_i \vv^{(i)} + d \vu$ is a convex combination implies $\sum_{i = 1}^k c_i + d = 1$.
\end{proof}

The last properties that we need to prove for the basic instances are about the optimal values of these instances. Specifically, we need to show that both their optimal values are significant (at least $\Omega(k^{-1})$), but there is a large gap between them. The following two lemmata show these properties, respectively.
\begin{lemma}
$\max_{\vx \in \cP_{h, k}} \hat{F}_k(\vx) = \Omega(k^{-1})$ and $\max_{\vx \in \cP_{h, k}} \hat{G}_k(\vx) = \Omega(k^{-1})$.
\end{lemma}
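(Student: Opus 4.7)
The plan is to exhibit explicit witness points in $\cP_{h,k}$ at which the two objectives are provably large, using the fact that the multilinear extension $F_k$ of $f_k$ has a simple closed form and that Lemma~\ref{lem:original_continuous_versions} guarantees $|\hat{F}_k(\vx)-F_k(\vx)|\le \eps'=1/(2k)$ everywhere. By the definition of $f_k$, for every $\vx\in[0,1]^{\cN_k}$,
\[
F_k(\vx)=\sum_{i=1}^{k} x_{a_i}\,(1-x_{b_i}),
\]
which makes all relevant values directly computable.

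For $\hat{F}_k$, I would use the vertex $\vv^{(1)}\in\cP_{h,k}$. Its coordinates satisfy $v^{(1)}_{a_1}=1$, $v^{(1)}_{b_1}=0$, and $v^{(1)}_{a_j}=0$ for $j\neq 1$, so the only surviving term in $F_k(\vv^{(1)})$ is the one with $i=1$, giving $F_k(\vv^{(1)})=1$. The uniform approximation bound then yields
\[
\hat{F}_k(\vv^{(1)})\ge F_k(\vv^{(1)})-\tfrac{1}{2k}\;=\;1-\tfrac{1}{2k}\;\ge\;\tfrac12,
\]
which is certainly $\Omega(k^{-1})$.

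For $\hat{G}_k$, I would evaluate at the same point $\vv^{(1)}$ and use the identity $\hat{G}_k(\vx)=\hat{F}_k(\bar\vx)$ from Part~1 of Lemma~\ref{lem:original_continuous_versions}. Because the symmetrization averages the $a$-coordinates among themselves (and similarly for the $b$-coordinates), I get $\bar v^{(1)}_{a_j}=\tfrac1k$ and $\bar v^{(1)}_{b_j}=\tfrac{k-1}{k}$ for every $j\in[k]$. Plugging into the closed form of $F_k$ gives
\[
F_k(\bar\vv^{(1)})=\sum_{j=1}^{k}\tfrac{1}{k}\cdot\bigl(1-\tfrac{k-1}{k}\bigr)=\tfrac{1}{k},
\]
so another application of the $1/(2k)$ approximation bound yields $\hat{G}_k(\vv^{(1)})=\hat{F}_k(\bar\vv^{(1)})\ge \tfrac{1}{k}-\tfrac{1}{2k}=\tfrac{1}{2k}=\Omega(k^{-1})$, which completes the argument.

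There is essentially no obstacle here: once one notices that $F_k$ has a transparent product form and that Lemma~\ref{lem:original_continuous_versions} was invoked with the sharp error parameter $\eps'=1/(2k)$, the only delicate point is that the $\hat G_k$ bound would be vacuous if the error $\eps'$ were not taken strictly smaller than $F_k(\bar\vv^{(1)})=1/k$; that is precisely why $\eps'$ was set to $1/(2k)$ rather than to a fixed constant. No properties of $\cP_{h,k}$ beyond $\vv^{(1)}\in\cP_{h,k}$ are used, so in particular the bound is independent of $h$.
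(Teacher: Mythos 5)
Your proof is correct and uses essentially the same approach as the paper: exhibit an explicit point in $\cP_{h,k}$ and apply the uniform $\eps'=1/(2k)$ approximation bound of Lemma~\ref{lem:original_continuous_versions} to the closed-form multilinear extension $F_k$. The paper uses a single symmetric witness $\vy=\tfrac1k\sum_{i}\vv^{(i)}$ (for which $\bar\vy=\vy$, so one computation covers both $\hat F_k$ and $\hat G_k$), whereas you use $\vv^{(1)}$ for $\hat F_k$ and observe that $\bar\vv^{(1)}$ is exactly that same symmetric point for the $\hat G_k$ bound; this even yields the stronger $\Omega(1)$ lower bound for $\hat F_k$, though only $\Omega(k^{-1})$ is needed.
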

\begin{proof}
We prove the lemma by considering the vector $\vy = \frac{1}{k} \sum_{i = 1}^k \vu^{(i)}$. Since $\vy \in \cP_{h, k}$ and $\bar{\vy} = \vy$, $\hat{F}_k(\vy)$ lower bounds both $\max_{\vx \in \cP_{h, k}} \hat{F}_k(\vx)$ and $\max_{\vx \in \cP_{h, k}} \hat{G}_k(\vx)$. Thus, it remains to show that $\hat{F}_k(\vy) = \Omega(k^{-1})$. By Lemma~\ref{lem:original_continuous_versions},
\[
	\hat{F}_k(\vy)
	\geq
	F_k(\vy) - \eps'
	=
	\sum_{i = 1}^k y_{a_i}(1 - b_i) - \eps'
	=
	\sum_{i = 1}^k \frac{1}{k} \cdot \left(1 - \left(1 - \frac{1}{k}\right)\right) - \eps'
	=
	\frac{1}{k} - \eps'
	=
	\frac{1}{2k}
	\enspace,
\]
where $F_k$ is the multilinear extension of $f_k$.
\end{proof}

\begin{lemma} \label{lem:opt_bounds}
$\max_{\vx \in \cP_{h, k}} \hat{F}_k(\vx) \geq 1 - 1/(2k)$ and $\max_{\vx \in \cP_{h, k}} \hat{G}_k(\vx) \leq (1 - h)/4 + 3/(2k)$.
\end{lemma}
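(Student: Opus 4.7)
The plan is to handle the two bounds separately. For $\max_{\vx \in \cP_{h, k}} \hat{F}_k(\vx) \geq 1 - 1/(2k)$, I would simply evaluate $\hat{F}_k$ at the vertex $\vv^{(1)} \in \cP_{h,k}$. From the definition of $\vv^{(1)}$, the multilinear extension evaluates to $F_k(\vv^{(1)}) = \sum_{j = 1}^k v^{(1)}_{a_j}(1 - v^{(1)}_{b_j})$, and only the $j = 1$ summand is non-zero and it equals $1 \cdot (1 - 0) = 1$. Applying part~(2) of Lemma~\ref{lem:original_continuous_versions} with $\eps' = 1/(2k)$ then yields $\hat{F}_k(\vv^{(1)}) \geq F_k(\vv^{(1)}) - 1/(2k) = 1 - 1/(2k)$.

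For the upper bound on $\hat{G}_k$, I would exploit the identity $\hat{G}_k(\vx) = \hat{F}_k(\bar{\vx})$ from part~(1) of Lemma~\ref{lem:original_continuous_versions} to reduce the problem to maximizing $\hat{F}_k$ over the image of $\cP_{h,k}$ under the averaging map $\vx \mapsto \bar{\vx}$. Writing an arbitrary $\vx \in \cP_{h,k}$ as $\vx = \sum_{i=1}^k c_i \vv^{(i)} + d \vu$ with $c_i, d \geq 0$ and $\sum_i c_i + d = 1$, a direct computation using the coordinates of $\vv^{(i)}$ and $\vu$ together with the explicit formulas for $\bar{x}_{a_j}$ and $\bar{x}_{b_j}$ given just before the lemma shows that $\bar{\vx}$ depends only on the scalar $t := 1 - d \in [0, 1]$: concretely, $\bar{x}_{a_j} = t/k$ and $\bar{x}_{b_j} = t(1 - 1/k) + (1 - t)h$ for every $j \in [k]$. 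Combining this with part~(2) of Lemma~\ref{lem:original_continuous_versions} (which provides $\hat{F}_k(\bar{\vx}) \leq F_k(\bar{\vx}) + 1/(2k)$), and substituting $\bar{\vx}$ into the multilinear extension, reduces everything to upper-bounding the univariate quadratic $q(t) := t(1 - h) - t^2\bigl((1-h) - 1/k\bigr)$ on $[0, 1]$.

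The remaining step is a short case split on the sign of $(1-h) - 1/k$ and the location of the vertex of $q$. If $(1-h) - 1/k \leq 0$, or if the unconstrained vertex $t^* = (1-h)/[2((1-h) - 1/k)]$ satisfies $t^* > 1$ (both amounting to $k < 2/(1-h)$), then $q$ attains its maximum on $[0, 1]$ at an endpoint whose value is at most $q(1) = 1/k$, so $\hat{G}_k(\vx) \leq 1/k + 1/(2k) \leq (1-h)/4 + 3/(2k)$ holds trivially. In the main case $k \geq 2/(1-h)$, plugging $t^*$ into $q$ gives vertex value $(1-h)^2 / [4((1-h) - 1/k)] = \frac{1-h}{4} \cdot \frac{1}{1 - 1/(k(1-h))}$, and since $k(1-h) \geq 2$ we have $1/(k(1-h)) \leq 1/2$, so the elementary inequality $1/(1-y) \leq 1 + 2y$ valid on $[0, 1/2]$ upper-bounds this vertex value by $(1-h)/4 + 1/(2k)$; adding the $1/(2k)$ slack from Lemma~\ref{lem:original_continuous_versions} yields the claimed $\hat{G}_k(\vx) \leq (1-h)/4 + 1/k \leq (1-h)/4 + 3/(2k)$.

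The main obstacle is the second bound: one must recognize that the image of $\cP_{h,k}$ under averaging collapses to a one-parameter family indexed by $t = 1 - d$, and choose a tight enough estimate of the resulting quadratic's maximum so that, together with the $\eps' = 1/(2k)$ error from Lemma~\ref{lem:original_continuous_versions}, the total stays inside the advertised $3/(2k)$ slack. The first bound, by contrast, is just a single vertex evaluation requiring no work beyond unfolding definitions.
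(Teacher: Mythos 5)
Your proof is correct, and the overall strategy matches the paper's: the lower bound is the same single evaluation at $\vv^{(1)}$, and the upper bound uses the same reduction $\hat{G}_k(\vx) = \hat{F}_k(\bar{\vx}) \leq F_k(\bar{\vx}) + \eps'$ followed by computing $F_k(\bar{\vx})$ as a function of the single parameter $d$ (your $t = 1-d$). The two arguments diverge only in the last arithmetic step. The paper factors the expression as $F_k(\bar{\vx}) = d(1-d)(1-h) + (1-d)^2/k$ and then bounds the two pieces independently via $(1-d)^2 \leq 1$ and $d(1-d) \leq 1/4$, which gives $(1-h)/4 + 1/k$ in two lines with no case analysis. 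You instead keep the quadratic in the form $q(t) = t(1-h) - t^2\bigl((1-h)-1/k\bigr)$, locate its vertex, split on whether the vertex lies in $[0,1]$, and invoke $1/(1-y) \leq 1 + 2y$ for $y \in [0,1/2]$. This is sound but substantially more work for the same bound; the additive decomposition the paper uses sidesteps both the case split and the reciprocal estimate. If you rewrite your $q(t)$ as $t(1-t)(1-h) + t^2/k$, the simpler route becomes immediate.
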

\begin{proof}
To prove the first part of the lemma, it suffices to observe that $\vv^{(1)} \in \cP_{h, k}$ and
\[
	\hat{F}_k(\vv^{(1)})
	\geq
	F_k(\vv^{(1)}) - \eps'
	=
	f_k(\{a_1\} \cup \{b_i \mid 2 \leq i \leq k\}) - \eps'
	=
	1 - 1/(2k)
	\enspace,
\]
where $F_k$ is the multilinear extension of $f_k$.

Let us now prove the second part of the lemma. Fix an arbitrary vector $\vx \in \cP_{h, k}$, and let $d$ be the coefficient of $\vu$ in the convex combination that shows that $\vx$ belongs to $\cP_{h, k}$. Then,
\[
	\sum_{i = 1}^k x_{a_i} = 1 - d
	\qquad
	\text{and}
	\qquad
	\sum_{i = 1}^k x_{b_i}
	=
	dkh + (1 - d)(k - 1)
	=
	k(dh + 1 - d) + d - 1
	\enspace.
\]
Thus,
\begin{align*}
	\hat{G}_k(\vx)
	={} &
	\hat{F}_k(\bar{\vx})
	\leq
	F_k(\bar{\vx}) + \eps'
	=
	\sum_{1 = 1}^k \frac{\sum_{i = 1}^k x_{a_i}}{k}\left(1 - \frac{\sum_{i = 1}^k x_{b_i}}{k}\right) + \eps'\\
	={} &
	(1 - d)\left(d - dh + \frac{1 - d}{k}\right) + \eps'
	\leq
	d(1 - d)(1 - h) + \frac{1}{k} + \eps'
	\leq
	\frac{1 - h}{4} + \frac{3}{2k}
	\enspace.
	\qedhere
\end{align*}
\end{proof}

As mentioned above, we now would like to describe how the two basic instances are scrambled. Intuitively, the constraint polytope $\cK_{h,k,\ell}$ of a scrambled instance is obtained by combining $\ell$ orthogonal instances of $\cP_{h,k}$. Each element $a_i$ or $b_i$ has a copy in all the orthogonal instances, and the objective function treats every such copy as representing $\ell^{-1}$ of the original element. For example, if one would like to construct a solution assigning a value of $1/2$ to $a_i$, then the copies of $a_i$ in $\cK_{h,k,\ell}$ should get an average value of $1/2$. By randomly permuting the names of the elements in each orthogonal instance of $\cP_{h,k}$, we make it difficult for the algorithm to construct solutions that do not correspond to symmetric vectors in $\cP_{h,k}$. More formally, the constraint polytope $\cK_{h,k,\ell}$ is a subset of $[0, 1]^{\cM_{k,\ell}}$, where
\[
	\cM_{k,\ell}
	=
	\{a_{i, j}, b_{i, j} \mid i \in [k], j \in [\ell]\}
	\enspace.
\]
A vector $\vx \in [0, 1]^{\cM_{k, \ell}}$ belongs to $\cK_{h,k,\ell}$ if for every $j \in [\ell]$ we have $\vx^{(j)} \in \cP_{h, k}$, where the vector $\vx^{(j)} \in [0, 1]^{\cN_k}$ is defined by
\[
	\vx^{(j)}_{a_i} = \vx_{a_{i, j}}
	\qquad
	\text{and}
	\qquad
	\vx^{(j)}_{b_i} = \vx_{b_{i, j}}
	\enspace.
\]

The following lemma is an immediate corollary of the definition of $\cK_{h,k,\ell}$, Observation~\ref{obs:min_h} and the discussion before this observation.
\begin{lemma} \label{lem:min_h}
When $k \geq 1 / (1 - h)$, $\cK_{h,k,\ell}$ is solvable and $\max_{\vx \in \cK_{h, k,\ell}} \|\vx\|_\infty = h$.
\end{lemma}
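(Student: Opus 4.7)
The plan is to deduce both halves of the lemma from the product-of-blocks structure of $\cK_{h,k,\ell}$: by the defining condition preceding the lemma, $\vx \in \cK_{h,k,\ell}$ iff each of the $\ell$ coordinate blocks $\vx^{(1)}, \dotsc, \vx^{(\ell)}$ lies in $\cP_{h,k}$, so after a harmless coordinate relabeling $\cK_{h,k,\ell}$ is the Cartesian product $\cP_{h,k}^{\ell}$. With this in hand, both claims of the lemma reduce to facts about a single copy of $\cP_{h,k}$ that are already available in the excerpt.

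For solvability, the paper records just above Observation~\ref{obs:min_h} that $\cP_{h,k}$ itself is a solvable polytope (it is the convex hull of the $k + 1$ explicit vectors $\vv^{(1)}, \dotsc, \vv^{(k)}, \vu$, so any linear objective can be optimized by scoring these $k + 1$ vertices). Because linear optimization over a Cartesian product decomposes into $\ell$ independent linear programs, one on each factor, whose block-wise optima can be concatenated, solvability of $\cK_{h,k,\ell}$ follows immediately in polynomial time.

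For the stated identity $\max_{\vx \in \cK_{h,k,\ell}} \|\vx\|_\infty = h$, the block structure gives $\|\vx\|_\infty = \max_{j \in [\ell]} \|\vx^{(j)}\|_\infty$ for every feasible $\vx$. However, the feasible point whose every block equals $\vv^{(1)} \in \cP_{h,k}$ has $\ell_\infty$-norm $1$, so the literal $\max$ wording cannot hold. The reading consistent with the lemma's label \texttt{lem:min\_h}, with Observation~\ref{obs:min_h} (a minimum statement about $\cP_{h,k}$), and with the way the lemma is invoked in Theorem~\ref{thm:inapproximability} (which assumes a solvable polytope $\cK$ with $\min_{\vx \in \cK} \|\vx\|_\infty = h$) is that the intended claim is $\min_{\vx \in \cK_{h,k,\ell}} \|\vx\|_\infty = h$. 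Under this reading, the lower bound $\|\vx\|_\infty \geq h$ for every feasible $\vx$ follows by applying Observation~\ref{obs:min_h} to each block $\vx^{(j)} \in \cP_{h,k}$ — the hypothesis $k \geq 1/(1-h)$ is exactly what activates that observation — and achievability is witnessed by the feasible point whose every block equals $\vu$, which has $\ell_\infty$-norm $\|\vu\|_\infty = h$.

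The underlying mathematics therefore presents no substantive obstacle: both parts are one-line consequences of the already-established factor-polytope facts. The item I would flag in the formal write-up is the wording, correcting $\max$ to $\min$ so that downstream references to the lemma remain consistent with the statement of Theorem~\ref{thm:inapproximability}.
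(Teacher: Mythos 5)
Your proof is correct and matches the approach the paper intends: the paper simply asserts the lemma is an ``immediate corollary'' of the block definition of $\cK_{h,k,\ell}$, Observation~\ref{obs:min_h}, and the solvability discussion preceding it, without writing out the details you supply. You are also right that the stated $\max$ is a typo for $\min$, as confirmed by the label \texttt{lem:min\_h}, by the fact that Theorem~\ref{thm:inapproximability} requires a polytope with $\min_{\vx\in\cK}\|\vx\|_\infty=h$, and by your counterexample (the feasible point whose every block equals $\vv^{(1)}$ has $\ell_\infty$-norm $1$, so the literal $\max$ reading fails).
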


The objective functions of the scrambled instances are formally defined using a vector $\vsigma$ of $\ell$ permutations over $[k]$ (in other words, $\sigma_1, \sigma_2, \dotsc, \sigma_\ell$ are all permutations over $[k]$). Given such a vector $\vsigma$ and a vector $\vx \in [0, 1]^{\cM_{k, \ell}}$, we define the vector
$\vx^{(\vsigma)} \in [0, 1]^{\cN_k}$ as follows.
\[
	\vx^{(\vsigma)}_{a_i}
	=
	\tfrac{1}{\ell} \sum_{j = 1}^\ell \vx_{a_{\sigma_j(i), j}}
	\qquad
	\text{and}
	\qquad
	\vx^{(\vsigma)}_{b_i}
	=
	\tfrac{1}{\ell} \sum_{j = 1}^\ell \vx_{b_{\sigma_j(i), j}}
	\enspace.
\]
Then, the functions $\bar{F}_{k, \vsigma}\colon [0, 1]^{\cM_{k, \ell}} \to \nnR$ and $\bar{G}_{k, \vsigma}\colon [0, 1]^{\cM_{k, \ell}} \to \nnR$ are defined for every vector $\vx \in [0, 1]^{\cM_{k, \ell}}$ by
\[
	\bar{F}_{k, \vsigma}(\vx)
	=
	\hat{F}(\vx^{(\vsigma)})
	\qquad
	\text{and}
	\qquad
	\bar{G}_{k, \vsigma}(\vx)
	=
	\hat{G}(\vx^{(\vsigma)})
	\enspace.
\]

The following lemma shows that the functions $\bar{F}_{k, \vsigma}$ and $\bar{G}_{k, \vsigma}$ inherit all the good properties of $\hat{F}_k$ and $\hat{G}_k$ promised by Lemma~\ref{lem:hat_functions_properties}. Since the proof of this lemma is technical and quite straightforward given Lemma~\ref{lem:hat_functions_properties}, we defer it to Appendix~\ref{app:missing_inapproximability}.
\begin{restatable}{lemma}{lemScrambledObjectives} \label{lem:scrambled_objectives}
The functions $\bar{F}_{k, \vsigma}$ and $\bar{G}_{k, \vsigma}$ are continuously differentiable, non-negative and DR-submodular. Furthermore, they are $\beta$-smooth for a value $\beta$ that is polynomial in $k$ and $\ell$.
\end{restatable}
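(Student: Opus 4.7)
The plan is to observe that both $\bar{F}_{k,\vsigma}$ and $\bar{G}_{k,\vsigma}$ are compositions of the functions $\hat{F}_k$ and $\hat{G}_k$ from Lemma~\ref{lem:hat_functions_properties} with the \emph{linear} map $\vx \mapsto \vx^{(\vsigma)}$ from $[0,1]^{\cM_{k,\ell}}$ into $[0,1]^{\cN_k}$. Since every coefficient of this linear map is either $0$ or $1/\ell$, each of the four claimed properties (non-negativity, continuous differentiability, DR-submodularity, and polynomial smoothness) should transfer from $\hat{F}_k$ and $\hat{G}_k$ via the chain rule. I will work out the argument for $\bar{F}_{k,\vsigma}$ only; the argument for $\bar{G}_{k,\vsigma}$ is identical.

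First, non-negativity is immediate since $\vx^{(\vsigma)} \in [0,1]^{\cN_k}$ whenever $\vx \in [0,1]^{\cM_{k,\ell}}$, and $\hat{F}_k$ is non-negative on its domain by Lemma~\ref{lem:hat_functions_properties}. Continuous differentiability follows from the chain rule together with the smoothness of the linear map. For DR-submodularity, I would verify the equivalent second-derivative condition: for any two (not necessarily distinct) coordinates $c_1, c_2 \in \cM_{k,\ell}$, the chain rule gives
\[
\frac{\partial^2 \bar{F}_{k,\vsigma}}{\partial x_{c_1}\,\partial x_{c_2}}(\vx)
= \sum_{c_1',c_2' \in \cN_k}
\frac{\partial x^{(\vsigma)}_{c_1'}}{\partial x_{c_1}} \cdot \frac{\partial x^{(\vsigma)}_{c_2'}}{\partial x_{c_2}} \cdot
\frac{\partial^2 \hat{F}_k}{\partial x_{c_1'}\,\partial x_{c_2'}}(\vx^{(\vsigma)}) ,
\]
and each of the coefficients $\partial x^{(\vsigma)}_{c'} / \partial x_c$ lies in $\{0, 1/\ell\}$, while the second-order partials of $\hat{F}_k$ are non-positive almost everywhere by Lemma~\ref{lem:hat_functions_properties}. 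Hence every term in the sum is non-positive almost everywhere, which is exactly the known characterization of DR-submodularity for continuously differentiable functions.

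The main work is the smoothness bound, which is the one place where the factors of $1/\ell$ and the permutations $\sigma_j$ need to be tracked carefully. Starting from $(\nabla \bar{F}_{k,\vsigma})_{c}(\vx) = \tfrac{1}{\ell}(\nabla \hat{F}_k)_{c'}(\vx^{(\vsigma)})$ for the appropriate $c' \in \cN_k$, I would compute, for any $\vx,\vy \in [0,1]^{\cM_{k,\ell}}$,
\[
\|\nabla \bar{F}_{k,\vsigma}(\vx) - \nabla \bar{F}_{k,\vsigma}(\vy)\|_2^2
= \frac{1}{\ell^2} \sum_{c \in \cM_{k,\ell}} \bigl((\nabla \hat{F}_k)_{c'}(\vx^{(\vsigma)}) - (\nabla \hat{F}_k)_{c'}(\vy^{(\vsigma)})\bigr)^2
= \frac{1}{\ell}\, \|\nabla \hat{F}_k(\vx^{(\vsigma)}) - \nabla \hat{F}_k(\vy^{(\vsigma)})\|_2^2 ,
\]
where the last equality uses that each coordinate of $\cN_k$ is the image of exactly $\ell$ coordinates of $\cM_{k,\ell}$. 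Using the $\beta_0$-smoothness of $\hat{F}_k$ (with $\beta_0$ polynomial in $k$ by Lemma~\ref{lem:hat_functions_properties}), this is at most $(\beta_0^2/\ell)\,\|\vx^{(\vsigma)} - \vy^{(\vsigma)}\|_2^2$. A straightforward application of Cauchy-Schwarz to each coordinate of $\vx^{(\vsigma)} - \vy^{(\vsigma)}$, followed by using that each $\sigma_j$ is a permutation (so the double sum over $(i,j)$ of the squared differences is exactly $\|\vx - \vy\|_2^2$), yields $\|\vx^{(\vsigma)} - \vy^{(\vsigma)}\|_2^2 \leq \tfrac{1}{\ell}\|\vx - \vy\|_2^2$. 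Combining everything, $\bar{F}_{k,\vsigma}$ is $\beta$-smooth with $\beta = \beta_0/\ell$, which is polynomial in $k$ and $\ell$.

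The only non-routine point is the last chain of estimates for smoothness, and I do not foresee genuine obstacles there: the factors of $1/\ell$ arising from the linear map combine favorably with the summation over the $\ell$ copies of each coordinate, and the permutations $\sigma_j$ contribute nothing beyond relabeling. No additional structural facts about $\hat{F}_k$ or $\hat{G}_k$ beyond those already collected in Lemma~\ref{lem:hat_functions_properties} should be needed.
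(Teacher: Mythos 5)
Your proposal is correct and follows essentially the same route as the paper's proof: non-negativity and continuous differentiability are transferred via the chain rule, DR-submodularity via the sign of second-order partials (non-positive almost everywhere) after composition with the linear map $\vx \mapsto \vx^{(\vsigma)}$, and smoothness by computing the gradient norm explicitly, exploiting that each coordinate of $\cN_k$ is hit by exactly $\ell$ coordinates of $\cM_{k,\ell}$, and then applying Cauchy--Schwarz to bound $\|\vx^{(\vsigma)}-\vy^{(\vsigma)}\|_2$. The resulting smoothness constant $\beta_0/\ell$ matches the paper's.
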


We can now formally state the scrambled instances that we consider.
\[
	\begin{array}{l}
		\max \bar{F}_{k, \vsigma}(\vx)\\
		\vx \in \cK_{h, k, \ell}
	\end{array}
	\qquad
	\text{and}
	\qquad
	\begin{array}{l}
		\max \bar{G}_{k, \vsigma}(\vx)\\
		\vx \in \cK_{h, k, \ell}
	\end{array}
	\enspace.
\]
The next lemma shows that these scrambled instances inherit the values of their optimal solutions from the basic instances, which in particular, implies that they also inherit the gap between these solutions.
\begin{lemma} \label{lem:max_values_preserved}
We have both $\max_{\vx \in \cK_{h, k, \ell}} \bar{F}_{k, \vsigma}(\vx) = \max_{\vx \in \cP_{h, k}} \hat{F}_k(\vx)$ and $\max_{\vx \in \cK_{h, k, \ell}} \bar{G}_{k, \vsigma}(\vx) = \max_{\vx \in \cP_{h, k}} \hat{G}_k(\vx)$.
\end{lemma}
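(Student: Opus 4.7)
The plan is to prove both equalities by establishing the two inequalities in each direction, with the main leverage being that $\cP_{h,k}$ is invariant under the symmetry group $\cG$ from Lemma~\ref{lem:original_continuous_versions}. First I would verify this invariance: every $\pi \in \cG$ is induced by a permutation $\tau$ of $[k]$ acting on $\cN_k$ by $a_i\mapsto a_{\tau(i)}, b_i\mapsto b_{\tau(i)}$; such a $\pi$ fixes $\vu$ (since the $a$-coordinates of $\vu$ are all $0$ and the $b$-coordinates all equal $h$) and permutes the $\vv^{(i)}$'s among themselves ($\pi(\vv^{(i)})=\vv^{(\tau(i))}$), so $\cP_{h,k}=\mathrm{conv}(\vu,\vv^{(1)},\dots,\vv^{(k)})$ is $\cG$-invariant. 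For each $j\in[\ell]$, the permutation $\sigma_j$ induces such a $\pi_j \in \cG$.

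For the ``$\geq$'' direction (applied to both $\hat F_k$ and $\hat G_k$ in parallel), let $\vx^\star\in\cP_{h,k}$ be a maximizer of $\hat F_k$ (resp.\ $\hat G_k$). Define $\vx\in[0,1]^{\cM_{k,\ell}}$ by
\[
x_{a_{i,j}} = x^\star_{a_{\sigma_j^{-1}(i)}},\qquad x_{b_{i,j}} = x^\star_{b_{\sigma_j^{-1}(i)}}.
\]
Then $\vx^{(j)}$ is exactly the image of $\vx^\star$ under $\pi_j^{-1}\in\cG$, so $\vx^{(j)}\in\cP_{h,k}$ and hence $\vx\in\cK_{h,k,\ell}$. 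A direct calculation gives
\[
\vx^{(\vsigma)}_{a_i} \;=\; \tfrac{1}{\ell}\sum_{j=1}^{\ell} x_{a_{\sigma_j(i),j}} \;=\; \tfrac{1}{\ell}\sum_{j=1}^{\ell} x^\star_{a_{\sigma_j^{-1}(\sigma_j(i))}} \;=\; x^\star_{a_i},
\]
and similarly for the $b$-coordinates, so $\vx^{(\vsigma)}=\vx^\star$. Therefore $\bar F_{k,\vsigma}(\vx)=\hat F_k(\vx^\star)$ (and likewise for $\bar G$), giving the desired inequality.

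For the ``$\leq$'' direction, it suffices to show that $\vx^{(\vsigma)}\in\cP_{h,k}$ for every $\vx\in\cK_{h,k,\ell}$, since then $\bar F_{k,\vsigma}(\vx)=\hat F_k(\vx^{(\vsigma)})\leq\max_{\vz\in\cP_{h,k}}\hat F_k(\vz)$ (and analogously for $\bar G$). For this, observe that for each $j$ the vector $\vy^{(j)}:=\pi_j^{-1}(\vx^{(j)})$ satisfies $\vy^{(j)}_{a_i}=\vx^{(j)}_{a_{\sigma_j(i)}}=x_{a_{\sigma_j(i),j}}$ and similarly for $b$-coordinates, and $\vy^{(j)}\in\cP_{h,k}$ by $\cG$-invariance of $\cP_{h,k}$. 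Hence
\[
\vx^{(\vsigma)} \;=\; \tfrac{1}{\ell}\sum_{j=1}^{\ell} \vy^{(j)},
\]
which is a convex combination of points in $\cP_{h,k}$, so it lies in $\cP_{h,k}$.

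The only subtle step is the invariance check for $\cP_{h,k}$; everything else is bookkeeping with the permutations $\sigma_j$ and their induced actions, which I would handle carefully by keeping consistent conventions (e.g., $\pi(\vx)_u=x_{\pi^{-1}(u)}$) so that the ``forward'' averaging in the definition of $\vx^{(\vsigma)}$ matches the ``inverse'' permutations used in the construction of the maximizer.
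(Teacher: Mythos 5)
Your proof is correct and follows essentially the same route as the paper's: in both directions you transport vectors between $\cK_{h, k, \ell}$ and $\cP_{h, k}$ via the permutations $\sigma_j$, using the fact that each $\sigma_j$ fixes $\vu$ and permutes the extreme points $\vv^{(i)}$ (i.e., $\cP_{h,k}$ is $\cG$-invariant). Your write-up is slightly tidier in the ``$\leq$'' direction---the paper's Claim~\ref{clm:x_sigma_P} tracks the convex-combination coefficients explicitly, whereas you observe that $\vx^{(\vsigma)}$ is the average of the $\cG$-translates $\vy^{(j)} \in \cP_{h,k}$ and invoke convexity directly.
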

\begin{proof}
We prove below only the first equality of the lemma. The proof of the other equality is analogous. We begin by arguing that $\max_{\vx \in \cK_{h, k, \ell}} \bar{F}_{k, \vsigma}(\vx) \geq \max_{\vx \in \cP_{h, k}} \hat{F}_k(\vx)$. To show this inequality, we start with an arbitrary vector $\vx \in \cP_{h, k}$, and we construct a vector $\vy \in \cK_{h, k, \ell}$ such that $\bar{F}_{k, \vsigma}(\vy) = \hat{F}_k(\vx)$. Formally, the vector $\vy$ is defined as follows. For every $i \in [k]$ and $j \in [\ell]$,
\[
	\vy_{a_{i, j}} = \vx_{a_{\sigma^{-1}_j(i)}}
	\qquad
	\text{and}
	\qquad
	\vy_{b_{i, j}} = \vx_{b_{\sigma^{-1}_j(i)}}
	\enspace.
\]
One can observe that $\vx = \vy^{(\vsigma)}$, and therefore, we indeed have $\bar{F}_{k, \vsigma}(\vy) = \hat{F}_k(\vx)$; which means that we are only left to show that $\vy \in \cK_{h, k, \ell}$. Recall that, by the definition of $\cK_{h, k, \ell}$, to prove this inclusion, we need to argue that $\vy^{(j)} \in \cP_{h, k}$ for every $j \in [\ell]$, where $\vy^{(j)}$ is the restriction of $\vy$ to elements of $\{a_{i, j}, b_{i, j} \mid i \in [k]\}$.

Below, given a vector $\vz \in \cP_{h, k}$, we denote by $\sigma_j(\vz)$ the following vector.
\[
	(\sigma_j(\vz))_{a_i} = \vz_{a_{\sigma^{-1}_j(i)}}
	\qquad
	\text{and}
	\qquad
	(\sigma_j(\vz))_{b_i} = \vz_{b_{\sigma^{-1}_j(i)}}
	\enspace.
\]
Observe that this definition implies $\sigma_j(\vu) = \vu$ and $\sigma_j(\vv^{(i)}) = \vv^{(\sigma_j(i))}$, where $\vu, \vv^{(1)}, \vv^{(2)}, \dotsc,\allowbreak \vv^{(k)}$ are the vectors whose convex-hall defines $\cP_{h, k}$. Since $\vx \in \cP_{h, k}$, it must be given by some convex combination of the vectors $\vu, \vv^{(1)}, \vv^{(2)}, \dotsc, \vv^{(k)}$. In other words,
\[
	\vx = \sum_{i = 1}^k c_i \cdot \vv^{(i)} + d \cdot \vu
	\enspace.
\]
Thus, 
\[
	\vy^{(j)}
	=
	\sigma_j(\vx)
	=
	\sigma_j\left(\sum_{i = 1}^k c_i \cdot \vv^{(i)} + d \cdot \vu\right)
	=
	\sum_{i = 1}^k c_i \cdot \vv^{(\sigma_j(i))} + d \cdot \vu
	\enspace.
\]
The rightmost side of the last equality is another convex combination of the vectors $\vu, \vv^{(1)}, \vv^{(2)}, \dotsc,\allowbreak \vv^{(k)}$, and thus, the equality shows that $\vy^{(j)} \in \cP_{h, k}$, as desired.

We now get to the proof that $\max_{\vx \in \cK_{h, k, \ell}} \bar{F}_{k, \vsigma}(\vx) \leq \max_{\vx \in \cP_{h, k}} \hat{F}_k(\vx)$. Consider an arbitrary vector $\vx \in \cK_{h, k, \ell}$. By the definition of $\bar{F}_{k, \vsigma}(\vx)$, $\bar{F}_{k, \vsigma}(\vx) = \hat{F}_k(\vx^{(\vsigma)})$. Thus, to prove the last inequality, it suffices to show that $\vx^{(\vsigma)} \in \cP_{h, k}$, which is done by the next claim. Since the proof of this claim is very similar to the above proof that $\vy \in \cK_{h, k, \ell}$, we defer it to Appendix~\ref{app:missing_inapproximability}.
\begin{restatable}{claim}{clmXSigmaP} \label{clm:x_sigma_P}
For every vector $\vx \in \cK_{h, k, \ell}$, $\vx^{(\vsigma)} \in \cP_{h, k}$.\iftoggle{proofsAppendix}{}{\qedhere}
\end{restatable}
\end{proof}

\begin{corollary} \label{cor:scrambled_opt}
It holds that $\max_{\vx \in \cK_{h, k, \ell}} \bar{F}_{k, \vsigma}(\vx) \geq 1 - 1 / (2k) = \Omega(k^{-1})$ and $(1 - h)/4 + 3/(2k) \geq \max_{\vx \in \cK_{h, k, \ell}} \bar{G}_{k, \vsigma}(\vx) = \Omega(k^{-1})$.
\end{corollary}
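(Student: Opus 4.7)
The plan is to derive Corollary~\ref{cor:scrambled_opt} by directly chaining Lemma~\ref{lem:max_values_preserved} with the two lemmas preceding it, since the corollary merely transfers the value bounds on the basic instances to the scrambled ones. No new arguments about the combinatorial structure of $\cK_{h,k,\ell}$ or the permutation vector $\vsigma$ are needed, because all that structural work has already been done inside Lemma~\ref{lem:max_values_preserved}.

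Concretely, I would first invoke Lemma~\ref{lem:max_values_preserved} to rewrite $\max_{\vx \in \cK_{h,k,\ell}} \bar{F}_{k, \vsigma}(\vx)$ as $\max_{\vx \in \cP_{h,k}} \hat{F}_k(\vx)$, and likewise for $\bar{G}_{k, \vsigma}$. For the first inequality in the corollary I would then apply the first part of Lemma~\ref{lem:opt_bounds}, which gives $\max_{\vx \in \cP_{h,k}} \hat{F}_k(\vx) \geq 1 - 1/(2k)$; since $1 - 1/(2k) \geq 1/2$ for $k \geq 1$, this is trivially $\Omega(k^{-1})$ (in fact $\Omega(1)$, but $\Omega(k^{-1})$ is the form wanted because it matches the upper bound used later and the statement of Theorem~\ref{thm:inapproximability}). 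For the upper bound on $\max_{\vx \in \cK_{h,k,\ell}} \bar{G}_{k, \vsigma}(\vx)$ I would apply the second part of Lemma~\ref{lem:opt_bounds}, yielding $(1-h)/4 + 3/(2k)$.

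The only remaining piece is the $\Omega(k^{-1})$ \emph{lower} bound on $\max_{\vx \in \cK_{h,k,\ell}} \bar{G}_{k, \vsigma}(\vx)$. For this I would appeal to the unnamed lemma just before Lemma~\ref{lem:opt_bounds}, which establishes $\max_{\vx \in \cP_{h,k}} \hat{G}_k(\vx) = \Omega(k^{-1})$ by evaluating at the symmetric vector $\vy = \tfrac{1}{k}\sum_{i=1}^k \vv^{(i)}$; combining this with Lemma~\ref{lem:max_values_preserved} gives the desired $\Omega(k^{-1})$ bound on the scrambled side.

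There is no real obstacle here: the heavy lifting was done in Lemma~\ref{lem:max_values_preserved} (which bijectively carries optima of the basic instances over to the scrambled ones via the map $\vx \mapsto \vx^{(\vsigma)}$ and its inverse using the $\sigma_j$-action on the defining vertices of $\cP_{h,k}$), and in Lemma~\ref{lem:opt_bounds} (which computes the explicit gap between $\max \hat{F}_k$ and $\max \hat{G}_k$). The corollary is therefore essentially a one-line combination, and I expect the entire proof to read as ``the claim is immediate from Lemma~\ref{lem:max_values_preserved} combined with Lemma~\ref{lem:opt_bounds} and the lemma preceding it.''
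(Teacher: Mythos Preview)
Your proposal is correct and matches the paper's approach exactly: the paper gives no explicit proof of this corollary, treating it as an immediate consequence of Lemma~\ref{lem:max_values_preserved} together with Lemma~\ref{lem:opt_bounds} and the $\Omega(k^{-1})$ lemma preceding it, which is precisely the chain you describe.
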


Lemmata~\ref{lem:min_h}, \ref{lem:scrambled_objectives} and~\ref{lem:max_values_preserved} show that the scrambled instances we have constructed have all the properties stated in Theorem~\ref{thm:inapproximability} when $k \geq 1 / (1 - h)$). Therefore, to prove the theorem it suffices to show that no sub-exponential time algorithm can obtain a good approximation guarantee given these instances when $\ell$ is large enough compared to $k$. We do this by showing that when $\vsigma$ is chosen uniformly at random, it is difficult to distinguish between the two scrambled instances, and therefore, no sub-exponential time algorithm can obtain an approximation ratio better than the (large) gap between their optimal values. The first step in this proof is done by the next lemma, which shows that any single access to the objective function almost always returns the same answer given either of the two scrambled instances. To understand why the lemma implies this, it is important to understand what we mean with an access to the function. In this paper, we assume the ability to access either the objective, or its gradient, at a point $\vx$. The answers for both these types of accesses are determined by the values of the objective at an arbitrarily small neighborhood of $\vx$, and the same is true also for many other natural kinds of access (such as higher order derivatives).
\begin{lemma} \label{lem:single_access}
Assume $\vsigma$ is drawn uniformly at random, i.e., $\sigma_j$ is an independently chosen uniformly random permutation of $[k]$ for every $j \in [\ell]$. Given any vector $\vx \in [0, 1]^{\cM_k}$, with probability at least $1 - 4k \cdot e^{-\ell \cdot \frac{\delta_k}{6\sqrt{2k}}}$ we have $\bar{F}_{k, \vsigma}(\vy) = \bar{G}_{k, \vsigma}(\vy)$ for every vector $\vy$ such that $\|\vx - \vy\|_2 \leq (\sqrt{\ell} / 4) \cdot \delta_k$, where $\delta_k$ is the value of $\delta$ when Lemma~\ref{lem:original_continuous_versions} is applied to $f_k$.
\end{lemma}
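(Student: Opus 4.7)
}
The plan is to reduce the statement to a concentration-of-measure argument about the random permutations $\sigma_1,\dotsc,\sigma_\ell$, by exploiting Part~3 of Lemma~\ref{lem:original_continuous_versions} together with the linearity of the map $\vy \mapsto \vy^{(\vsigma)}$. First, since $\bar{F}_{k,\vsigma}(\vy) = \hat{F}_k(\vy^{(\vsigma)})$ and $\bar{G}_{k,\vsigma}(\vy) = \hat{G}_k(\vy^{(\vsigma)})$, Part~3 of Lemma~\ref{lem:original_continuous_versions} reduces the goal to showing that, with the claimed probability over $\vsigma$, the inequality $\|\vy^{(\vsigma)} - \overline{\vy^{(\vsigma)}}\|_2 \le \delta_k$ holds for \emph{every} $\vy$ in the prescribed $\ell_2$-ball of radius $(\sqrt{\ell}/4)\delta_k$ around $\vx$.

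Next, I would handle the uncountable quantifier over $\vy$ by a linearity argument. The map $A\colon \vy \mapsto \vy^{(\vsigma)}$ is linear, and a direct Cauchy--Schwarz calculation—using that each $\sigma_j$ is a bijection on $[k]$, so summing $\vy_{a_{\sigma_j(i),j}}^2$ over $i$ equals summing $\vy_{a_{i,j}}^2$ over $i$—gives the operator-norm bound $\|A\vy\|_2 \le \ell^{-1/2}\|\vy\|_2$. The averaging map $\vz \mapsto \bar{\vz}$ is an orthogonal projection onto the symmetric subspace of $\bR^{\cN_k}$, so $\vz \mapsto \vz - \bar{\vz}$ also has operator norm at most $1$. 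Composing, the linear map $\vy \mapsto \vy^{(\vsigma)} - \overline{\vy^{(\vsigma)}}$ has operator norm at most $1/\sqrt{\ell}$. Hence, for every $\vy$ in the ball,
\[
\|\vy^{(\vsigma)} - \overline{\vy^{(\vsigma)}}\|_2
\;\le\;
\|\vx^{(\vsigma)} - \overline{\vx^{(\vsigma)}}\|_2 \;+\; \tfrac{1}{\sqrt{\ell}}\cdot\tfrac{\sqrt{\ell}}{4}\delta_k
\;=\;
\|\vx^{(\vsigma)} - \overline{\vx^{(\vsigma)}}\|_2 + \tfrac{\delta_k}{4}\enspace,
\]
so it suffices to prove the single-point bound $\|\vx^{(\vsigma)} - \overline{\vx^{(\vsigma)}}\|_2 \le 3\delta_k/4$ with the required probability.

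For the concentration step, I would fix a coordinate, say $a_i$, and write $\vx^{(\vsigma)}_{a_i} = \ell^{-1}\sum_{j=1}^\ell X_j$ with $X_j := \vx_{a_{\sigma_j(i),j}} \in [0,1]$. The variables $X_j$ are independent because the $\sigma_j$ are independent; and since each $\sigma_j(i)$ is uniform over $[k]$ while $\sigma_j$ is a bijection, $\bE[X_j] = k^{-1}\sum_{i'}\vx_{a_{i',j}}$ and averaging over $j$ gives $\bE[\vx^{(\vsigma)}_{a_i}] = \overline{\vx^{(\vsigma)}}_{a_i}$—so the target quantity $\overline{\vx^{(\vsigma)}}_{a_i}$ is deterministic and coincides with the mean. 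Hoeffding's inequality then bounds the probability that $\vx^{(\vsigma)}_{a_i}$ deviates from $\overline{\vx^{(\vsigma)}}_{a_i}$ by more than a threshold $t$, and a union bound over the $2k$ coordinates gives $\|\vx^{(\vsigma)} - \overline{\vx^{(\vsigma)}}\|_2 \le t\sqrt{2k}$ with probability at least $1 - 4k \cdot e^{-2\ell t^2}$. Choosing $t$ proportional to $\delta_k/\sqrt{k}$ so that $t\sqrt{2k} \le 3\delta_k/4$ matches the form of the exponent stated in the lemma.

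The main obstacle is the universal quantifier over all $\vy$ in a Euclidean ball: plain concentration at a single point is insufficient, and a naive $\eps$-net argument would weaken the bound. The crux is the linearity-plus-operator-norm reduction in the second paragraph, which collapses the ball to its centre at a cost of exactly $\delta_k/4$. It is also worth noting that this reduction is precisely what dictates the $\sqrt{\ell}$ factor in the radius appearing in the statement, since the $\ell$ orthogonal copies contract distances by $1/\sqrt{\ell}$ under the map $\vy\mapsto\vy^{(\vsigma)}$.
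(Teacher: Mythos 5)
Your plan follows the same two-stage structure as the paper's proof: first, collapse the universal quantifier over all $\vy$ in the ball to the single centre point $\vx$ using linearity of $\vy\mapsto\vy^{(\vsigma)}$, and second, bound $\|\vx^{(\vsigma)}-\bar{\vx}^{(\vsigma)}\|_2$ via per-coordinate concentration and a union bound over the $2k$ coordinates. For the first stage, the paper writes $\|\vy^{(\vsigma)}-\bar{\vy}^{(\vsigma)}\|_2\le 2\|\vy^{(\vsigma)}-\vx^{(\vsigma)}\|_2+\|\vx^{(\vsigma)}-\bar{\vx}^{(\vsigma)}\|_2$ and applies Sedrakyan/Cauchy--Schwarz to obtain $\|\vy^{(\vsigma)}-\vx^{(\vsigma)}\|_2\le\|\vx-\vy\|_2/\sqrt{\ell}$, which is the same contraction your operator-norm argument captures. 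Your phrasing (a single affine map with $\|B\|_{\mathrm{op}}\le 1/\sqrt{\ell}$, absorbing both $\vy\mapsto\vy^{(\vsigma)}$ and the symmetrization into one step) is somewhat cleaner but is essentially the same calculation, and your split $\delta_k/4 + 3\delta_k/4$ versus the paper's $\delta_k/2 + \delta_k/2$ is an immaterial difference.

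The one place your proposal does not actually deliver the lemma as stated is the concentration step. You invoke Hoeffding, which for an average of $\ell$ i.i.d.\ $[0,1]$-valued variables gives a tail of $2e^{-2\ell t^2}$; with $t\sqrt{2k}\le 3\delta_k/4$ this yields an exponent $\Theta(\ell\delta_k^2/k)$, \emph{quadratic} in $\delta_k$. The lemma's stated tail is $4k\cdot e^{-\ell\delta_k/(6\sqrt{2k})}$, whose exponent is \emph{linear} in $\delta_k$. Your closing claim that choosing $t\propto\delta_k/\sqrt{k}$ ``matches the form of the exponent stated in the lemma'' is therefore incorrect; the two exponents differ by a factor of $\Theta(\delta_k/\sqrt{k})$, and for the small $\delta_k$ produced by Lemma~\ref{lem:original_continuous_versions} your bound is strictly weaker. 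The paper obtains the linear-in-$\delta_k$ form by using a multiplicative Chernoff/Bernstein-type inequality whose exponent degrades gracefully when $\bE[\sum_j X_j]$ is small. If you want the exact statement of the lemma, replace Hoeffding with that Chernoff bound. This is a mild quantitative issue, since any exponential decay in $\ell$ suffices for the downstream choice of $\ell$ in Theorem~\ref{thm:inapproximability}; in fact the paper's own final algebraic simplification of the Chernoff exponent appears to drop a $\min\{1,\delta_k/\sqrt{8k}\}$ factor (which is not $1$ for $\delta_k<\sqrt{8k}$), so the ``correct'' exponent via either route may well be the quadratic one you derive. But as a proof of the lemma \emph{as written}, your concentration step needs the Chernoff bound, not Hoeffding.
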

\begin{proof}
Below, we show that $\|\vx^{(\vsigma)} - \bar{\vx}^{(\vsigma)}\|_2 \leq \delta_k/2$ with probability at least $1 - 4k \cdot e^{-\ell \delta_k / (6\sqrt{2k})}$. However, before getting to this proof, let us show that, whenever this inequality holds, we also have $\bar{F}_{k, \vsigma}(\vy) = \bar{G}_{k, \vsigma}(\vy)$. By the definitions of $\bar{F}_{k, \vsigma}$ and $\bar{G}_{k, \vsigma}$, the last equality is equivalent to $\hat{F}_k(\vy^{(\vsigma)}) = \hat{G}_k(\vy^{(\vsigma)})$, and this equality holds by Lemma~\ref{lem:original_continuous_versions} since
\[
	\|\vy^{(\vsigma)} - \bar{\vy}^{(\vsigma)}\|_2
	\leq
	\|\vy^{(\vsigma)} - \vx^{(\vsigma)}\|_2 + \|\bar{\vy}^{(\vsigma)} - \bar{\vx}^{(\vsigma)}\|_2 + \|\vx^{(\vsigma)} - \bar{\vx}^{(\vsigma)}\|_2
	\leq
	2\|\vy^{(\vsigma)} - \vx^{(\vsigma)}\|_2 + \delta_k/2
	\leq
	\delta_k
	\enspace,
\]
where the first inequality is the triangle inequality, the second inequality holds since averaging two vectors in the same way can only decrease their distance from each other, and the last inequality holds because Sedrakyan's inequality (or Cauchy–Schwarz inequality) implies
\begin{align*}
	\|\vy^{(\vsigma)} - \vx^{(\vsigma)}\|_2^2
	={} &
	\frac{\sum_{i = 1}^k [\sum_{j = 1}^\ell (\vy_{a_{\sigma_j(i), j}} - \vx_{a_{\sigma_j(i), j}})]^2 + \sum_{i = 1}^k [\sum_{j = 1}^\ell (\vy_{b_{\sigma_j(i), j}} - \vx_{b_{\sigma_j(i), j}})]^2}{\ell^2}\\
	\leq{} &
	\frac{\sum_{i = 1}^k \sum_{j = 1}^\ell (\vy_{a_{\sigma_j(i), j}} - \vx_{a_{\sigma_j(i), j}})^2 + \sum_{i = 1}^k \sum_{j = 1}^\ell (\vy_{b_{\sigma_j(i), j}} - \vx_{b_{\sigma_j(i), j}})^2}{\ell}
	=
	\frac{\|\vx - \vy\|_2^2}{\ell}
	\enspace.
\end{align*}

It now remains to prove that the inequality $\|\vx^{(\vsigma)} - \bar{\vx}^{(\vsigma)}\|_2 \leq \delta_k/2$ holds with probability at least $1 - 4k \cdot e^{-\ell \delta_k / (6\sqrt{2k})}$. By the union bound, to prove this inequality it suffices to show that, for every $i \in [k]$, the probabilities of the two inequalities $|\vx^{(\vsigma)}_{a_i} - \bar{\vx}^{(\vsigma)}_{a_i}| > \delta_k/\sqrt{8k}$ and $|\vx^{(\vsigma)}_{b_i} - \bar{\vx}^{(\vsigma)}_{b_i}| > \delta_k/\sqrt{8k}$ to hold are both at most $2e^{-\ell \delta_k / (6\sqrt{2k})}$. The rest of this proof is devoted to showing that this is indeed the case for the first inequality as the proof for the second inequality is analogous. Recall that
\begin{equation} \label{eq:vx_def}
	\vx^{(\vsigma)}_{a_i}
	=
	\tfrac{1}{\ell} \sum_{j = 1}^\ell \vx_{a_{\sigma_j(i), j}}
	\enspace.
\end{equation}
Thus,
\begin{equation} \label{eq:bar_independent_of_sigma}
	\bar{\vx}^{(\vsigma)}_{a_i}
	=
	\frac{1}{k} \sum_{i' = 1}^k \vx^{(\vsigma)}_{a_{i'}}
	=
	\frac{1}{k} \sum_{i' = 1}^k \left(\tfrac{1}{\ell} \sum_{j = 1}^\ell \vx_{a_{\sigma_j(i'), j}}\right)
	=
	\frac{1}{k\ell} \sum_{i' = 1}^k \sum_{j = 1}^\ell \vx_{a_{\sigma_j(i'), j}}
	=
	\frac{1}{k\ell} \sum_{i' = 1}^k \sum_{j = 1}^\ell \vx_{a_{i', j}}
	\enspace,
\end{equation}
where the last equality holds since $\sigma_j$ is a permutation over $[k]$. Similarly, we also have
\[
	\bE[\vx^{(\vsigma)}_{a_i}]
	=
	\frac{1}{\ell} \sum_{j = 1}^\ell \bE[\vx_{a_{\sigma_j(i), j}}]
	=
	\frac{1}{\ell} \sum_{j = 1}^\ell \left(\tfrac{1}{k} \sum_{i' = 1}^k \bE[\vx_{a_{i', j}}]\right)
	=
	\bar{\vx}^{(\vsigma)}_{a_i}
	\enspace.
\]
Hence, the claim that we want to prove bounds the probability that $\vx^{(\vsigma)}_{a_i}$ significantly deviates from its expectation. Furthermore, Equation~\eqref{eq:vx_def} shows that $\ell \cdot \vx^{(\vsigma)}_{a_i}$ is the sum of $\ell$ random variables taking values from the range $[0, 1]$. Since $\sigma_j$ is chosen independently for every $j \in [\ell]$, these $\ell$ random variables are independent, which allows us to use Chernoff's inequality to bound their sum. Therefore,
\begin{align*}
	\Pr\left[|\vx^{(\vsigma)}_{a_i} - \bar{\vx}^{(\vsigma)}_{a_i}| > \frac{\delta_k}{\sqrt{8k}}\right]
	&=
	\Pr\left[\left|\sum_{j = 1}^\ell \vx_{a_{\sigma_j(i), j}} - \bE\left[\sum_{j = 1}^\ell \vx_{a_{\sigma_j(i), j}}\right]\right| > \frac{\ell \delta_k}{\sqrt{8k}}\right]\\
	\leq{} &
	2e^{-\frac{\bE[\sum_{j = 1}^\ell \vx_{a_{\sigma_j(i), j}}] \cdot \min\left\{\frac{\ell \delta_k}{\sqrt{8k} \cdot \bE[\sum_{j = 1}^\ell \vx_{a_{\sigma_j(i), j}}]}, \frac{\ell^2 \delta^2_k}{8k \cdot \bE[\sum_{j = 1}^\ell \vx_{a_{\sigma_j(i), j}}]^2}\right\}}{3}}\\
	={} &
	2e^{-\frac{\min\left\{\frac{\ell \delta_k}{\sqrt{8k}}, \frac{\ell^2 \delta^2_k}{8k \cdot \bE[\sum_{j = 1}^\ell \vx_{a_{\sigma_j(i), j}}]}\right\}}{3}}
	\leq
	2e^{-\frac{\frac{\ell \delta_k}{\sqrt{8k}} \cdot \min\left\{1, \frac{\delta_k}{\sqrt{8k}}\right\}}{3}}
	=
	2e^{-\ell \cdot \frac{\delta_k}{6\sqrt{2k}}}
	\enspace.
	\qedhere
\end{align*}
\end{proof}

Equation~\eqref{eq:bar_independent_of_sigma} in the last proof has another interesting consequence. This equation shows that $\bar{\vx}^{(\vsigma)}$ is independent of $\vsigma$. Since Lemma~\ref{lem:original_continuous_versions} shows that $\hat{G}_k(\vx) = \hat{F}_k(\bar{\vx})$ for every $\vx \in [0, 1]^{\cN_k}$, this implies the following observation.
\begin{observation}
For every $\vx \in [0, 1]^{\cM_{k, \ell}}$, the value of $\bar{G}_{k, \vsigma}(\vx) = \hat{G}_k(\vx^{(\vsigma)}) = \hat{F}_k(\bar{\vx}^{\vsigma})$ is independent of $\vsigma$.
\end{observation}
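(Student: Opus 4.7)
The plan is to show that the claimed equalities follow almost mechanically from the definitions, with the only substantive point being that averaging $\vx^{(\vsigma)}$ over coordinates annihilates the permutation $\vsigma$. I would carry out three short steps.

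First, I would unfold the definition $\bar{G}_{k, \vsigma}(\vx) = \hat{G}_k(\vx^{(\vsigma)})$ and then invoke Part~1 of Lemma~\ref{lem:original_continuous_versions} applied to the argument $\vx^{(\vsigma)}$, which yields $\hat{G}_k(\vx^{(\vsigma)}) = \hat{F}_k(\overline{\vx^{(\vsigma)}})$. This already gives the two equalities in the statement of the observation; the only remaining task is to show that the vector $\overline{\vx^{(\vsigma)}}$ on the right-hand side does not depend on $\vsigma$.

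Second, I would compute $\overline{\vx^{(\vsigma)}}$ explicitly. Recall that, by the choice of $\cG$ made just after Lemma~\ref{lem:original_continuous_versions}, for any vector $\vz \in [0, 1]^{\cN_k}$ we have $\bar{z}_{a_i} = \tfrac{1}{k}\sum_{i'=1}^k z_{a_{i'}}$ and $\bar{z}_{b_i} = \tfrac{1}{k}\sum_{i'=1}^k z_{b_{i'}}$. Applying this to $\vz = \vx^{(\vsigma)}$ and expanding the definition of $\vx^{(\vsigma)}$ gives, for each $i \in [k]$,
\[
    \overline{\vx^{(\vsigma)}}_{a_i}
    = \frac{1}{k}\sum_{i'=1}^k \frac{1}{\ell}\sum_{j=1}^\ell \vx_{a_{\sigma_j(i'), j}}
    = \frac{1}{k\ell}\sum_{j=1}^\ell \sum_{i'=1}^k \vx_{a_{\sigma_j(i'), j}}
    = \frac{1}{k\ell}\sum_{j=1}^\ell \sum_{i'=1}^k \vx_{a_{i', j}},
\]
where the last equality uses the fact that $\sigma_j$ is a permutation of $[k]$, so the inner sum is simply reindexed. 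The analogous identity holds for the $b$-coordinates. This is exactly the reasoning already carried out in Equation~\eqref{eq:bar_independent_of_sigma} inside the proof of Lemma~\ref{lem:single_access}, which I would either cite directly or briefly reproduce.

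Third, since the resulting expression for $\overline{\vx^{(\vsigma)}}$ contains no occurrence of $\vsigma$, I would conclude that $\hat{F}_k(\overline{\vx^{(\vsigma)}})$, and hence $\bar{G}_{k, \vsigma}(\vx)$, is independent of $\vsigma$. There is no real obstacle here; the content of the observation is entirely a bookkeeping consequence of the symmetry of $\cG$ combined with the fact that the outer average in the definition of $\bar{\cdot}$ sums over the full orbit of coordinates within each copy $j \in [\ell]$.
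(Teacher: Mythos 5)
Your proof is correct and follows essentially the same route as the paper: both unfold $\bar{G}_{k,\vsigma}(\vx)$ to $\hat{F}_k(\bar{\vx}^{(\vsigma)})$ via Part~1 of Lemma~\ref{lem:original_continuous_versions}, and both reduce the claim to the $\vsigma$-independence of $\bar{\vx}^{(\vsigma)}$ established in Equation~\eqref{eq:bar_independent_of_sigma} by reindexing over the permutations $\sigma_j$. The only difference is that you reproduce the computation rather than merely citing the equation.
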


In light of the above observation, we use below $\bar{G}_k$ to denote the function $\bar{G}_{k, \vsigma}$. We are now ready to prove Theorem~\ref{thm:inapproximability}.
\begin{proof}[Proof of Theorem~\ref{thm:inapproximability}]
Fix an arbitrary sub-exponential function $P(\cdot)$. Below, we show that there is a distribution of instances on which no deterministic algorithm making at most $P(n)$ accesses to the objective function, where $n$ is the dimension, can obtain an approximation ratio of $(1 - h)/4 + \varepsilon$. By Yao's principle, this will imply the same result also for randomized algorithms running in time $P(n)$ (notice that running in time $P(n)$ implies making at most $P(n)$ accesses to the objective function).

The distribution of instances we consider is the scrambled instance $\max_{vx \in \cK_{h, k, \ell}} \cF_{k, \vsigma}$, where $k \geq 1 / (1 - h)$ and $\ell$ are deterministic values to be determined below, and $\vsigma$ is chosen at random according to the distribution defined in Lemma~\ref{lem:single_access}. Assume towards a contradiction that there exists a deterministic algorithm $ALG$ that accesses the objective function at most $P(|\cM_{k, \ell}|) = P(2k\ell)$ times, and given a random instance from the above distribution obtains an approximation ratio of $(1 - h)/4 + \varepsilon$. More formally, if we denote $OPT = \max_{\vx \in \cP_{h,k}} \hat{F}_k(\vx)$, then $ALG$ guarantees that its output vector $\va$ obeys
\begin{equation} \label{eq:contradicted_assumption}
	\bE[\cF_{k, \vsigma}(\va)]
	\geq
	[(1 - h)/4 + \varepsilon] \cdot \bE\left[\max_{\vx \in \cK_{h,k, \ell}} \hat{F}_{k, \vsigma}(\vx)\right]
	=
	[(1 - h)/4 + \varepsilon] \cdot OPT
	\enspace,
\end{equation}
where the equality holds by Lemma~\ref{lem:max_values_preserved}.

Consider now an execution of $ALG$ on the instance $\max_{\vx \in \cK_{h, k, \ell}} \bar{G}_k(\vx)$, and let us denote by $A_1, A_2, \dotsc, A_r$ the accesses made by $ALG$ (each access $A_i$ consists of a vector $\vx$ and the type of access, namely whether $ALG$ evaluates the objective function at $\vx$ or calculates the gradient of the objective function at $\vx$). It is convenient to assume that the last access made by $ALG$ is to evaluate the value of its output set $\va$. If this is not the case, we can add such an access to the end of the execution of $ALG$, and still have $r \leq P(2k\ell) + 1$. Let $\cE$ be the event that all the accesses $A_1, A_2, \dotsc, A_r$ return the same value given that the objective is either $\bar{G}_k$ or $\bar{F}_{k, \vsigma}$. Clearly, $ALG$ follows the same execution path given either $\bar{G}_k$ or $\bar{F}_{k, \vsigma}$ when the event $\cE$ happens, and therefore, it outputs the same vector $\va \in \cK_{h, k, \ell}$ in this case. Furthermore, $\cE$ also implies that $\bar{F}_{k, \vsigma}(\va) = \bar{G}_k(\va)$, and thus, conditioned on $\cE$,
\begin{align*}
	\cF_{k, \vsigma}(\va)
	\leq{} &
	\max_{\vx \in \cK_{h,k, \ell}} \hat{G}_k(\vx)
	\leq
	(1-h)/4+3/(2k)
	\leq
	\frac{(1-h)/4+3/(2k)}{1 - 1/(2k)} \cdot OPT\\
	\leq{} &
	\left[\frac{1 - h}{4 - 2/k} + \frac{3}{k}\right] \cdot OPT
	\leq
	\left[\frac{1 - h}{4} + \frac{4}{k}\right] \cdot OPT
	\enspace,
\end{align*}
where the second inequality holds by Corollary~\ref{cor:scrambled_opt}, the third inequality follows from Lemma~\ref{lem:opt_bounds}, and two last inequalities hold since $k \geq 1$ and $h \in [0, 1]$.

We would like to use the last inequality to upper bound $\bE[\cF_{k, \vsigma}(\va)]$. For that purpose, we need to lower bound the probability of the event $\cE$. 
By Lemma~\ref{lem:single_access} and the union bound,
\[
	\Pr[\cE] \geq 1 - 4kr \cdot e^{-\ell \cdot \frac{\delta_k}{6\sqrt{2k}}} \geq 1 - 4k[P(2k \ell) + 1] \cdot e^{-\ell \cdot \frac{\delta_k}{6\sqrt{2k}}}
	\enspace.
\]
Consider the second term in the rightmost side of the last inequality. This term is a function of $k$ and $\ell$ alone, and for a fixed value of $k$ it is the product of a sub-exponential function of $\ell$ and an exponentially decreasing function of $\ell$. Therefore, for any fixed value of $k$, we can choose a large enough value for $\ell$ to guarantee that $2k[P(2k \ell) + 1] \cdot e^{-\ell \cdot \frac{\delta_k}{6\sqrt{k}}} \leq \eps/2$. In the rest of the proof we assume that $\ell$ is chosen in such a way. Then, since we always have $\cF_{k, \vsigma}(\va) \leq OPT$ and $\Pr[\cE] \leq 1$, we get by the law of total expectation,
\[
	\bE[\cF_{k, \vsigma}(\va)]
	\leq
	\Pr[\bar{\cE}] \cdot OPT + \bE[\cF_{k, \vsigma}(\va) \mid \cE]
	\leq
	\frac{\eps}{2} \cdot OPT + [(1 - h)/4 + 4/k] \cdot OPT
	\enspace,
\]
which contradicts Equation~\eqref{eq:contradicted_assumption} (and thus, the existence of $ALG$) when $k$ is chosen to be $\max\{\lceil 1 / (h - 1) \rceil, 8/\eps\}$.
\end{proof}
\section{Applications and Experimental Results}\label{sec:Experiments}

Up until recently, all the algorithms suggested for submodular maximization subject to general convex set constraints had a sub-exponential execution time. As mentioned above, Du~\cite{du2022lyapunov} has recently shown the first polynomial time offline algorithm for this problem, and in this paper we have shown another polynomial time algorithm obtaining a similar guarantee for the online (regret minimization) setting. In this section, we study the empirical performance of both these algorithms on multiple machine learning applications. In the case of the offline algorithm, it is important to note that (i) we analyze our explicit version of the algorithm, rather than the original version of Du~\cite{du2022lyapunov}; and (ii) it is interesting to study the empirical performance of the algorithm of Du~\cite{du2022lyapunov} because only a theoretical analysis of this algorithm appeared in~\cite{du2022lyapunov}.

Since the previously suggested algorithms require sub-exponential execution time, and thus cannot be used as is, we allowed all algorithms in our experiments the same number of iterations. This makes all the algorithms terminate in roughly the same amount of time, and allows for a fair comparison between the quality of their solutions. In a nutshell, our experiments show that our online algorithm and the offline algorithm of Du~\cite{du2022lyapunov} provide better solutions (often much better) compared to their state-of-the-art sub-exponential time counterparts.

\subsection{Revenue Maximization}

Following~\cite{thang2021online}, our first set of experiments considers revenue maximization in the following setting. The goal of a company is to advertise a product to users so that the revenue increases through the ``word-of-mouth" effect. Formally, the input for the problem is a weighted undirected graph $G = (V, E)$ representing a social network graph, where $w_{ij}$ denotes the weight of the edge between vertex $i$ and vertex $j$ ($w_{ij} = 0$ if the edge $(i, j)$ is missing from the graph). If the company invests $x_i$ unit of cost in a user $i\in V$, then this user becomes an advocate of the product with probability $1-(1-p)^{x_i}$, where $p\in(0,1)$ is a parameter. Note that this means that each $\eps$ unit of cost invested in the user has an independent chance to make the user an advocate, and that by investing a full unit in the user, she becomes an advocate with probability $p$~\cite{soma2017nonmonotone}.

Let $S\subseteq V$ be a set of users who ended up being advocates for the product. Then, the revenue obtained is represented by the total influence of the users of $S$ on non-advociate users, or more formally, by $\sum_{i\in S}\sum_{j\in{V\setminus S}}w_{ij}$. The objective function $f\colon [0, 1]^V \rightarrow \nnR$ of the experiments is accordingly defined as the expectation of the above expression, i.e., 
\begin{align}
    f(\vx)=\mathbb{E}_S\left[\sum_{i\in S}\sum_{j\in{V\setminus S}}w_{ij}\right]=\sum_{i \in V}\sum_{\substack{j \in V \\ i\neq j}}w_{ij}(1-(1-p)^{x_i})(1-p)^{x_j}
		\enspace.
\end{align}
It has been shown that $f$ is a non-monotone DR-submodular function~\cite{soma2017nonmonotone}.

In both the online and offline settings, we experimented on instances of the above setting based on two different datasets. The first dataset is a Facebook network~\cite{viswanath2009evolution}, and includes $64K$ users (vertices) and $1M$ unweighted relationships (edges). The second dataset is based on the Advogato network~\cite{massa2009dowling}, and includes $6.5K$ users (vertices) as well as $61K$ weighted relationships (edges).

\subsubsection{Online setting} \label{ssc:revenue_online}

When performing our experiments in the online settings, we tried to closely mimic the experiment of~\cite{thang2021online}. Therefore, we chose the number of time steps to be $T = 1000$, and the parameter $p=0.0001$. In each time step $t$, the objective function is defined in the following way. A subset $V^t \subseteq V$ is selected, and only edges connecting two vertices of $V^t$ are kept. In the case of the Advogato network, $V_t$ is a uniformly random subset of $V$ of size $200$, and in the case of the much larger Facebook network, $V_t$ is a uniformly random subset of $V$ of size $\numprint{15,000}$. The optimization is done subject to the constraint $0.1\leq\sum_ix_i\leq 1$, which represents both minimum and maximum investment requirements. Note that the intersection of this constraint with the implicit box constraint represents a non-down-monotone feasibility polytope.

\begin{figure}[tb]
\begin{subfigure}[t]{0.24\textwidth}
  \begin{tikzpicture}[scale=0.45] \begin{axis}[
    xlabel = {Time Step},
    ylabel = {Function Value},
    xmin=0, xmax=1000,
    ymin=0, ymax=4,
		legend cell align=left,
		legend style={at={(0.8,1)}}]
		\addplot [name path = our, blue, mark = *, mark repeat=100] table [x expr=\coordindex, y index=0] {CSV/revenue/OnlineAdvOurs.csv};
		\addlegendentry{Our Algorithm}
		\addplot [name path = theirs, red, mark = triangle*, mark size=3pt, mark repeat=100] table [x expr=\coordindex, y index=0] {CSV/revenue/OnlineAdvTheirs.csv};
		\addlegendentry{Th\twodias{\'}{\u}{a}ng and Srivastav~\cite{thang2021online}}
	\end{axis}\end{tikzpicture}
  \caption{\centering Online Algorithms on the Advogato network.}\label{fig:onAdv}
\end{subfigure}\hfill
\begin{subfigure}[t]{0.24\textwidth}
    \begin{tikzpicture}[scale=0.45] \begin{axis}[
    xlabel = {Time Step},
    ylabel = {Function Value},
    xmin=0, xmax=1000,
    ymin=0, ymax=5,
		legend cell align=left,
		legend style={at={(0.8,1)}}]
		\addplot [name path = our, blue, mark = *, mark repeat=100] table [x expr=\coordindex, y index=0] {CSV/revenue/FB_online_ours.csv};
		\addlegendentry{Our Algorithm}
		\addplot [name path = theirs, red, mark = triangle*, mark size=3pt, mark repeat=100] table [x expr=\coordindex, y index=0] {CSV/revenue/FB_online_theirs.csv};
		\addlegendentry{Th\twodias{\'}{\u}{a}ng and Srivastav~\cite{thang2021online}}
	\end{axis}\end{tikzpicture}
  \caption{\centering Online Algorithms on the Facebook network.}\label{fig:onFace}
\end{subfigure}\hfill
\begin{subfigure}[t]{0.25\textwidth}
    \begin{tikzpicture}[scale=0.45] \begin{axis}[
    xlabel = {Iterations},
    ylabel = {Function Value},
    xmin=0, xmax=100,
    ymin=0, ymax=0.11,
		legend cell align=left,
		legend style={at={(0.68,1)}},
		y tick label style={
        /pgf/number format/.cd,
        fixed,
        fixed zerofill,
        precision=2,
        /tikz/.cd
    }]
		\addplot [name path = our, blue, mark = *, mark repeat=10] table [x expr=\coordindex, y index=0] {CSV/revenue/OfflineAdvogato0urs.csv};
		\addlegendentry{Du~\cite{du2022lyapunov} (our version)}
		\addplot [name path = theirs, red, mark = triangle*, mark size=3pt, mark repeat=10] table [x expr=\coordindex, y index=0] {CSV/revenue/OfflineAdvogatoTheirs.csv};
		\addlegendentry{D{\"u}rr et al.~\cite{durr2021non}}
		\addplot [name path = theirs, green!80!black, mark = square*, mark repeat=10] table [x expr=\coordindex, y index=0] {CSV/revenue/OfflineAdvogatoTheirs2.csv};
		\addlegendentry{Du et al.~\cite{du2022improved}}
	\end{axis}\end{tikzpicture}
  \caption{\centering Offline Algorithms on the Advogato network.}\label{fig:offAdv}
\end{subfigure}\hfill
\begin{subfigure}[t]{0.25\textwidth}
    \begin{tikzpicture}[scale=0.45] \begin{axis}[
    xlabel = {Iterations},
    ylabel = {Function Value},
    xmin=0, xmax=100,
    ymin=0, ymax=0.15,
		legend cell align=left,
		legend style={at={(0.68,1)}},
		y tick label style={
        /pgf/number format/.cd,
        fixed,
        fixed zerofill,
        precision=2,
        /tikz/.cd
    }]
		\addplot [name path = our, blue, mark = *, mark repeat=10] table [x expr=\coordindex, y index=0] {CSV/revenue/OfflineFB0urs.csv};
		\addlegendentry{Du~\cite{du2022lyapunov}  (our version)}
		\addplot [name path = theirs, red, mark = triangle*, mark size=3pt, mark repeat=10] table [x expr=\coordindex, y index=0] {CSV/revenue/OfflineFBTheirs.csv};
		\addlegendentry{D{\"u}rr et al.~\cite{durr2021non}}
		\addplot [name path = theirs, green!80!black, mark = square*, mark repeat=10] table [x expr=\coordindex, y index=0] {CSV/revenue/OfflineFBTheirs2.csv};
		\addlegendentry{Du et al.~\cite{du2022improved}}
	\end{axis}\end{tikzpicture}
  \caption{\centering Offline Algorithms on the Facebook network.}\label{fig:offFace}
\end{subfigure}
\caption{Results of the Revenue Maximization Experiments} \label{fig:revenue}
\end{figure}
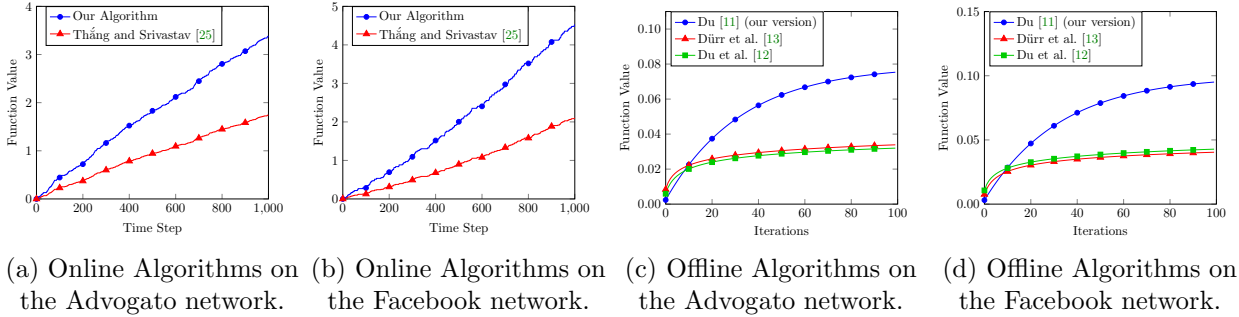

In our experiments, we have compared our algorithm from Section~\ref{sec:Online} with the algorithm of Th\twodias{\'}{\u}{a}ng and Srivastav~\cite{thang2021online}, which is the only other algorithm for the online setting currently known. In both algorithms, we have set the number of online linear optimizers used to be $L = 100$, and in our algorithm we have set the error parameter $\eps = 0.03$ (there is no error parameter in the algorithm of Th\twodias{\'}{\u}{a}ng and Srivastav~\cite{thang2021online}). The results of these experiments on the Advogato and Facebook networks can be found in Figures~\ref{fig:onAdv} and~\ref{fig:onFace}, respectively. One can observe that our algorithm significantly outperforms the state-of-the-art algorithm for any number of time steps.

\subsubsection{Offline setting}

Our experiments in the offline setting are similar to the ones done in the online setting, with two differences. First, since there is only one objective function in the offline setting, we base it on the entire network graph rather than on a subset of its vertices. Second, for the sake of diversity, we changed the constraint to be $0.25\leq\sum_ix_i\leq 1$ (but we note that the results of the experiments remain essentially unchanged if one reuse the constraint from the online setting).

In our experiments, we have compared our explicit version from Section~\ref{sec:Offline} of the algorithm of Du~\cite{du2022lyapunov} with the previous algorithms of D{\"u}rr et al.~\cite{durr2021non} and Du et al.~\cite{du2022improved}. All the algorithms have been executed for $T = 100$ iterations,\footnote{Recall that the number of iterations corresponds to the parameter $L$ in the online setting, which was also set to $100$ above.} and the error parameter $\eps$ was set $0.03$ in (our version of) the algorithm of Du~\cite{du2022lyapunov}. The results of these experiments on the Advogato and Facebook networks can be found in Figures~\ref{fig:offAdv} and~\ref{fig:offFace}, respectively. One can observe that our version of the polynomial time algorithm of Du~\cite{du2022lyapunov} clearly outperforms the two previous algorithms, except when the number of iterations is very low.

\subsection{Location Summarization}

In this section we consider a location summarization task based on the Yelp dataset~\cite{yelp}, which is a subset of Yelp’s businesses, reviews and user data. This dataset contains information about local businesses across $11$ metropolitan areas, and we have followed the technique of~\cite{kazemi2021regularized} for generating symmetry scores between these locations based on features extracted from the descriptions of the locations and their related user reviews (such as parking options, WiFi access, having vegan menus, delivery options, possibility of outdoor seating and being good for groups).

We would like to pick a non-empty set of up to $2$ locations that summarizes the existing locations, while not being too far from the current location of the user. A natural objective function for this task (which is very similar to the objective function used in~\cite{kazemi2021regularized}) is the following set function. Assume that the set of locations is $[n]$, $M_{i,j}$ is the similarity score between locations $i$ and $j$, and $d_i$ is the distance of location $i$ from the user (in units of 200KM); then for every set $S \subseteq [n]$, the value of the objective is
\[
    f(S) = \tfrac{1}{n} \sum_{i = 1}^n \max_{j \in S} M_{i, j} - \sum_{i \in S} d_i
    \enspace.
\]

Since $f$ is a set function, and the tools we have developed in this work apply only to continuous functions, we optimize the multilinear extension $F$ of $f$,\footnote{See Section~\ref{sec:Hardness} for a formal definition of the multi-linear extension.} which is given for every vector $\vx \in [0, 1]^n$ by 
\[
    F(\vx)
    =
    \tfrac{1}{n} \sum_{i = 1}^n \sum_{j = 1}^n \left[x_j M_{i, j} \cdot \prod_{j' | M_{i,j} \prec M_{i, j'}} \mspace{-27mu} (1 - x_{j'}) \right] - \sum_{i = 1}^n x_i d_i
    \enspace.
\]
The multilinear extension $F$ is DR-submodular since $f$ is submodular. Furthermore, any solution obtained while optimizing $F$ can be rounded into a solution obtaining the same approximation guarantee for $f$ using either pipage or swap rounding~\cite{calinescu2011maximizing,chekuri2010dependent}.

In our experiment, we restricted attention to a single metropolitan area (Charlotte), and assumed there are $100$ time steps. In each time step, a new user $u$ arrives, and her location is determined uniformly at random within the rectangle containing the metropolitan area. Let us denote by $F_u$ the function $F$ when the distances are calculated based on the location of $u$. When user $u$ arrives, we would like to choose a vector $\vx^{(u)}$ maximizing $F_u$ among all vectors obeying $\|\vx\|_1 \in [1, 2]$ (recall that we look for solutions that include $1$ or $2$ locations). Furthermore, we would like to do that before learning the location of $u$ (to speed up the response and for privacy reasons); thus, we need to consider online optimization algorithms. Specifically, like in Section~\ref{ssc:revenue_online}, we compared our algorithm from Section~\ref{sec:Online} with the algorithm of Th\twodias{\'}{\u}{a}ng and Srivastav~\cite{thang2021online}. In both algorithms, we have set the number of online linear optimizers used to be $L = 100$, and in our algorithm we have set the error parameter $\eps = 0.03$. The results of the experiment can be found in Figure~\ref{fig:location}, and they show that our algorithm (again) significantly outperforms the state-of-the-art algorithm for any number of time steps.


\begin{SCfigure}[][tb]
  \begin{tikzpicture}[scale=0.8] \begin{axis}[
    xlabel = {Time Step},
    ylabel = {Function Value},
    xmin=0, xmax=100,
    ymin=0, ymax=110,
		legend cell align=left,
		legend style={at={(0.8,1)}}]
		\addplot [name path = our, blue, mark = *, mark repeat=10] table [x expr=\coordindex, y index=0] {CSV/location/YelpCharlotte_ours12.csv};
		\addlegendentry{Our Algorithm}
		\addplot [name path = theirs, red, mark = triangle*, mark size=3pt, mark repeat=10] table [x expr=\coordindex, y index=0] {CSV/location/YelpCharlotte_theirs12.csv};
		\addlegendentry{Th\twodias{\'}{\u}{a}ng and Srivastav~\cite{thang2021online}}
	\end{axis}\end{tikzpicture}
\caption{Results of the Location Summarization Experiment} \label{fig:location}
\end{SCfigure}
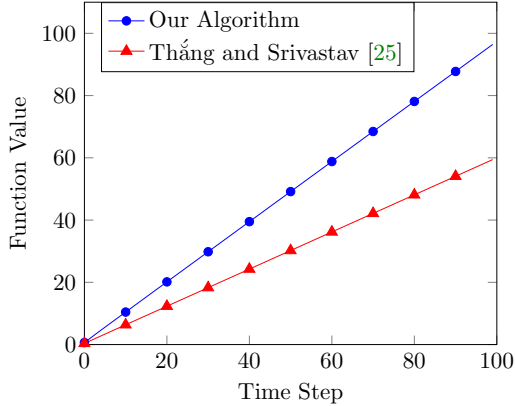

\subsection{Quadratic programming}

In this section, we complement the study of (our version) of the offline algorithm of Du~\cite{du2022lyapunov}, by checking its empirical performance for down-closed polytopes. Algorithms with better approximation guarantees are known when one is guaranteed to have such a constraint~\cite{bian2017nonmonotone}. However, it is still important to understand the performance of algorithms designed for general polytope constraint when they happen to get a down-closed polytope. In particular, we note that D{\"u}rr et al.~\cite{durr2021non} studied the empirical performance of their algorithm compared to the performance of the algorithm of~\cite{bian2017nonmonotone} subject to such constraints, and we extend here their work by comparing the performance of their algorithm with that of newer algorithms. All the experiments presented in this section closely follow settings studied in~\cite{durr2021non}.

Consider the down-closed polytope given by
\[
    \mathcal{K}=\{\vx\in \nnRE{n} \mid \mA\vx\leq \vb, \vx\leq \vu,\mA\in\nnRE{m\times n}, \vb\in \nnRE{m}\}
		\enspace,
\]
where $\mA$ is a non-negative matrix chosen in a way described below, $\vb$ is the all ones vector, and $\vu$ is a vector that acts as an upper bound on $\cK$ and is given by $u_j=\min_{j\in[m]} b_i / A_{i,j}$ for every $j\in[n]$. We now describe a function $F$ that we would like to maximize subject to $\cK$. For every vector $\vzero \leq \vx \leq \vu$ (where $\vzero$ is the all zeros vector),
\[
    F(\vx)=\frac{1}{2}\vx^T\mH\vx+\vh^T\vx+c
		\enspace,
\]
where $\mH$ is a matrix, $\vh$ is a vector and $c$ is a scalar. The matrix $\mH$ is chosen in a way described below, and it is always non-positive, which guarantees that $F$ is DR-submodular. Furthermore, once $\mH$ is chosen, we follow~\cite{bian2017nonmonotone} and set $\vh = -0.1\cdot \mH^T\vu$. Finally, to make sure that $F$ is also non-negative, the value of $c$ should be at least $M = -\min_{\vzero \leq \vx \leq \vu}\frac{1}{2}\vx^T\mH\vx + \vh^T\vx$ . The value of $M$ can be approximately obtained using \quadprogIP\footnote{We used IBM CPLEX optimization studio \url{https://www.ibm.com/products/ilog-cplex-optimization-studio}.}~\cite{xia2020globally}, and $c$ is chosen to be $M+0.1|M|$, which is a bit larger than the necessary minimum.

It remains to describe the way in which the entries of the matrices $\mH$ and $\mA$ are chosen. Below we describe two different random ways in which this can be done, and study the performance of the various algorithms on the instances generated in this way.

\subsubsection{Uniform distribution} The first way to choose the matrices $\mH$ and $\mA$ is using a uniform distribution. Here, the matrix $\mH\in \mathbb{R}^{n\times n}$ is a randomly generated symmetric matrix whose entries are drawn uniformly at random (and independently) from $[-1,0]$, and $\mA\in\mathbb{R}^{m\times n}$ is a randomly generated matrix whose entries are drawn uniformly at random from $[v,v+1]$ for $v=0.01$ (this choice of $v$ guarantees that the entries of $\mA$ are strictly positive).

In each one of our experiments, we chose a different set of values for the dimensions $n$ and $m$, and then drew an instance from the above distribution and executed on it $100$ iterations of three algorithms: our explicit version from Section~\ref{sec:Offline} of the algorithm of Du~\cite{du2022lyapunov} (with $\eps = 0.03$), and the previous algorithms of D{\"u}rr et al.~\cite{durr2021non} and Du et al.~\cite{du2022improved}. Each such experiment was repeated $100$ times, and the results are depicted in Figure~\ref{fig:quad11}. In each plot of this figure, the $x$-axis represents the value of $n$, and the caption of the plot specifies how the value of $m$ was calculated based on the value of $n$. The $y$-axis of the plots represents the approximation ratios obtained by the various algorithms compared to the optimum computed using a quadratic programming solver. One can observe that the two sub-exponential time algorithms of D{\"u}rr et al.~\cite{durr2021non} and Du et al.~\cite{du2022improved} exhibit similar performance, and (our version) of the newer algorithm of Du~\cite{du2022lyapunov} consistently and significantly outperforms them.

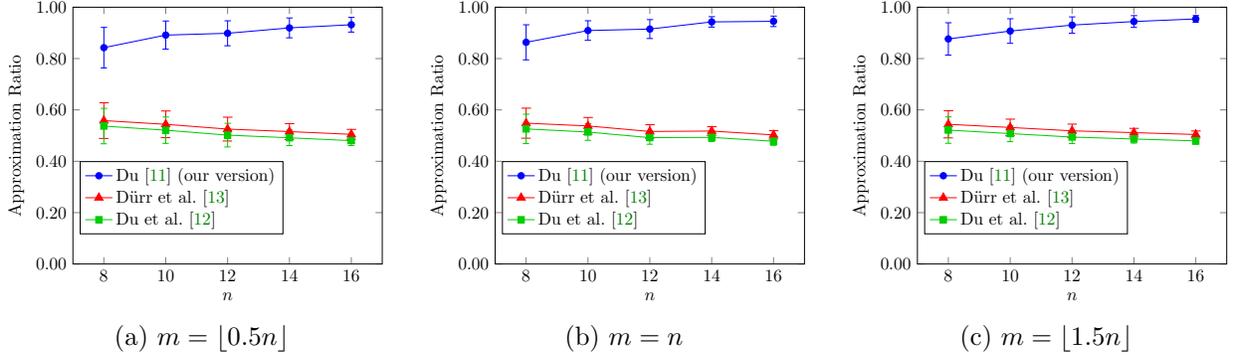
\begin{figure}[tb]
\begin{subfigure}[t]{0.32\textwidth}
  \begin{tikzpicture}[scale=0.6] \begin{axis}[
    xlabel = {$n$},
    ylabel = {Approximation Ratio},
    xmin=7, xmax=17,
    ymin=0, ymax=1,
		legend cell align=left,
		legend style={at={(0.68,0.4)}},
		y tick label style={
        /pgf/number format/.cd,
        fixed,
        fixed zerofill,
        precision=2,
        /tikz/.cd
    },
		error bars/y dir=both,
    error bars/y explicit]
		\pgfplotstableread{CSV/quadratic/m=0.5n/uniform2_quadratic_ours3__.csv}\ours
		\pgfplotstablecreatecol[copy column from table={CSV/quadratic/m=0.5n/uniform2_quadratic_ours3_std__.csv}{[index] 0}] {error} {\ours}
		\addplot [name path = our, blue, mark = *] table [x expr=8+2*\coordindex, y index=0, y error index=1] {\ours};
		\addlegendentry{Du~\cite{du2022lyapunov}  (our version)}
		\pgfplotstableread{CSV/quadratic/m=0.5n/uniform2_quadratic_theirs__.csv}\theirs
		\pgfplotstablecreatecol[copy column from table={CSV/quadratic/m=0.5n/uniform2_quadratic_theirs_std__.csv}{[index] 0}] {error} {\theirs}
		\addplot [name path = theirs, red, mark = triangle*, mark size=3pt] table [x expr=8+2*\coordindex, y index=0, y error index = 1] {\theirs};
		\addlegendentry{D{\"u}rr et al.~\cite{durr2021non}}
		\pgfplotstableread{CSV/quadratic/m=0.5n/uniform2_quadratic_theirs2__.csv}\theirstwo
		\pgfplotstablecreatecol[copy column from table={CSV/quadratic/m=0.5n/uniform2_quadratic_theirs2_std__.csv}{[index] 0}] {error} {\theirstwo}
		\addplot [name path = theirs, green!80!black, mark = square*] table [x expr=8+2*\coordindex, y index=0, y error index = 1] {\theirstwo};
		\addlegendentry{Du et al.~\cite{du2022improved}}
	\end{axis}\end{tikzpicture}
  \caption{$m=\lfloor0.5n\rfloor$}\label{fig:quad2}
\end{subfigure}\hfill
\begin{subfigure}[t]{0.32\textwidth}
  \begin{tikzpicture}[scale=0.6] \begin{axis}[
    xlabel = {$n$},
    ylabel = {Approximation Ratio},
    xmin=7, xmax=17,
    ymin=0, ymax=1,
		legend cell align=left,
		legend style={at={(0.68,0.4)}},
		y tick label style={
        /pgf/number format/.cd,
        fixed,
        fixed zerofill,
        precision=2,
        /tikz/.cd
    },
		error bars/y dir=both,
    error bars/y explicit]
		\pgfplotstableread{CSV/quadratic/m=n/uniform2_quadratic_ours3__.csv}\ours
		\pgfplotstablecreatecol[copy column from table={CSV/quadratic/m=n/uniform2_quadratic_ours3_std__.csv}{[index] 0}] {error} {\ours}
		\addplot [name path = our, blue, mark = *] table [x expr=8+2*\coordindex, y index=0, y error index=1] {\ours};
		\addlegendentry{Du~\cite{du2022lyapunov}  (our version)}
		\pgfplotstableread{CSV/quadratic/m=n/uniform2_quadratic_theirs__.csv}\theirs
		\pgfplotstablecreatecol[copy column from table={CSV/quadratic/m=n/uniform2_quadratic_theirs_std__.csv}{[index] 0}] {error} {\theirs}
		\addplot [name path = theirs, red, mark = triangle*, mark size=3pt] table [x expr=8+2*\coordindex, y index=0, y error index = 1] {\theirs};
		\addlegendentry{D{\"u}rr et al.~\cite{durr2021non}}
		\pgfplotstableread{CSV/quadratic/m=n/uniform2_quadratic_theirs2__.csv}\theirstwo
		\pgfplotstablecreatecol[copy column from table={CSV/quadratic/m=n/uniform2_quadratic_theirs2_std__.csv}{[index] 0}] {error} {\theirstwo}
		\addplot [name path = theirs, green!80!black, mark = square*] table [x expr=8+2*\coordindex, y index=0, y error index = 1] {\theirstwo};
		\addlegendentry{Du et al.~\cite{du2022improved}}
	\end{axis}\end{tikzpicture}
  \caption{$m=n$}\label{fig:quad1}
\end{subfigure}\hfill
\begin{subfigure}[t]{0.32\textwidth}
  \begin{tikzpicture}[scale=0.6] \begin{axis}[
    xlabel = {$n$},
    ylabel = {Approximation Ratio},
    xmin=7, xmax=17,
    ymin=0, ymax=1,
		legend cell align=left,
		legend style={at={(0.68,0.4)}},
		y tick label style={
        /pgf/number format/.cd,
        fixed,
        fixed zerofill,
        precision=2,
        /tikz/.cd
    },
		error bars/y dir=both,
    error bars/y explicit]
		\pgfplotstableread{CSV/quadratic/m=1.5n/uniform2_quadratic_ours3__.csv}\ours
		\pgfplotstablecreatecol[copy column from table={CSV/quadratic/m=1.5n/uniform2_quadratic_ours3_std__.csv}{[index] 0}] {error} {\ours}
		\addplot [name path = our, blue, mark = *] table [x expr=8+2*\coordindex, y index=0, y error index=1] {\ours};
		\addlegendentry{Du~\cite{du2022lyapunov}  (our version)}
		\pgfplotstableread{CSV/quadratic/m=1.5n/uniform2_quadratic_theirs__.csv}\theirs
		\pgfplotstablecreatecol[copy column from table={CSV/quadratic/m=1.5n/uniform2_quadratic_theirs_std__.csv}{[index] 0}] {error} {\theirs}
		\addplot [name path = theirs, red, mark = triangle*, mark size=3pt] table [x expr=8+2*\coordindex, y index=0, y error index = 1] {\theirs};
		\addlegendentry{D{\"u}rr et al.~\cite{durr2021non}}
		\pgfplotstableread{CSV/quadratic/m=1.5n/uniform2_quadratic_theirs2__.csv}\theirstwo
		\pgfplotstablecreatecol[copy column from table={CSV/quadratic/m=1.5n/uniform2_quadratic_theirs2_std__.csv}{[index] 0}] {error} {\theirstwo}
		\addplot [name path = theirs, green!80!black, mark = square*] table [x expr=8+2*\coordindex, y index=0, y error index = 1] {\theirstwo};
		\addlegendentry{Du et al.~\cite{du2022improved}}
	\end{axis}\end{tikzpicture}
  \caption{$m=\lfloor1.5n\rfloor$}\label{fig:quad3}
\end{subfigure}
\caption{Quadratic Programming with Uniform Distribution} \label{fig:quad11}
\end{figure}

\subsubsection{Exponential distribution} The other way to choose the matrices $\mH$ and $\mA$ is using an exponential distribution. Recall that given $\lambda > 0$, the exponential distribution $\exp(\lambda)$ is given by a density function assigning a density of $\lambda e^{-\lambda y}$ for every $y \geq 0$ and density $0$ for negative $y$ values. Then, $\mH \in \mathbb{R}^{n\times n}$ is randomly generated symmetric matrix whose entries are drawn independently from $-\exp(1)$, and $\mA \in \bR^{m \times n}$ is a randomly generated matrix whose entries are drawn independently from $\exp(0.25) + 0.01$.

For this way of generating $\mH$ and $\mA$, we repeated that same set of experiments as for the previous way of generating these matrices. The results of these experiments (averaged over $100$ repetitions) are depicted in Figure~\ref{fig:quad12}. Again, we note that the two sub-exponential time algorithms of D{\"u}rr et al.~\cite{durr2021non} and Du et al.~\cite{du2022improved} exhibit similar performance, and (our version) of the newer algorithm of Du~\cite{du2022lyapunov} significantly outperforms them, especially as the dimension $n$ grows.

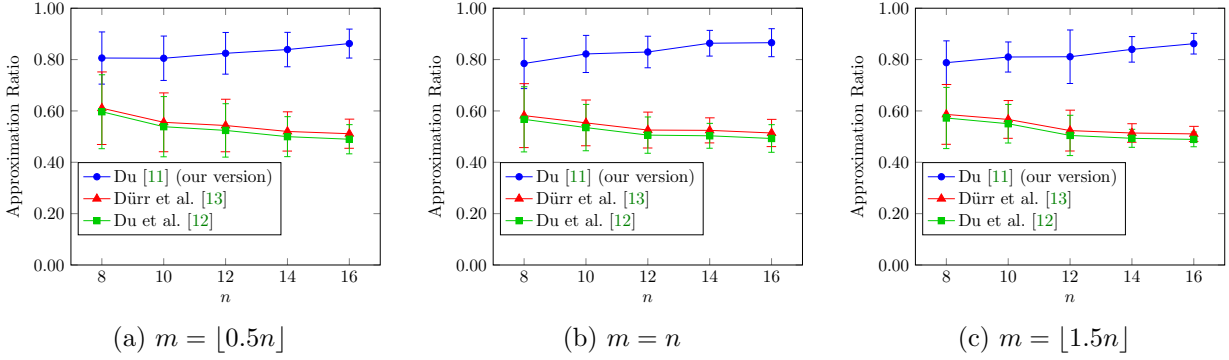
\begin{figure}[tb]
\begin{subfigure}[t]{0.32\textwidth}
  \begin{tikzpicture}[scale=0.6] \begin{axis}[
    xlabel = {$n$},
    ylabel = {Approximation Ratio},
    xmin=7, xmax=17,
    ymin=0, ymax=1,
		legend cell align=left,
		legend style={at={(0.68,0.4)}},
		y tick label style={
        /pgf/number format/.cd,
        fixed,
        fixed zerofill,
        precision=2,
        /tikz/.cd
    },
		error bars/y dir=both,
    error bars/y explicit]
		\pgfplotstableread{CSV/quadratic/m=0.5n/exponential2_quadratic_ours3__.csv}\ours
		\pgfplotstablecreatecol[copy column from table={CSV/quadratic/m=0.5n/exponential2_quadratic_ours3_std__.csv}{[index] 0}] {error} {\ours}
		\addplot [name path = our, blue, mark = *] table [x expr=8+2*\coordindex, y index=0, y error index=1] {\ours};
		\addlegendentry{Du~\cite{du2022lyapunov}  (our version)}
		\pgfplotstableread{CSV/quadratic/m=0.5n/exponential2_quadratic_theirs__.csv}\theirs
		\pgfplotstablecreatecol[copy column from table={CSV/quadratic/m=0.5n/exponential2_quadratic_theirs_std__.csv}{[index] 0}] {error} {\theirs}
		\addplot [name path = theirs, red, mark = triangle*, mark size=3pt] table [x expr=8+2*\coordindex, y index=0, y error index = 1] {\theirs};
		\addlegendentry{D{\"u}rr et al.~\cite{durr2021non}}
		\pgfplotstableread{CSV/quadratic/m=0.5n/exponential2_quadratic_theirs2__.csv}\theirstwo
		\pgfplotstablecreatecol[copy column from table={CSV/quadratic/m=0.5n/exponential2_quadratic_theirs2_std__.csv}{[index] 0}] {error} {\theirstwo}
		\addplot [name path = theirs, green!80!black, mark = square*] table [x expr=8+2*\coordindex, y index=0, y error index = 1] {\theirstwo};
		\addlegendentry{Du et al.~\cite{du2022improved}}
	\end{axis}\end{tikzpicture}
  \caption{$m=\lfloor0.5n\rfloor$}
\end{subfigure}\hfill
\begin{subfigure}[t]{0.32\textwidth}
  \begin{tikzpicture}[scale=0.6] \begin{axis}[
    xlabel = {$n$},
    ylabel = {Approximation Ratio},
    xmin=7, xmax=17,
    ymin=0, ymax=1,
		legend cell align=left,
		legend style={at={(0.68,0.4)}},
		y tick label style={
        /pgf/number format/.cd,
        fixed,
        fixed zerofill,
        precision=2,
        /tikz/.cd
    },
		error bars/y dir=both,
    error bars/y explicit]
		\pgfplotstableread{CSV/quadratic/m=n/exponential2_quadratic_ours3__.csv}\ours
		\pgfplotstablecreatecol[copy column from table={CSV/quadratic/m=n/exponential2_quadratic_ours3_std__.csv}{[index] 0}] {error} {\ours}
		\addplot [name path = our, blue, mark = *] table [x expr=8+2*\coordindex, y index=0, y error index=1] {\ours};
		\addlegendentry{Du~\cite{du2022lyapunov}  (our version)}
		\pgfplotstableread{CSV/quadratic/m=n/exponential2_quadratic_theirs__.csv}\theirs
		\pgfplotstablecreatecol[copy column from table={CSV/quadratic/m=n/exponential2_quadratic_theirs_std__.csv}{[index] 0}] {error} {\theirs}
		\addplot [name path = theirs, red, mark = triangle*, mark size=3pt] table [x expr=8+2*\coordindex, y index=0, y error index = 1] {\theirs};
		\addlegendentry{D{\"u}rr et al.~\cite{durr2021non}}
		\pgfplotstableread{CSV/quadratic/m=n/exponential2_quadratic_theirs2__.csv}\theirstwo
		\pgfplotstablecreatecol[copy column from table={CSV/quadratic/m=n/exponential2_quadratic_theirs2_std__.csv}{[index] 0}] {error} {\theirstwo}
		\addplot [name path = theirs, green!80!black, mark = square*] table [x expr=8+2*\coordindex, y index=0, y error index = 1] {\theirstwo};
		\addlegendentry{Du et al.~\cite{du2022improved}}
	\end{axis}\end{tikzpicture}
  \caption{$m=n$}
\end{subfigure}\hfill
\begin{subfigure}[t]{0.32\textwidth}
  \begin{tikzpicture}[scale=0.6] \begin{axis}[
    xlabel = {$n$},
    ylabel = {Approximation Ratio},
    xmin=7, xmax=17,
    ymin=0, ymax=1,
		legend cell align=left,
		legend style={at={(0.68,0.4)}},
		y tick label style={
        /pgf/number format/.cd,
        fixed,
        fixed zerofill,
        precision=2,
        /tikz/.cd
    },
		error bars/y dir=both,
    error bars/y explicit]
		\pgfplotstableread{CSV/quadratic/m=1.5n/exponential2_quadratic_ours3__.csv}\ours
		\pgfplotstablecreatecol[copy column from table={CSV/quadratic/m=1.5n/exponential2_quadratic_ours3_std__.csv}{[index] 0}] {error} {\ours}
		\addplot [name path = our, blue, mark = *] table [x expr=8+2*\coordindex, y index=0, y error index=1] {\ours};
		\addlegendentry{Du~\cite{du2022lyapunov}  (our version)}
		\pgfplotstableread{CSV/quadratic/m=1.5n/exponential2_quadratic_theirs__.csv}\theirs
		\pgfplotstablecreatecol[copy column from table={CSV/quadratic/m=1.5n/exponential2_quadratic_theirs_std__.csv}{[index] 0}] {error} {\theirs}
		\addplot [name path = theirs, red, mark = triangle*, mark size=3pt] table [x expr=8+2*\coordindex, y index=0, y error index = 1] {\theirs};
		\addlegendentry{D{\"u}rr et al.~\cite{durr2021non}}
		\pgfplotstableread{CSV/quadratic/m=1.5n/exponential2_quadratic_theirs2__.csv}\theirstwo
		\pgfplotstablecreatecol[copy column from table={CSV/quadratic/m=1.5n/exponential2_quadratic_theirs2_std__.csv}{[index] 0}] {error} {\theirstwo}
		\addplot [name path = theirs, green!80!black, mark = square*] table [x expr=8+2*\coordindex, y index=0, y error index = 1] {\theirstwo};
		\addlegendentry{Du et al.~\cite{du2022improved}}
	\end{axis}\end{tikzpicture}
  \caption{$m=\lfloor1.5n\rfloor$}
\end{subfigure}
\caption{Quadratic Programming with Exponential Distribution} \label{fig:quad12}
\end{figure}

\section{Conclusion}
In this work, we have considered the problem of maximizing a DR-submodular function over a general convex set in both the offline and the online (regret minimization) settings. For the online setting we provided the first polynomial time algorithm. Our algorithm matches the approximation guarantee of the only polynomial time algorithm known for the offline setting. Moreover, we presented a hardness result showing that this approximation guarantee is optimal for both settings. Finally, we have run experiments to study the empirical performance of both our algorithm and the (recently suggested) polynomial time offline algorithm. Our experiments show that both these algorithms outperform previous benchmarks. 

\appendix

\section{Proof of Lemma~\headerref{lem:norm_loss}} \label{sec:norm_less_proof}

In this section we prove Lemma~\ref{lem:norm_loss}, which we repeat here for convenience.
\lemNormLoss*
\begin{proof}
If $\norm{\vx}_\infty = 0$, then $\vx$ is the all zeros vector, and the lemma becomes trivial. Thus, we may assume in the rest of this proof that $\norm{\vx}_\infty > 0$. Let $\vz = \vx \vee \vy - \vy$. Then,
\begin{align} \label{eq:diff_z}
	F(\vx \vee \vy) - F(\vy)
	={} &
	\int_0^1 \left. \frac{dF(\vy + r \cdot \vz)}{dr}\right|_{r = t} dt
	=
	\int_0^1 \sum_{i = 1}^n \langle \vz, \nabla F(\vy + t \cdot \vz)\rangle dt\\\nonumber
	={} &
	\|\vx\|_\infty \cdot \int_0^{1/\|\vx\|_\infty} \sum_{i = 1}^n \langle \vz, \nabla F(\vy + \|\vx\|_\infty \cdot t' \cdot \vz)\rangle dt'\\\nonumber
	\geq{} &
	\|\vx\|_\infty \cdot \int_0^{1/\|\vx\|_\infty} \sum_{i = 1}^n \langle \vz, \nabla F(\vy + t' \cdot \vz)\rangle dt'
	\enspace,
\end{align}
where the last equality holds by changing the integration variable to $t' = t / \|\vx\|_\infty$, and the inequality follows from the DR-submodularity of $F$ because $\vy + t' \cdot \vz \in [0, 1]^n$. To see that the last inclusion holds, note that, for every $i \in [n]$, if $x_i \leq y_i$, then $y_i + t' \cdot z_i = y_i \leq 1$, and if $x_i \geq y_i$, then
\[
	y_i + t' \cdot z_i
	\leq
	y_i + \frac{z_i}{\|\vx\|_\infty}
	=
	y_i + \frac{x_i - y_i}{\|\vx\|_\infty}
	\leq
	\frac{x_i}{\|\vx\|_\infty}
	\leq
	1
	\enspace.
\]

Observe now that we also have
\begin{align*}
	\int_0^{1/\|\vx\|_\infty} \sum_{i = 1}^n \langle \vz, \nabla F(\vy + t' \cdot \vz)\rangle dt'
	={} &
	\int_0^{1/\|\vx\|_\infty} \left. \frac{dF(\vy + r \cdot \vz)}{dr}\right|_{r = t'} dt'\\
	={} &
	F\left(\vy + \frac{\vz}{\|\vx\|_\infty}\right) - F(\vy)
	\geq
	- F(\vy)
	\enspace,
\end{align*}
where the inequality follows from the non-negativity of $F$. The lemma now follows by plugging this inequality into Inequality~\eqref{eq:diff_z}, and rearranging.
\end{proof}
\toggletrue{proofsAppendix}
\section{Missing Proofs of Section~\headerref{sec:Hardness}} \label{app:missing_inapproximability}

\subsection{Proof of Claim~\headerref{clm:partial_derivatives_bound}} 

In this section we prove Claim~\ref{clm:partial_derivatives_bound}, which we repeat here for convenience.
\clmPartialDerivativesBound*
\begin{proof}
Recall that $\hat{F}_k$ and $\hat{G}_k$ are the functions $\hat{F}$ and $\hat{G}$ whose existence is guaranteed by Lemma~\ref{lem:original_continuous_versions} for $f = f_k$. The functions $\hat{F}$ and $\hat{G}$ are obtained in the proof of Lemma~\ref{lem:original_continuous_versions} in a series of steps involving multiple intermediate functions. The first of these functions are $F$ (the multilinear extension of $f$), the function $G(\vx) = F(\bar{\vx})$ and the function $H(\vx) = F(\vx) - G(\vx)$. The proof of Lemma~3.5 of~\cite{vondrak2013symmetry} shows that the absolute values of the second partial derivatives of these functions are bounded by $4M$, $4M$ and $8M$, respectively, where $M$ is the maximum value that the function $f$ can take. Since in our case $f$ is $f_k$, the maximum value it can take is $k$, and therefore, the absolute values of the second partial derivatives of all three functions can be upper bounded by $8k$.

The next function we consider is a function denoted by $\tilde{F}$ in the proof of Lemma~\ref{lem:original_continuous_versions}. The proof of Lemma~3.8 of~\cite{vondrak2013symmetry} shows that for every two elements $u, v \in \cN$, this function obeys almost everywhere the inequality
\[
	\left|\frac{\partial^2 \tilde{F}(\vx)}{\partial u \partial v} - \frac{\partial^2 F(\vx)}{\partial u \partial v} + \phi(D(\vx)) \cdot \frac{\partial^2 H(\vx)}{\partial u \partial v} \right|
	\leq
	512M |\cN|  \alpha
	=
	\frac{512\eps'}{2000 |\cN|^2}
	\leq
	1
	\enspace,
\]
where $\phi$ is a function defined by~\cite{vondrak2013symmetry} whose range is $[0, 1]$, $D(\vx)$ is another function defined by~\cite{vondrak2013symmetry} and $\alpha = \eps' / (2000 M |\cN|^3)$. Since $|\phi(D(\vx))| \leq 1$, the last inequality implies that the absolute values of the second partial derivatives of $\tilde{F}$ are upper bounded by $16k + 1$ because the second partial derivatives of $F$ and $H$ have absolute values bounded by $8k$.

The functions $\hat{F}$ and $\hat{G}$ are obtained from $\tilde{F}$ and $G$, respectively, by adding $256M|\cN|\alpha J(\vx) = \frac{256\eps'}{2000|\cN|^2} \cdot J(\vx)$, where
\[
	J(\vx) = |\cN|^2 + 3|\cN| \|\vx\|_1 - (\|\vx\|_1)^2
	\enspace.
\]
Since the second order partial derivatives of $J(\vx)$ are all $-2$, and the coefficient of $J(\vx)$ is $\frac{256\eps'}{2000|\cN|^2} \leq 1/2$, adding $\frac{256\eps'}{2000|\cN|^2} \cdot J(\vx)$ cannot increase the absolute value of the second order partial derivatives by more than $1$.
\end{proof}

\subsection{Proof of Lemma~\headerref{lem:scrambled_objectives}} 

In this section we prove Lemma~\ref{lem:scrambled_objectives}, which we repeat here for convenience.
\lemScrambledObjectives*
\begin{proof}
We prove the lemma below for $\bar{F}_{k, \vsigma}$. The proof for $\bar{G}_{k, \vsigma}$ is analogous. The non-negativity of $\bar{F}_{k, \vsigma}$ follows immediately from their definitions and the non-negativity of $\hat{F}_k$ and $\hat{G}_k$. Furthermore, by the chain-rule, for every pair of $i \in [k]$ and $j \in [\ell]$, we have
\begin{equation} \label{eq:first_order_formula}
	\frac{\partial \bar{F}_{k, \vsigma}(\vx)}{\partial \vx_{a_{i, j}}}
	=
	\frac{1}{\ell} \cdot \left.\frac{\partial \hat{F}_k(\vz)}{\partial \vz_{a_{\sigma_j(i)}}}\right|_{\vz = \vx^{(\vsigma)}}
	\qquad
	\text{and}
	\qquad
	\frac{\partial \bar{F}_{k, \vsigma}(\vx)}{\partial \vx_{b_{i, j}}}
	=
	\frac{1}{\ell} \cdot \left.\frac{\partial \hat{F}_k(\vz)}{\partial \vz_{b_{\sigma_j(i)}}}\right|_{\vz = \vx^{(\vsigma)}}
	\enspace.
\end{equation}
Thus, the continuous differentiability of $\hat{F}_k$ implies that $\bar{F}_{k, \vsigma}$ is also continuously differentiable.

Taking the derivative of the last equalities with respect to $a_{i', b'}$ for another pair $i' \in [k], j' \in [\ell]$, the chain-rule gives us the equalities
\[
	\frac{\partial^2 \bar{F}_{k, \vsigma}(\vx)}{\partial \vx_{a_{i', j'}} \partial \vx_{a_{i, j}}}
	=
	\frac{1}{\ell^2} \cdot \left. \frac{\partial^2 \hat{F}_k(\vz)}{\partial \vz_{a_{\sigma_{j'}(i')}} \partial \vz_{a_{\sigma_j(i)}}} \right|_{\vz = \vx^{(\vsigma)}}
\]
and
\[
	\frac{\partial^2 \bar{F}_{k, \vsigma}(\vx)}{\partial \vx_{a_{i', j'}} \partial \vx_{b_{i, j}}}
	=
	\frac{1}{\ell^2} \cdot \left. \frac{\partial^2 \hat{F}_k(\vz)}{\partial \vz_{a_{\sigma_{j'}(i')}} \partial \vz_{b_{\sigma_j(i)}}} \right|_{\vz = \vx^{(\vsigma)}}
	\enspace.
\]
Since similar equalities hold also when we take the derivative of the equalities in Equation~\eqref{eq:first_order_formula} with respect to $b_{i', j'}$, the DR-submodularity of $\hat{F}_k$ implies the same property for $\bar{F}_{k, \vsigma}$.

It remains to bound the smoothness of $\bar{F}_{k, \vsigma}$. For every two vectors $\vx, \vy \in [0, 1]^{\cM_k}$, we have by Equation~\eqref{eq:first_order_formula} that
{\allowdisplaybreaks\begin{align*}
	&\|\nabla \bar{F}_{k, \vsigma}(\vx) - \nabla \bar{F}_{k, \vsigma}(\vy)\|_2^2
	=
	\sum_{i = 1}^k \sum_{j = 1}^\ell \left(\frac{1}{\ell} \cdot \left.\frac{\partial \hat{F}_k(\vz)}{\partial \vz_{a_{\vsigma_j(i)}}}\right|_{\vz = \vx^{(\vsigma)}} \mspace{-9mu}- \frac{1}{\ell} \cdot \left.\frac{\partial \hat{F}_k(\vz)}{\partial \vz_{a_{\sigma_j(i)}}}\right|_{\vz = \vy^{(\vsigma)}}\right)^{\mspace{-9mu}2} \\&+ \sum_{i = 1}^k \sum_{j = 1}^\ell \left(\frac{1}{\ell} \cdot \left.\frac{\partial \hat{F}_k(\vz)}{\partial \vz_{b_{\sigma_j(i)}}}\right|_{\vz = \vx^{(\vsigma)}} \mspace{-9mu}- \frac{1}{\ell} \cdot \left.\frac{\partial \hat{F}_k(\vz)}{\partial \vz_{b_{\sigma_j(i)}}}\right|_{\vz = \vy^{(\vsigma)}}\right)^{\mspace{-9mu}2}
	=
	\frac{1}{\ell} \cdot \sum_{i = 1}^k \left(\left.\frac{\partial \hat{F}_k(\vz)}{\partial \vz_{a_i}}\right|_{\vz = \vx^{(\vsigma)}} \mspace{-9mu}- \left.\frac{\partial \hat{F}_k(\vz)}{\partial \vz_{a_i}}\right|_{\vz = \vy^{(\vsigma)}}\right)^{\mspace{-9mu}2} \\&+\frac{1}{\ell} \cdot \sum_{i = 1}^k \left(\left.\frac{\partial \hat{F}_k(\vz)}{\partial \vz_{b_i}}\right|_{\vz = \vx^{(\vsigma)}} \mspace{-9mu}- \left.\frac{\partial \hat{F}_k(\vz)}{\partial \vz_{b_i}}\right|_{\vz = \vy^{(\vsigma)}}\right)^{\mspace{-9mu}2}
	=
	\frac{\|\nabla \hat{F}_k(\vx^{(\vsigma)}) - \nabla \hat{F}_k(\vy^{(\vsigma)})\|_2^2}{\ell}
	\leq
	\frac{\beta^2 \|\vx^{(\vsigma)} - \vy^{(\vsigma)}\|_2^2}{\ell}\\
	={} &
	\frac{\beta^2 \cdot \sum_{i = 1}^k [(\sum_{j = 1}^\ell \vx_{a_{\sigma_j(i), j}} - \sum_{j = 1}^\ell \vy_{a_{\sigma_j(i), j}})^2 + (\sum_{j = 1}^\ell \vx_{b_{\sigma_j(i), j}} - \sum_{j = 1}^\ell \vy_{b_{\sigma_j(i), j}})^2]}{\ell^3}
	\enspace,
\end{align*}}%
where $\beta$ is the smoothness parameter of $\hat{F}_k$, and the second equality holds since the entries of $\vsigma$ are permutations. Using Sedrakyan's inequality (or Cauchy–Schwarz inequality), we also have, for every $i \in [k]$,
\[
	\left(\sum_{j = 1}^\ell \vx_{a_{\sigma_j(i), j}} - \sum_{j = 1}^\ell \vy_{a_{\sigma_j(i), j}}\right)^{\mspace{-9mu}2}
	\leq
	\ell \cdot \sum_{j = 1}^\ell (\vx_{a_{\sigma_j(i), j}} - \sum_{j = 1}^\ell \vy_{a_{\sigma_j(i), j}})^2
\]
and
\[
	\left(\sum_{j = 1}^\ell \vx_{b_{\sigma_j(i), j}} - \sum_{j = 1}^\ell \vy_{b_{\sigma_j(i), j}}\right)^{\mspace{-9mu}2}
	\leq
	\ell \cdot \sum_{j = 1}^\ell (\vx_{b_{\sigma_j(i), j}} - \sum_{j = 1}^\ell \vy_{b_{\sigma_j(i), j}})^2
	\enspace.
\]

Combining all the above inequalities yields
\begin{align*}
	\|\nabla \bar{F}_{k, \vsigma}(\vx) - \nabla \bar{F}_{k, \vsigma}(\vy)\|_2
	\leq{} &
	\frac{\beta \cdot \sqrt{\sum_{i = 1}^k [\sum_{j = 1}^\ell (\vx_{a_{\sigma_j(i), j}} - \vy_{a_{\sigma_j(i), j}})^2 + \sum_{j = 1}^\ell (\vx_{b_{\sigma_j(i), j}} - \vy_{b_{\sigma_j(i), j}})^2}]}{\ell}\\
	={} &
	\frac{\beta \cdot \|\vx - \vy\|_2}{\ell}
	\enspace,
\end{align*}
which completes the proof of the lemma since the smoothness parameter $\beta$ of $\hat{F}_k$ is polynomial in $k$.
\end{proof}

\subsection{Proof of Claim~\headerref{clm:x_sigma_P}} 

In this section we prove Claim~\ref{clm:x_sigma_P}, which we repeat here for convenience.
\clmXSigmaP*
\begin{proof}
By the definition of $\cK_{h, k, \ell}$, the membership of $\vx$ in $\cK_{h, k, \ell}$ implies that for every $j \in [\ell]$ we must have $\vx^{(j)} \in \cP_{h, k}$. Thus, $\vx^{(j)}$ can be represented by a convex combination of the vectors $\vu, \vv^{(1)}, \vv^{(2)}, \dotsc, \vv^{(k)}$ as follows.
\[
	\vx^{(j)}
	=
	\sum_{i = 1}^k c_{i,j} \cdot \vv^{(j)} + d_j \cdot \vu
	\enspace.
\]
Similarly to the proof of Lemma~\ref{lem:max_values_preserved}, let us define $\sigma^{-1}_j(\vx^{(j)})$ to be the following vector. For every $i \in [k]$,
\[
	(\sigma_j^{-1}(\vx^{(j)}))_{a_i}
	=
	\vx^{(j)}_{a_{\sigma(i)}}
	\qquad
	\text{and}
	\qquad
	(\sigma_j^{-1}(\vx^{(j)}))_{b_i}
	=
	\vx^{(j)}_{b_{\sigma(i)}}
	\enspace.
\]

Using the above notation, we get
\begin{align*}
	\vx^{(\vsigma)}
	={} &
	\tfrac{1}{\ell} \sum_{j = 1}^\ell \sigma^{-1}_j(\vx^{(j)})
	=
	\tfrac{1}{\ell} \sum_{j = 1}^\ell \sigma^{-1}_j\left(\sum_{i = 1}^k c_{i,j} \cdot \vv^{(i)} + d_j \cdot \vu\right)\\
	={} &
	\tfrac{1}{\ell} \sum_{j = 1}^\ell \left[\sum_{i = 1}^k c_{i,j} \cdot \sigma^{-1}_j(\vv^{(i)}) + d_j \cdot \sigma^{-1}_j(\vu)\right]
	=
	\tfrac{1}{\ell} \sum_{j = 1}^\ell \left[\sum_{i = 1}^k c_{i,j} \cdot \vv^{(\sigma^{-1}_j(i))} + d_j \cdot \vu\right]\\
	={} &
	\sum_{i = 1}^k \frac{\sum_{j = 1}^\ell c_{\sigma_j(i), j}}{\ell} \cdot \vv^{(i)} + \frac{\sum_{j = 1}^\ell d_j}{\ell} \cdot \vu
	\enspace.
\end{align*}
The last step in the proof of the claim is to show that the rightmost side is a convex combination, which implies $\vx^{(\vsigma)} \in \cP_{h, k}$ by the definition of $\cP_{h, k}$. To see that this is indeed the case, we observe that the coefficients of all the vectors in this rightmost side are averages of non-negative numbers, and therefore, are non-negative as well. Furthermore,
\[
	\sum_{i = 1}^k \frac{\sum_{j = 1}^\ell c_{\sigma_j(i), j}}{\ell} + \frac{\sum_{j = 1}^\ell d_j}{\ell}
	=
	\tfrac{1}{\ell} \sum_{j = 1}^\ell \left[\sum_{i = 1}^k c_{\sigma_j(i), j} + d_j\right]
	=
	\tfrac{1}{\ell} \sum_{j = 1}^\ell \left[\sum_{i = 1}^k c_{i, j} + d_j\right]
	=
	\tfrac{1}{\ell} \sum_{j = 1}^\ell 1
	=
	1
	\enspace,
\]
where the second equality holds since $\sigma_j$ is a permutation for every $j \in \ell$.
\end{proof}

\bibliographystyle{plain}
\bibliography{DR-Max}

\end{document}